\newtheorem{theorem}{Theorem}
\newtheorem{definition}[theorem]{Definition}
\newtheorem{lemma}[theorem]{Lemma}
\newtheorem{corollary}[theorem]{Corollary}
\newtheorem{example}[theorem]{Example}
  \def\AtanTwo(#1,#2){atan2({#1},{#2})}%
  \def\AtanTwo(#1,#2){atan2({#2},{#1})}%
\newcommand{\convexpath}[2]{
  [   
  create hullcoords/.code={
    \global\edef\namelist{#1}
    \foreach [count=\counter] \nodename in \namelist {
      \global\edef\numberofnodes{\counter}
      \coordinate (hullcoord\counter) at (\nodename);
    }
    \coordinate (hullcoord0) at (hullcoord\numberofnodes);
    \pgfmathtruncatemacro\lastnumber{\numberofnodes+1}
    \coordinate (hullcoord\lastnumber) at (hullcoord1);
  },
  create hullcoords
  ]
  ($(hullcoord1)!#2!-90:(hullcoord0)$)
  \foreach [
  evaluate=\currentnode as \previousnode using \currentnode-1,
  evaluate=\currentnode as \nextnode using \currentnode+1
  ] \currentnode in {1,...,\numberofnodes} {
    let \p1 = ($(hullcoord\currentnode) - (hullcoord\previousnode)$),
    \n1 = {\AtanTwo(\y1,\x1) + 90},
    \p2 = ($(hullcoord\nextnode) - (hullcoord\currentnode)$),
    \n2 = {\AtanTwo(\y2,\x2) + 90},
    \n{delta} = {Mod(\n2-\n1,360) - 360}
    in 
    {arc [start angle=\n1, delta angle=\n{delta}, radius=#2]}
    -- ($(hullcoord\nextnode)!#2!-90:(hullcoord\currentnode)$) 
  }
}
\newcommand{\ns}{{\mathbb N}} 
\newcommand{\zs}{{\mathbb Z}} 
\newcommand{\abs}[1]{\mathsf{abs}({#1})}
\newcommand{\gaps}[1]{\mathrm{gaps}({#1})}
\newcommand{\mc}{\mathcal}
\newcommand{\bb}{\mathbb}
\newcommand{\val}{\mathrm{val}}
\newcommand{\mcT}{\mathcal T(\mathcal F)}
\newcommand{\rk}{\mathrm{rank}}
\renewcommand{\P}{{\sf P}\xspace}
\newcommand{\NP}{{\sf NP}\xspace}
\newcommand{\PSPACE}{{\sf PSPACE}\xspace}
\newcommand{\LogCFL}{{\sf LogCFL}\xspace}
\begin{document}

\title{Tree Compression Using String Grammars\thanks{The third and fourth author are supported by the DFG-project LO 748/10-1 (QUANT-KOMP).}}

\author{Moses Ganardi, Danny Hucke, Markus Lohrey and Eric Noeth}
\affil{Universit\"at Siegen, Germany\\
\texttt{\{ganardi,hucke,lohrey,eric.noeth\}@eti.uni-siegen.de}}


\date{} 

\maketitle

\begin{abstract}
We study the compressed representation of a ranked tree by a (string) straight-line program (SLP) for its 
preorder traversal, and compare it with the
well-studied representation by straight-line context free tree grammars (which are also known as tree straight-line programs or TSLPs).
Although SLPs turn out to be exponentially more succinct than TSLPs, we show that
many simple tree queries can still be performed efficiently on SLPs, such as computing
the height and Horton-Strahler number of a tree, tree navigation, or evaluation of Boolean expressions.
Other problems on tree traversals turn out to be intractable, e.g. pattern matching and
evaluation of tree automata.
\end{abstract}
 
\section{Introduction}

{\em Grammar-based compression} has become an active field in string compression during the past 20 years.
The idea is to represent a given string $s$ by a small context-free grammar that generates only $s$;
such a grammar is also called a {\em straight-line program} (SLP). For instance, the word $(ab)^{1024}$ can be represented
by the SLP with the productions $A_0 \to ab$ and $A_i \to A_{i-1} A_{i-1}$ for $1 \leq i \leq 10$ ($A_{10}$ is the start 
symbol). The size of this SLP (the size of an SLP is usually defined as the total length of all right-hand sides of the productions)
is much smaller than the length of the string $(ab)^{1024}$. In general, an SLP of size $n$ 
can produce a string of length $2^{\Omega(n)}$. Hence, an SLP can be seen indeed as a succinct representation
of the generated string. The goal of grammar-based string compression is to construct from a given input
string $s$ a small SLP that produces $s$. Several algorithms for this have been proposed and analyzed. 
Prominent grammar-based string compressors are for instance 
{\sf LZ78}, {\sf RePair}, and {\sf BISECTION}, see \cite{CLLLPPSS05} for more details.
The theoretically  best known  polynomial time 
grammar-based compressors  \cite{CLLLPPSS05,Jez13approx,Ryt03,Sakamoto05}  approximate the size of a smallest SLP up to a factor 
$\mathcal{O}(\log (n/g))$, where $g$ is the size of a smallest
SLP for the input string. 

Motivated by applications where large tree structured data occur, like XML processing,
gram\-mar-based compression has been extended to trees \cite{MLMN13,BuLoMa07,JezLo14approx,LohreyMM13}, see \cite{Loh15} for  a survey. 
Unless otherwise specified, a tree in this paper is always a rooted ordered tree over a ranked alphabet, 
i.e., every node is labelled with a symbol and the rank of this symbol is equal to the number of children of the node.
This class of trees occurs in many different contexts like for instance term rewriting, expression evaluation, tree automata, and
functional programming. A tree over a ranked alphabet is uniquely represented by its preorder traversal string.
For instance, the preorder traversal of the tree $f(g(a),f(a,b))$ is the string $fgafab$.
It is now a natural idea to apply a string compressor to this preorder traversal.
In this paper we study the compression of ranked trees by SLPs for their preorder traversals.
This approach is very similar to \cite{BLRSSW15}, where unranked unlabelled trees are compressed by SLPs for their
balanced parenthesis representations. In \cite{NaOP14} this idea is used together with the grammar-based compressor
RePair to get a new compressed suffix tree implementation.

In Section~\ref{section:slpvstslp} we compare the size of SLPs for preorder traversals with two other grammar-based
compressed tree representations: the above mentioned SLPs for balanced parenthesis representations from 
\cite{BLRSSW15} and (ii)  tree straight-line programs (TSLPs) \cite{BuLoMa07,HuckeLN14,JezLo14approx,LohreyMM13}. The latter 
directly generalize string SLPs to trees using context-free tree grammars
that produce a single tree, see \cite{Loh15} for a survey.  
TSLPs generalize dags (directed acyclic graphs), which are widely used as a compact tree representation. Whereas
dags only allow to share repeated subtrees, TSLPs can also share repeated internal tree patterns.
In  \cite{HuckeLN14arxiv} it is shown that every tree of size $n$ over a fixed ranked alphabet can be produced
by a TSLP of size $\mc O(\frac{n}{\log n})$,
which is worst-case optimal. A grammar-based tree compressor based on TSLPs with 
an  approximation ratio of $\mc O(\log n)$ was presented in \cite{JezLo14approx}.
In \cite{BuLoMa07}, it was shown that from a given
TSLP $\bb A$ of size $m$ for a tree $t$ one can efficiently construct an SLP of size $\mc O(m \cdot r)$ for the preorder traversal of $t$, 
where $r$ is the maximal rank occurring in $t$
(i.e., the maximal number of children of a node).
Hence, a smallest SLP for the traversal of $t$ cannot be much larger than a smallest TSLP for $t$. 
Our first main result (Theorem~\ref{theorem:SmallestTravGrammar}) shows that SLPs can be exponentially more succinct than TSLPs: 
We construct a family of binary trees $t_n$ ($n \geq 0$) such that the size of a smallest SLP for the traversal of 
$t_n$ is polynomial in $n$ but the size of a smallest TSLP for $t_n$ is 
$\Omega(2^{n/2})$.  We also match this lower bound by an upper bound: 
Given an SLP $\bb A$ of size $m$ for the traversal of a tree $t$ of height $h$ and maximal rank $r$,
one can efficiently construct a TSLP for $t$ of size $\mc O(m \cdot h \cdot r)$ (Theorem~\ref{theorem:TraversalToTree}).
Finally, we construct a family of binary trees $t_n$ ($n \geq 0$) such that the size of a smallest SLP for the 
preorder traversal of $t_n$ is polynomial in $n$ but the size of a smallest SLP for the balanced parenthesis representation is
$\Omega(2^{n/2})$ (Theorem~\ref{theorem:BP}). Hence, SLPs for preorder traversals can be exponentially more succinct
than SLPs for balanced parenthesis representations.  It remains open, whether the opposite behavior is possible as well.

We also study algorithmic problems for trees that are encoded by SLPs. 
We extend some of the results from \cite{BLRSSW15} on querying SLP-compressed
balanced parenthesis representations to our context. Specifically, we show that
after a linear time preprocessing we can navigate (i.e., move to the parent node and the $k^{\text{th}}$ child),
compute lowest common ancestors and subtree sizes in time $\mc O(\log N)$, where $N$ is the size of the 
tree represented by the SLP (Theorem~\ref{thm:navi}). For a couple of other problems (computation of the height and depth of a node,
computation of the Horton-Strahler number, 
and evaluation of Boolean expressions) we provide at least polynomial time algorithms for the case that the input
tree is given by an SLP for the preorder traversal.
On the other hand, there exist problems that are polynomial time solvable for TSLP-compressed
trees but difficult for SLP-compressed trees: Examples for such problems are  pattern matching, evaluation of max-plus expressions, and
membership for tree automata. 
Looking at tree automata is also interesting when compared with the situation for 
explicitly given (i.e., uncompressed) preorder traversals. For these, evaluating Boolean expressions (which is the membership
problem for a particular tree automaton) is $\mathsf{NC}^1$-complete by a famous result of Buss \cite{Bus87}, and the $\mathsf{NC}^1$ upper bound was generalized 
to every fixed tree automaton  \cite{loh01}. If we compress the preorder traversal by an SLP, the problem is still solvable in polynomial time for 
Boolean expressions (Theorem~\ref{theorem:eval_bool}), but there is a fixed tree automaton where the evaluation problem becomes $\PSPACE$-complete 
(Theorem~\ref{thm-TA}).

\paragraph{\bf Related work on tree compression.}

There are also  tree compressors based on other grammar formalisms.
In \cite{Akutsu10} so called elementary ordered tree grammars are used, and a polynomial time compressor
with an approximation ratio of $\mc O(n^{5/6})$ is presented.
Also the {\em top dags} from \cite{BilleGLW13} can be
seen as a variation of TSLPs for unranked trees. Recently, in \cite{Hubschle-Schneider15} it was shown that for every tree of size $n$
with $\sigma$ many node labels,
the top dag has size $\mc O(\frac{n \cdot \log \log_\sigma n}{\log_\sigma n})$, which improved the bound from \cite{BilleGLW13}.
An extension of TSLPs to higher order tree grammars was 
proposed in \cite{KobayashiMS12}.

Another class of tree compressors use succinct data structures for trees. Here, the goal
is to represent a tree in a number of bits that asymptotically matches the information
theoretic lower bound, and at the same time allows efficient querying (ideally in time $\mc O(1)$) of the data structure.
For unlabelled unranked trees of size $n$ there exist representations with $2n+o(n)$ bits that support navigation and some other tree queries
in time $\mc O(1)$ \cite{BenoitDMRRR05,Jacobson89,JanssonSS12,MunroR01}. This result has been extended to labelled trees, where $(\log \sigma) \cdot n + 2n + o(n)$ 
bits suffice when $\sigma$ is the number of node labels \cite{FerraginaLMM09}.

\section{Preliminaries}

Let $\Sigma$ be a finite alphabet. 
For a string $w = a_1 \cdots a_n \in \Sigma^*$ we define $|w|=n$, $w[i] = a_i$ and $w[i : j] = a_i \cdots a_j$ 
where $w[i : j] = \varepsilon$, if $i > j$. Let $w[:i] = w[1:i]$ and $w[i:] = w[i:|w|]$.
With $\mathrm{rev}(w) = a_n \cdots a_1$ we denote $w$ reversed.
Given two strings $u, v \in \Sigma^*$, the {\em convolution}
$u \otimes v \in (\Sigma \times \Sigma)^*$ 
is the string of length $\min\{|u|,|v|\}$ defined by $(u \otimes v)[i] = (u[i],v[i])$ for $1 \le i \le \min\{|u|,|v|\}$.

\subsection{Complexity classes}

  We assume familiarity with the basic classes from complexity theory, in particular 
  \P, \NP and \PSPACE.  The following definitions are only needed in Section~\ref{sec-arithmetic-eval}.
  The counting class $\# \P$ contains all functions $f : \Sigma^* \to \ns$ for which there exists a nondeterministic
  polynomial time machine $M$ such that for every $x \in \Sigma^*$, $f(x)$ is the number of accepting computation
  paths of $M$ on input $x$. The class $\mathsf{PP}$ (probabilistic polynomial time) contains all problems $A$ for which there exists
  a nondeterministic
  polynomial time machine $M$ such that for every input $x$: $x \in A$ if and only if more than half of all computation
  paths of $M$ on input $x$ are accepting.
  By a famous result of Toda \cite{To91}, the class $\P^{\mathsf{PP}} = \P^{\#\P}$ (i.e., the class of all languages
  that can be decided in deterministic polynomial time with the help of an oracle from $\mathsf{PP}$  contains
  the whole polynomial time hierarchy. Hence, if a problem is $\mathsf{PP}$-hard, then this can be seen
  as a strong indication that the problem does not belong to the polynomial time hierarchy (otherwise the 
  polynomial time hierarchy would collapse).
  
   The levels of the {\em counting hierarchy} $\mathsf{C}^p_i$ ($i \geq 0$) are inductively
   defined as follows: $\mathsf{C}^p_0 = \P$ and  $\mathsf{C}^p_{i+1} = \mathsf{PP}^{\mathsf{C}^p_{i}}$
   (the set of languages accepted by a $\mathsf{PP}$-machine as above with an oracle from $\mathsf{C}^p_{i}$) 
   for all $i \geq 0$. Let $\mathsf{CH} = \bigcup_{i\geq 0} \mathsf{C}^p_{i}$ be the counting hierarchy. It is not 
   difficult to show that $\mathsf{CH}  \subseteq \PSPACE$, and most complexity theorists conjecture that
   $\mathsf{CH}  \subsetneq \PSPACE$. Hence, if a problem belongs to the counting hierarchy, then this can be seen
  as an indication that the problem is probably not $\PSPACE$-complete. 
  The counting hierarchy can be also seen as an exponentially blown-up version of the circuit complexity class
  $\mathsf{DLOGTIME}$-uniform $\mathsf{TC}^0$. This is the class of all languages that can be decided
  with a constant-depth polynomial-size circuit family of unbounded fan-in that in addition to normal Boolean
  gates may also use threshold gates. $\mathsf{DLOGTIME}$-uniformity means that one can compute in time
  $\mc O(\log n)$ (i) the type of a given gate of  the $n^{\text{th}}$ circuit, and (ii)
  whether two given gates of the $n^{\text{th}}$ circuit are connected by a wire. Here, gates of the 
  $n^{\text{th}}$ circuit  are encoded by bit string of length $\mc O(\log n)$.
   More details on the counting hierarchy (resp., circuit complexity) can be found in \cite{DBLP:journals/eatcs/AllenderW90} (resp.,
   \cite{Vol99}).

\subsection{Trees}

	A \emph{ranked alphabet} $\mc F$ is a finite set of symbols where every symbol $f \in \mc F$ 
	has a rank $\rk(f) \in \ns$. 
 	We assume that $\mc F$ contains at least one symbol of rank zero. 	
	By $\mc F_n$ we denote the symbols of $\mc F $ of rank $n$. 
	Later we will also allow
	ranked alphabets, where $\mc F_0$ is infinite.
	For the purpose of this paper,
	it is convenient to define trees as particular strings over the alphabet $\mc F$ (namely as preorder traversals).
	The set $\mcT$ of all \emph{trees} over $\mc F$ is the subset of $\mc F^*$ defined inductively
	as follows: If $f \in \mc F_n$ with $n \geq 0$ and $t_1, \ldots, t_n \in \mcT$, then also
	$f  t_1 \cdots t_n \in \mcT$.
	
	We call a string $s \in {\mc F}^*$ a \emph{fragment}
	if there exists a tree $t \in \mc T (\mc F)$ and a non-empty string $x\in{\mc F}^+$ such that $sx=t$.
	Note that the empty string $\varepsilon$ is a fragment.
	Intuitively, a fragment is a tree with gaps.
	The number of gaps of a fragment $s \in {\mc F}^+ $ is formally defined as
	the number $n$ of trees $t_1, \dots, t_n \in \mcT$ such that
	$s t_1 \cdots t_n \in \mcT$, and is denoted by $\gaps{s}$.
	The number of gaps of the empty string is defined as $0$.
	The following lemma states that $\gaps{s}$ is indeed well-defined.
		
	\begin{lemma}\label{lemma:traversal-folklore}
	The following statements hold:
	 \begin{itemize}
	  \item The set $\mcT$ is prefix-free, i.e. $t \in \mcT$ and $tv \in \mcT$ imply $v = \varepsilon$.
	  \item If $t \in \mcT$, then every suffix of $t$ factors uniquely into
        a concatenation of strings from $\mcT$.
      \item For every fragment $s \in {\mc F}^+$ there is a unique $n \geq 1$
      such that
     $\{ x \in \mc F^* \mid sx \in \mcT \} = (\mcT)^n$.
	 \end{itemize}
	\end{lemma}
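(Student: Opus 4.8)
The plan is to reduce all three statements to a single arithmetic invariant. Define the additive map $\delta \colon \mc F^* \to \zs$ by $\delta(f) = \rk(f) - 1$ for $f \in \mc F$, extended so that $\delta(uv) = \delta(u) + \delta(v)$. Intuitively, reading a symbol of rank $r$ closes one pending subtree and opens $r$ new ones, changing the number of still-expected trees by $r - 1$. I would first prove by induction on $|t|$ that every tree $t \in \mcT$ satisfies $\delta(t) = -1$ while every \emph{proper} prefix $p$ of $t$ satisfies $\delta(p) \ge 0$; the induction step splits a tree $f t_1 \cdots t_r$ according to which subtree a given proper prefix reaches into, using that $\delta(f t_1 \cdots t_j) = (r-1) - j$.

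The technical heart is the converse, which I would state in the slightly more general form needed below: for $k \ge 1$, a string $w$ lies in $(\mcT)^k$ if and only if $\delta(w) = -k$ and $\delta(p) > -k$ for every proper prefix $p$ of $w$. The forward direction follows by concatenating the previous paragraph over the $k$ factors. For the backward direction I would argue by induction on $k$ via a \emph{first return} decomposition: since $\delta$ starts at $0$, ends at $-k$, and decreases by at most $1$ per symbol, there is a shortest nonempty prefix $p_0$ with $\delta(p_0) = -1$; all proper prefixes of $p_0$ then have $\delta \ge 0$, so the case $k=1$ identifies $p_0$ as a tree, and the remaining suffix satisfies the hypotheses for $k-1$. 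The same first-return argument, applied inductively, also yields uniqueness of the factorization: any tree that is a prefix of $w$ reaches $-1$ for the first time exactly at its right end, hence equals $p_0$, so the first factor is forced and we recurse on the remaining suffix.

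With this machinery the three statements are short. For prefix-freeness, if $t$ and $tv$ are both trees then $\delta(v) = 0$; but if $v \neq \varepsilon$ then $t$ is a proper prefix of the tree $tv$, giving $\delta(t) \ge 0$ and contradicting $\delta(t) = -1$. For the suffix statement, write $t = us$ with $s$ a nonempty suffix; as $u$ is a proper prefix of the tree $t$ we get $\delta(u) = m \ge 0$, whence $\delta(s) = -(m+1)$, and every proper prefix $p$ of $s$ corresponds to a proper prefix $up$ of $t$, so $\delta(p) \ge -m > -(m+1)$; the characterization then places $s$ in $(\mcT)^{m+1}$ with a unique factorization. For the last statement, a fragment $s$ is a proper prefix of some tree, so $\delta(s) = n - 1$ with $n \ge 1$; one inclusion checks that appending any $x \in (\mcT)^n$ yields $\delta(sx) = -1$ with all proper prefixes nonnegative (using the prefix bound of the characterization applied to $x$), hence a tree, while the reverse inclusion uses the suffix statement to write $x \in (\mcT)^k$ and reads off $k = n$ from $\delta(sx) = -1$. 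Uniqueness of $n$ is immediate since the sets $(\mcT)^n$ are nonempty and pairwise disjoint, being distinguished by the common value $-n$ of $\delta$ on their elements; this also shows $\gaps{s}$ is well-defined.

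The step I expect to be the main obstacle is the backward direction of the general characterization, specifically making the first-return-to-$-1$ decomposition rigorous and simultaneously extracting uniqueness from it. Everything else is bookkeeping with the additive invariant $\delta$, once the observation that $\delta$ never drops by more than $1$ per step is in place.
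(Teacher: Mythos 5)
The paper itself offers no proof of this lemma: it is labelled and treated as folklore, so there is no ``paper proof'' to compare against, and your argument has to be judged on its own merits. Your overall strategy is the standard one for this kind of statement: the additive invariant $\delta(f)=\rk(f)-1$, the fact that every $t\in\mcT$ has $\delta(t)=-1$ with all proper prefixes nonnegative, and the characterization of $(\mcT)^k$ by $\delta(w)=-k$ together with $\delta(p)>-k$ for proper prefixes $p$. Granting that characterization, your derivations of all three bullets (prefix-freeness, unique factorization of suffixes, and $\{x\mid sx\in\mcT\}=(\mcT)^n$ with $n-1=\delta(s)$), as well as the uniqueness-via-first-return argument, are correct.

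However, there is a genuine gap exactly at the technical heart, and it is not where you located it. Your induction on $k$ for the backward direction never establishes its base case $k=1$. For $k=1$ the hypothesis forces every proper prefix of $w$ to have $\delta\ge 0$, so the shortest nonempty prefix $p_0$ with $\delta(p_0)=-1$ is $w$ itself; at that point ``the case $k=1$ identifies $p_0$ as a tree'' is an application of the statement being proved to itself. The first-return decomposition genuinely helps only when $k\ge 2$; it contributes nothing for $k=1$, and the difficulty is not in making it rigorous. The missing idea is the actual inductive engine for $k=1$: write $w=fw'$ with $f\in\mc F_r$. If $r=0$ then $w'=\varepsilon$ (otherwise $f$ is a proper prefix with $\delta(f)=-1<0$), so $w=f\in\mcT$. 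If $r\ge 1$ then $\delta(w')=-r$ and every proper prefix $p$ of $w'$ satisfies $\delta(p)\ge -(r-1)>-r$, so you need the claim for $k=r$ applied to the strictly shorter string $w'$ in order to conclude $w'\in(\mcT)^r$ and hence $w=fw'\in\mcT$. This forces the induction to be a single (strong) induction on $|w|$ with $k$ quantified inside it, rather than an induction on $k$: the case $k=1$ peels off the root symbol and invokes the statement for $k=\rk(f)$ on a shorter string, and the case $k\ge 2$ uses your first-return split, both steps recursing on shorter strings. With that restructuring, everything else in your proposal goes through unchanged.
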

	Since $\mcT $ is prefix-free we immediately get:
	
 \begin{lemma}\label{lemma:substring_traversal_string}
 For every $w \in {\mc F}^*$ there exist unique $n\ge 0$, $t_1,\dots,t_{n} \in \mcT$ and a unique fragment $s$ such that
 $w = t_1 \cdots t_{n} s$.
 \end{lemma}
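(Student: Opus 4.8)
The plan is to prove existence and uniqueness of the factorization $w = t_1 \cdots t_n s$ by reducing everything to the prefix-freeness of $\mcT$ established in Lemma~\ref{lemma:traversal-folklore}. First I would dispose of the trivial case $w = \varepsilon$, where we take $n = 0$ and $s = \varepsilon$ (note that $\varepsilon$ is a fragment by the definition given in the excerpt). For the main case $w \in \mc F^+$, the idea is to peel off maximal trees from the left by a greedy/inductive procedure: scan the prefixes of $w$ and look for the shortest nonempty prefix that is a complete tree, split it off, and recurse on the remaining suffix. Because $\mc F$ contains at least one symbol of rank zero and each symbol $f$ of rank $k$ ``demands'' exactly $k$ subtrees, this greedy peeling is well-defined and terminates.

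For existence, I would formalize the peeling as an induction on $|w|$. Given $w \ne \varepsilon$, either $w$ itself is a fragment (i.e.\ it can be completed to a tree by appending some string), in which case we set $n=0$ and $s=w$; otherwise $w$ is ``overfull'' and some proper prefix of $w$ is already a complete tree. Concretely, I would track the running gap-count while scanning left to right: reading a symbol of rank $k$ changes the number of outstanding subtrees still required by $k-1$, and a complete tree is exactly a prefix at which this count first drops to zero. The first such prefix $t_1 \in \mcT$ is split off, and by induction the strictly shorter suffix $w' $ with $w = t_1 w'$ factors as $t_2 \cdots t_n s$; concatenating gives the desired decomposition of $w$. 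The fragment $s$ is simply whatever remains after no further complete tree can be peeled off, and by the very definition of a fragment it is a fragment.

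For uniqueness, this is where the prefix-freeness of $\mcT$ does the real work. Suppose $w = t_1 \cdots t_n s = t_1' \cdots t_m' s'$ are two such factorizations. I would argue that $t_1 = t_1'$: both are prefixes of $w$ lying in $\mcT$, so one is a prefix of the other, and prefix-freeness (first bullet of Lemma~\ref{lemma:traversal-folklore}) forces them to coincide. Cancelling $t_1$ and inducting on $n$ (or on $|w|$) yields equality of all the tree factors, hence $n = m$ and $s = s'$. The one subtlety is the boundary between the last tree $t_n$ and the fragment $s$: I must rule out, say, a factorization where $s = \varepsilon$ competing with one where $s \ne \varepsilon$, and that the fragment is genuinely not further splittable. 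Here I would invoke the third bullet of Lemma~\ref{lemma:traversal-folklore}, which pins down the gap structure of a fragment uniquely, to ensure $s$ itself admits no complete-tree prefix and is therefore uniquely determined once the tree factors are fixed.

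The main obstacle I anticipate is not the high-level greedy argument but making the ``outstanding subtree count'' bookkeeping precise enough to justify both that the peeling terminates with a genuine fragment and that the leftover $s$ contains no further complete tree (so the split between $t_n$ and $s$ is forced). I expect the cleanest route is to lean entirely on the already-proved Lemma~\ref{lemma:traversal-folklore}: its second bullet (every suffix of a tree factors uniquely into trees) and third bullet (the gap count of a fragment is well-defined) together essentially contain the combinatorial content, so the proof of this lemma should be short, treating the claim as a near-corollary of the preceding one rather than redoing the counting argument from scratch.
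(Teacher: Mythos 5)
Your proof is correct and takes essentially the paper's approach: the paper gives no explicit proof at all, introducing the lemma only with the remark that it follows immediately from the prefix-freeness of $\mcT$, which is exactly the engine of your uniqueness argument, while your greedy peeling with the running gap-count (using that $\mc F_0 \neq \emptyset$) supplies the existence half that the paper leaves implicit. One small simplification: you do not need the third bullet of Lemma~\ref{lemma:traversal-folklore} to rule out a complete-tree prefix of the leftover fragment $s$ --- if $t \in \mcT$ were a prefix of $s$ and $sx \in \mcT$ with $x \neq \varepsilon$, then $t$ would be a proper prefix of the tree $sx$, contradicting prefix-freeness directly.
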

 Let $w \in {\mc F}^*$ and let $w = t_1 \cdots t_n s$ as in Lemma~\ref{lemma:substring_traversal_string}.
 We define 
 $c(w) = (n,\gaps{s})$.
 The number $n$ counts the number of full trees in $w$ and
 $\gaps{s}$ is the number of trees 
 missing to make the fragment $s$ a tree, too.
 	
For better readability, we occasionally write a tree $f t_1 \cdots t_n$
with $f \in \mc F_n$ and $t_1, \ldots, t_n \in \mcT$ as $f(t_1, \ldots, t_n)$, which corresponds
to the standard term representation of trees. 
We also consider trees in their graph-theoretic interpretation 
where the set of nodes of a tree $t$ is the set of positions $\{1, \ldots, |t|\}$ of the string $t$.
The root node is $1$. If $t$ factorizes as $u f t_1 \cdots t_n v$ for $u, v \in \mc F^*$,
$f \in \mc F_n$, and $t_1, \ldots, t_n \in \mcT$, then the $n$ children of node $|u|+1$ are
$|u|+2 + \sum_{i=1}^k |t_i|$ for $0 \leq k \leq n-1$. 
We define the depth of a node in $t$ (number of edges from the root to the node)  and
 the height of $t$ (maximal depth of a node) as usual.
 Note that the tree $t$ as a string is simply the preorder traversal of the tree $t$ seen in its standard
 graph-theoretic interpretation.

    \begin{example}\label{ex:small_tree}
     Let $t = ffaafffaaaa =  f(f(a,a),f(f(f(a,a),a),a))$ be the tree depicted in Figure~\ref{fig:example} with
     $f\in\mc F_2 $ and $a\in\mc F_0$. Its height is $4$.
     All prefixes (including the empty word, excluding the full word) of $t$ are fragments.
	 The fragment $s=ffaafff$
	 is also depicted in Figure~\ref{fig:example} in a graphical way. The dashed edges visualize the gaps.
	 We have $\gaps{s} = 4$. For the factor $u = aafffa$ of $t$ we have $c(u)=(2,3)$. The children of node $5$
	 (the third $f$-labelled node) are $6$ and $11$.
    \end{example}
\begin{figure}[t]
\centering

\begin{minipage}[hbt]{0.4\textwidth}
\centering

\begin{tikzpicture}[-,scale=1,level distance=5mm]
 \tikzset{level 1/.style={sibling distance=25mm}}
 \tikzset{level 2/.style={sibling distance=13mm}}

\node (root){$f$}
  child {node (0) {$f$}
    child {node (00) {$a$}}
    child {node (01) {$a$}} 
  }
  child {node (1) {$f$}
    child {node (10) {$f$}
      child {node (100) {$f$}
        child {node (1001) {$a$}}
        child {node (1001) {$a$}}        
      }
      child {node (101) {$a$}}
    }
    child {node (11) {$a$}}
  }
;
\end{tikzpicture}
\end{minipage}
\hfill
\begin{minipage}[hbt]{0.4\textwidth}
\centering
\begin{tikzpicture}[-,scale=1,level distance=5mm]
 \tikzset{level 1/.style={sibling distance=25mm}}
 \tikzset{level 2/.style={sibling distance=13mm}}
 
\node (root){$f$}
  child {node (0) {$f$}
    child {node (00) {$a$}}
    child {node (01) {$a$}} 
  }
  child {node (1) {$f$}
    child {node (10) {$f$}
      child {node (100) {$f$}
        child[dashed] {node (1001) {}}
        child[dashed] {node (1001) {}}        
      }
      child[dashed] {node (101) {}}
    }
    child[dashed] {node (11) {}}
  }
;
\end{tikzpicture} 
\end{minipage}
\caption{The tree $t$ from Example~\ref{ex:small_tree} 
and the tree fragment corresponding to the fragment 
  $ffaafff$.}
 \label{fig:example}
\end{figure}
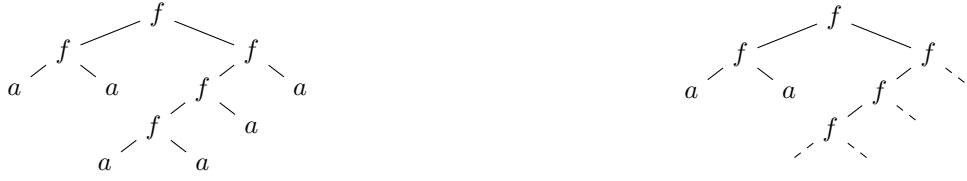

\subsection{Straight-line programs}

 A \emph{straight-line program}, briefly SLP, is a context-free grammar that
 produces a single string. Formally, it is a tuple $\bb A = (N,\Sigma, P, S)$, where $N$ is a 
 finite set of nonterminals, $\Sigma$ is a finite set of terminal symbols 
 ($\Sigma \,\cap \, N = \emptyset$), 
 $S \in  N$ is the start nonterminal, and $P$ is a finite
 set of productions (or rules) of the form $A \to w$ for $A \in N$, $w \in (N \cup \Sigma)^*$ such that:
 \begin{itemize}
  \item For every $A \in N$, there exists exactly one production of the form $A \to w$, and
  \item the binary relation $\{ (A, B) \in N \times N \mid (A \to w) \in P,\;B \text{ occurs in } w \}$
       is acyclic.
 \end{itemize} 
 Every nonterminal $A \in N$ produces a unique string $\val_{\bb A}(A) \in \Sigma^*$.
 The string defined by $\bb A$ is $\val(\bb A) = \val_{\bb A}(S)$. 
 We usually omit the subscript $\bb A$ when the context is clear.
 The \emph{size} of the SLP $\bb A$ is 
 $|\bb A| = \sum_{(A \to w) \in P} |w|$.
 One can transform an SLP $\bb A = (N,\Sigma,P,S)$ which produces a nonempty word in linear time into
 {\em Chomsky normal form}, i.e. for each production $(A \to w) \in P$, either $w \in \Sigma$
 or $w = BC$ where $B,C \in N$. 
 
 For an SLP $\bb A$ of size $n$ we have $|\val(\bb A)| \in 2^{\mc O(n)}$, and there exists a family of SLPs $\bb A_n$ ($n \geq 1$)
 such that $|\bb A_n| \in \mathcal{O}(n)$ and $|\val(\bb A)| = 2^n$. Hence, SLPs allow exponential compression.
 
 The following lemma summarizes known results about SLPs which we will use throughout the paper, see e.g. \cite{LoCWP}.
 
 \begin{lemma}\label{lemma:slp_folklore}
	There are linear time algorithms for the following problems:
	\begin{enumerate}
	        \item \label{lemma:occ_term}
		Given an SLP $\bb A$, compute the set of symbols occurring in $\val(\bb A)$.
		\item \label{lemma:number-of-occ}
		Given an SLP $\bb A$ with terminal alphabet $\Sigma$ and a subset $\Gamma \subseteq \Sigma$, 
		compute the number of occurrences of symbols from $\Gamma$ in $\val(\bb A)$.
	        \item \label{lemma:firstpos}
		Given an SLP $\bb A$ with terminal alphabet $\Sigma$, a subset $\Gamma \subseteq \Sigma$, and
		a number $i$, compute the position of the $i^{\text{th}}$ occurrence of a symbol from $\Gamma$ in 
		$\val(\bb A)$ (if it exists).
		\item \label{lemma:cut}
		Given an SLP $\bb A$ and  $i,j \in \{1, \dots, |\val(\bb A)|\}$ where $i \le j$, compute an SLP for $\val(\bb A) [i:j]$. The 
		size of the SLP for $\val(\bb A) [i:j]$ is bounded by $\mc O(|\bb A|)$.
  \end{enumerate}
 \end{lemma}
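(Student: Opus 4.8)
The plan is to treat all four statements by dynamic programming over the acyclic dependency structure of the SLP, processing the nonterminals in reverse topological order (from those producing short strings up to the start nonterminal $S$). Throughout I assume $\bb A = (N,\Sigma,P,S)$ is in Chomsky normal form, which by the preceding remark costs only linear time; the general case follows by treating each right-hand side as a short concatenation. I also work in the standard model in which arithmetic on integers of magnitude $\le |\val(\bb A)| = 2^{\mc O(|\bb A|)}$ is a unit-cost operation, since such numbers are exactly the lengths, counts, and positions that the algorithms manipulate.

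For statement~\ref{lemma:occ_term}, the key observation is that a terminal $a \in \Sigma$ occurs in $\val(\bb A)$ if and only if $a$ appears on the right-hand side of the production of some nonterminal reachable from $S$ in the dependency DAG. Indeed, in a context-free derivation no symbol of a right-hand side is ever deleted, so every such $a$ survives into $\val(\bb A)$; conversely every letter of $\val(\bb A)$ originates from some reachable production. Hence I first mark the reachable nonterminals by a single graph search on the DAG, and then scan their right-hand sides once, collecting the terminals, all in linear time.

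For statement~\ref{lemma:number-of-occ} I would define for each nonterminal $A$ the value $c(A)$, the number of occurrences of $\Gamma$-symbols in $\val(A)$, and compute it bottom-up: for a production $A \to a$ we set $c(A) = 1$ if $a \in \Gamma$ and $c(A) = 0$ otherwise, while for $A \to BC$ we set $c(A) = c(B) + c(C)$. A single pass in topological order performs $\mc O(|\bb A|)$ additions and returns $c(S)$. Statement~\ref{lemma:firstpos} then reuses these counts together with the subtree lengths $\ell(A) = |\val(A)|$, computed by the analogous recursion $\ell(A) = \ell(B) + \ell(C)$. To locate the $i^{\text{th}}$ $\Gamma$-symbol I descend from $S$ maintaining a pending rank $j$ (initially $i$) and an accumulated offset: at $A \to BC$, if $j \le c(B)$ I recurse into $B$, and otherwise I recurse into $C$ after decrementing $j$ by $c(B)$ and increasing the offset by $\ell(B)$. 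The descent follows one path of the derivation, of length at most the height of $\bb A$, hence at most $|N| \le |\bb A|$ steps, so the whole computation is linear.

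The main work is statement~\ref{lemma:cut}, the substring (cut) operation, and this is where I expect the real care to be needed. It suffices to build, for a single position $k$, an SLP of size $\mc O(|\bb A|)$ for the prefix $\val(\bb A)[1:k]$, and symmetrically for a suffix; the desired factor $\val(\bb A)[i:j]$ is then obtained by first cutting the prefix up to $j$ and then taking the suffix from $i$ of the resulting SLP. For the prefix I again descend from $S$ using the lengths $\ell(\cdot)$: at $A \to BC$, if $k \le \ell(B)$ the prefix of $\val(A)$ equals the prefix of $\val(B)$, so I recurse into $B$ and introduce a fresh rule $A' \to B'$; if $k > \ell(B)$ then $\val(A)[1:k] = \val(B)\,\val(C)[1:k-\ell(B)]$, so I keep $B$ whole, recurse into $C$, and introduce $A' \to B C'$. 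Exactly one new nonterminal with right-hand side of length $\le 2$ is created per level of the descent, and the descent visits at most one nonterminal per level of the derivation, so at most $|N|$ new rules are added; together with the reused original rules this yields an SLP of size $\mc O(|\bb A|)$. The obstacle to watch is the correctness bookkeeping of these cut-nonterminals (including the base case at a terminal rule, where the prefix is either empty or a single letter) and the verification that the two successive cuts composing the factor $[i:j]$ do not blow up the size beyond $\mc O(|\bb A|)$.
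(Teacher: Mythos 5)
Your proposal is correct. The paper itself gives no proof of this lemma---it is stated as a summary of known results with a pointer to the literature (\cite{LoCWP})---and your bottom-up computations of symbol sets, counts, and lengths, the single root-to-leaf descent for the $i^{\text{th}}$ occurrence, and the prefix/suffix cut construction adding one rule of length at most two per level of the derivation DAG are exactly the standard folklore arguments that this citation refers to, carried out in the same unit-cost word RAM model that the paper adopts elsewhere (e.g.\ in Theorem~\ref{thm:navi}).
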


\subsection{Tree straight-line programs}

 We now define tree straight-line programs.
 Let $\mc F $ and $\mc V$ be two disjoint ranked alphabets,
 where we call elements from $\mc F$ \emph{terminals} and 
 elements from $\mc V$ \emph{nonterminals}.
 Let further $\mc X=\left\{x_1,x_2,\dots\right\}$ be a countably infinite set of \emph{parameters} (disjoint from $\mc F$ and $\mc V$),
 which we treat as symbols of rank zero.
 In the following we consider trees over $\mc F \cup \mc V \cup \mc X$.
 The \emph{size} $|t|$ of such a tree $t$ is defined as the number of nodes
 labelled by a symbol from $\mc F \cup\mc V$, i.e. we do not count parameter nodes.
 A \emph{tree straight-line program} $\bb A$, or short \emph{TSLP}, is a tuple $\bb A
 =(\mc V, \mc F, P, S)$, where $\mc V$ is the set of nonterminals,
 $\mc F$ is the set of terminals,
 $S \in \mc V_0$ is the start nonterminal and
 $P$ is a finite set of productions of the form
 $A(x_1, \dots, x_n) \to t$
 (which is also briefly written as $A \to t$),
 where $n \ge 0$, $A \in \mc V_n$ and 
 $t \in \mc T (\mc F \cup \mc V \cup \mc \{x_1,\dots, x_n\})$ 
 is a tree in which every parameter $x_i$ ($1 \leq i \leq n$) occurs at most once, such that:
 \begin{itemize}
   \item For every $A \in \mc V_n$ there exists exactly one production of the form
    $A(x_1, \dots, x_n) \to t$, and
   \item the binary relation $\{ (A,B) \in \mc V \times \mc V \mid  (A \to t) \in P, B \text{ is a label in }t\}$ is acyclic.
 \end{itemize}
 These conditions ensure that exactly one tree 
 $\val_{\bb A}(A) \in \mc T(\mc F\cup\{x_1, \ldots, x_n\})$ 
 is derived from every nonterminal $A \in \mc V_n$ 
 by using the rules as rewriting rules in the usual sense. 
 As for SLPs, we omit the subscript $\bb A$ when the context is clear.
 The tree defined by $\bb A$ is $\val(\bb A) = \val_{\bb A}(S)$.
  The \emph{size} $|\bb A|$ of a TSLP $\bb A=(\mc V,\mc F,P,S)$ 
  is $|\bb A|=\sum_{(A\to t)\in P}|t|$.
  We call a TSLP \emph{monadic} if every nonterminal has rank at most one.
  One can transform every TSLP $\bb A$ into a monadic one of size $\mc O(|\bb A| \cdot r)$, 
  where $r$ is the maximal rank of a terminal in $\bb A$~\cite{LoMaSS12}. TSLPs, where every
  nonterminal has rank $0$ correspond to dags (the nodes of the dag are the nonterminals of the TSLP).
  
  For a TSLP $\bb A$ of size $n$ we have $|\val(\bb A)| \in 2^{\mc O(n)}$, and there exists a family of TSLPs $\bb A_n$ ($n \geq 1$)
 such that $|\bb A_n| \in \mathcal{O}(n)$ and $|\val(\bb A)| = 2^n$. Hence, analogously to SLPs, TSLPs allow exponential compression.
  One can also define {\em nonlinear} TSLPs where parameters can occur multiple times on right-hand sides;
  these can achieve doubly exponential compression
  but have the disadvantage that many algorithmic problems become more difficult, see e.g. \cite{DBLP:journals/tcs/LohreyM06}.

  For every word $w$ (resp., tree $t$) there exists a smallest SLP (resp., TSLP) $\bb A$. 
  It is known that, unless $\mathsf{P=NP}$, 
  there is no polynomial time algorithm that finds a smallest SLP (resp., TSLP) for a given 
  word~\cite{CLLLPPSS05} (resp. tree).
  
\section{Checking whether an SLP produces a tree}

 In this section we show that, given an SLP $\bb A$ and a ranked alphabet $\mc F$,
 we can verify in time linear in $|\bb A|$, whether $\val(\bb A) \in \mcT$.
 In other words, we present a linear time algorithm for the compressed membership problem 
 for the language $\mcT \subseteq \mc F^*$.
 We remark that $\mcT$ is a context-free language,
 which can be seen by considering the grammar with productions 
 $S \to fS^n$ for all symbols $f \in \mc F_n$. 
 In general the compressed membership problem for context-free languages can be solved in \PSPACE and
 there exists a deterministic context-free language with a \PSPACE-complete  compressed membership problem 
 \cite{DBLP:journals/jcss/CaussinusMTV98,DBLP:journals/iandc/Lohrey11}.

 \begin{theorem} \label{thm-rec-trav}
  Given an SLP $\bb A$, 
  one can check in time $\mc O(\bb A)$, whether
  $\val(\bb A) \in \mcT$.
 \end{theorem}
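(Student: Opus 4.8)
The plan is to reduce membership in $\mcT$ to a simple arithmetic condition on prefix sums of the traversal, and then evaluate that condition bottom-up over the SLP. Assign to each symbol $f \in \mc F$ the weight $1 - \rk(f)$, so a leaf has weight $+1$ and a symbol of rank $r$ has weight $1-r$, and for $w \in \mc F^*$ let $\sigma(w) = \sum_{i=1}^{|w|} (1 - \rk(w[i]))$ be the total weight; note that $\sigma$ is additive, i.e. $\sigma(uv) = \sigma(u) + \sigma(v)$. The key claim is the \L{}ukasiewicz-style characterization: $w \in \mcT$ if and only if $\sigma(w) = 1$ and $\sigma(w[:i]) \le 0$ for all $0 \le i < |w|$. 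The forward direction follows by induction on the tree structure $w = f\,t_1 \cdots t_r$ (each $t_j$ has weight $1$, so $\sigma(w) = (1-r) + r = 1$, and every proper prefix ends inside some $t_j$ and keeps a nonpositive running sum). The converse uses Lemma~\ref{lemma:substring_traversal_string}: writing $w = t_1 \cdots t_n s$, if $w \neq t_1$ then $t_1$ is a proper prefix with $\sigma(t_1) = 1 > 0$, a contradiction, and $n=0$ forces $\sigma(w)\le 0$; hence $c(w) = (1,0)$, i.e. $w \in \mcT$.

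First I would put $\bb A$ into Chomsky normal form in linear time. Processing the nonterminals in topological order, I compute for each nonterminal $A$ with $w = \val(A)$ the two integers $\sigma(A) := \sigma(w)$ and $M(A) := \max\{\sigma(w[:i]) : 0 \le i \le |w|-1\}$, the largest weight of a proper prefix (note $M(A) \ge 0$ since the empty prefix contributes $0$). For a terminal rule $A \to f$ we have $\sigma(A) = 1 - \rk(f)$ and $M(A) = 0$. For a rule $A \to BC$, additivity of $\sigma$ together with the fact that every proper prefix of $\val(B)\val(C)$ is either a proper prefix of $\val(B)$ or of the form $\val(B)\,u$ with $u$ a proper prefix of $\val(C)$ (the empty $u$ covering the prefix $\val(B)$ itself) yields the recurrences $\sigma(A) = \sigma(B) + \sigma(C)$ and $M(A) = \max\{M(B),\ \sigma(B) + M(C)\}$. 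By the characterization above, $\val(\bb A) \in \mcT$ holds precisely when $\sigma(S) = 1$ and $M(S) \le 0$; the degenerate case $\val(\bb A) = \varepsilon$ is rejected outright.

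The running time is $\mc O(|\bb A|)$: normalization is linear, there are $\mc O(|\bb A|)$ nonterminals, and each is handled with a constant number of additions, comparisons and maxima. The integers involved satisfy $|\sigma(A)|, |M(A)| \le r \cdot |\val(A)| \le r \cdot 2^{\mc O(|\bb A|)}$, i.e. they have $\mc O(|\bb A|)$ bits, where $r$ is the maximal rank in $\mc F$. These are exactly the magnitudes already manipulated in unit time in Lemma~\ref{lemma:slp_folklore} (for instance when counting symbol occurrences in $\val(\bb A)$), so staying in the same RAM model keeps each step $\mc O(1)$ and the whole computation $\mc O(|\bb A|)$.

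The one genuinely delicate point is that these counts are unavoidably exponential: even when $\val(\bb A)$ is a tree, completing an exponentially branching fragment drives $\sigma$ and $M$ through values of magnitude $2^{\Theta(|\bb A|)}$, so no capping of intermediate values is possible and the $\mc O(|\bb A|)$ bound genuinely relies on the paper's convention that arithmetic on $\mc O(|\bb A|)$-bit numbers is unit cost. A secondary subtlety is the careful bookkeeping of proper versus full prefixes in the recurrence for $M$. I note that one could instead propagate the pair $c(\val(A))$ directly via its concatenation law, but that law splits into several cases depending on whether the left fragment has zero gaps and whether its number of gaps exceeds the number of trees supplied on the right; the invariants $(\sigma, M)$ fold all of this into two clean max-plus recurrences, which is why I prefer them.
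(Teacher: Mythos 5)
Your proof is correct, and it takes a genuinely different route from the paper's. The paper propagates through each rule $A \to BC$ the pair $c(A)=(n,g)$ of Lemma~\ref{lemma:substring_traversal_string} (the number of complete trees in $\val(A)$ and the number of gaps of its trailing fragment), using a two-case concatenation law according to whether the gaps of $\val(B)$ are saturated by the trees of $\val(C)$, and accepts iff $c(S)=(1,0)$. You instead prove a ballot-style (\L{}ukasiewicz) characterization of preorder words --- $w \in \mcT$ iff $\sigma(w)=1$ and every proper prefix has nonpositive weight --- and propagate the pair $(\sigma(A), M(A))$, whose recurrences $\sigma(A)=\sigma(B)+\sigma(C)$ and $M(A)=\max\{M(B),\,\sigma(B)+M(C)\}$ are case-free; your characterization is proved essentially from scratch, invoking Lemma~\ref{lemma:substring_traversal_string} only for the converse direction. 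Both algorithms run in linear time in the same unit-cost RAM model: your $\sigma$ and $M$ can have $\mc O(|\bb A|)$ bits, but so can the paper's tree- and gap-counts (e.g.\ $\val(\bb A)=a^{2^n}$ gives $c(S)=(2^n,0)$), and this convention already underlies Lemma~\ref{lemma:slp_folklore}, so your closing remark about the model applies equally to the paper's own proof and is not a weakness of your approach. What each approach buys: yours is arguably cleaner as a pure decision procedure, replacing the case analysis by two max-plus identities grounded in a classical lemma; the paper's computation is messier but yields strictly more information, namely the pair $c(A)$ itself, which cannot be recovered from $(\sigma,M)$ (for a nonempty trailing fragment $\sigma(\val(A)) = n+1-\gaps{s}$, so $\sigma$ conflates the two components) and which the paper reuses as the organizing invariant in later constructions (Theorems~\ref{theorem:TraversalToTree} and~\ref{thm:main-reduction}).
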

 
  \begin{proof}
  Let $\bb A=(N,{\mc F}, P, S)$ be in Chomsky normal form and let $A \in N$.
  Due to Lemma~\ref{lemma:substring_traversal_string}, we know that 
  $\val(A)$ is the concatenation of trees and a (possibly empty) fragment.
  Define $c(A) := c(\val(A))$. Then $\val(\bb A) \in \mcT$ if and only if  $c(S) = (1,0)$.
  Hence, it suffices to compute  $c(A)$  for all nonterminals $A \in N$. We do this bottom-up.
  If $(A \to f) \in P$ with $f \in \mc F_n$, then we have
	\[
		c(A) = \begin{cases}
			(1, 0) & \text{if } n = 0 \\
			(0, n) & \text{otherwise}.
		\end{cases}
	\]
Now consider a nonterminal $A$ with the rule 	
 $(A\to BC)\in P$, and let $c(B)=(b_1,b_2)$, $c(C)=(c_1,c_2)$. We claim that
	\[
		c(A) = \begin{cases}
			(b_1 + c_1 - \max\{1,b_2\} + 1, c_2) & \text{if } b_2 \le c_1 \\
			(b_1, c_2 + b_2 - c_1 - \min\{1,c_2\}) & \text{otherwise.}
		\end{cases}
	\]
   Let $\val(B) = t_1 \cdots t_{b_1} s$ and
    $\val(C) = t'_1 \cdots t'_{c_1} s'$, where $t_1, \ldots, t_{b_1},  t'_1, \ldots, t'_{c_1} \in \mcT$
    and $s$ (resp., $s'$) is a fragment with $\gaps{s} = b_2$ (resp., $\gaps{s'} = c_2$). We distinguish two cases:  
    
    \medskip
    \noindent
    {\em Case} $b_2 \le c_1$: If $b_2 \geq 1$, then the string
    $s t'_1 \cdots t'_{b_2}$
    is a tree, and thus $\val(A)$
    contains $b_1 + 1 + (c_1-b_2)$ full trees and the fragment $s'$ with $c_2$ many gaps.
    On the other hand, if $b_2= 0$, then $\val(A)$ contains $b_1 + c_1 $ many full trees.
    
     \medskip
    \noindent
    {\em Case}  $b_2 > c_1$:
    The trees  $t'_1,\dots, t'_{c_1}$ fill $c_1$ many gaps of $s$, and if $s'\ne\varepsilon$, then the fragment $s'$
    fills one more gap, while creating another $c_2$ gaps. In total there are
    $b_2-(c_1+1)+c_2$ gaps if $c_2 > 0$ and 
    $b_2-c_1$ gaps if $c_2=0$. 
 \end{proof}

\section{SLPs for traversals versus other grammar-based tree representations}
\label{section:slpvstslp}

In this section, we compare the worst case size of SLPs for traversals with the following
two grammar-based tree representations:
\begin{itemize}
\item TSLPs, and
\item SLPs for balanced parenthesis sequences \cite{BLRSSW15}.
\end{itemize}

\subsection{SLPs for traversals versus TSLPs}

In \cite{BuLoMa07} it is shown that a TSLP $\bb A$ producing a tree $t\in \mcT$
can always be transformed into an SLP of size $\mc O(|\bb A| \cdot r)$ producing $t$,
where $r$ is the maximal rank of a label occurring in $t$.
So, for binary trees the size at most doubles.
In this section we will discuss the other direction, i.e. transforming an SLP into a TSLP.
Let $a$ be a symbol of rank $0$ and let $f_n$ be a symbol of rank $n$ for each $n \in \ns$.
Now let $t_n$ be the tree $f_n a^n$ and consider the family of trees
$(t_n)_{n\in\ns}$ with unbounded rank.
The size of the smallest TSLP for $t_n$ is $n+1$, whereas the size of the smallest SLP for $t_n$ is in $\mc O (\log n)$.
It is less obvious that such an exponential gap can be also realized with trees of bounded rank.
In the following we construct a family of binary trees $(t_n)_{n\in\ns}$
where a smallest TSLP for $t_n$ is exponentially larger than the size of 
a smallest SLP for $t_n$.
Afterwards we show that it is always possible to transform an SLP $\bb A$ for $t$
into a TSLP of size $\mc O(|\bb A| \cdot h \cdot r)$ for $t$,
where $h$ is the height of $t$
and $r$ is the maximal rank of a label occurring in $t$.

\subsubsection{Worst-case comparison of SLPs and TSLPs}

We use the following result from~\cite{DBLP:conf/ifipTCS/BertoniCR08} 
 for the previously mentioned worst-case construction of a family of binary trees:

\begin{theorem}[Thm.~2 from \cite{DBLP:conf/ifipTCS/BertoniCR08}]\label{theorem:bertoni}
 For every $n > 0$, there exist words $u_n, v_n \in \{0,1\}^*$ with $|u_n|= |v_n|$
 such that $u_n$ and $v_n$ have SLPs of size $n^{\mc O(1)}$, but the smallest SLP for 
 the convolution $u_n \otimes v_n$ has size $\Omega(2^{n/2})$.\footnote{Actually, in \cite{DBLP:conf/ifipTCS/BertoniCR08}
 the result is not stated for the convolution $u_n \otimes v_n$ but the literal shuffle of $u_n$ and $v_n$ which
 is $u_n[1] v_n[1] u_n[2] v_n[2] \cdots u_n[m] v_n[m]$. But this makes no difference, since the sizes of the smallest
 SLPs for the convolution and literal shuffle, respectively, of two words differ only by multiplicative constants.} 
\end{theorem}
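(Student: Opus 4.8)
I would not try to bound the smallest SLP of the convolution directly, but instead lower-bound it through the Lempel--Ziv factorization. Writing $g(w)$ for the size of a smallest SLP for $w$ and $z(w)$ for the number of phrases in the LZ77 factorization of $w$, one has the standard inequality $z(w) \le g(w)$ (see \cite{CLLLPPSS05,Ryt03}). Hence it suffices to construct, for every $n$, words $u_n, v_n \in \{0,1\}^*$ of equal length that admit SLPs of size $n^{\mc O(1)}$ while $z(u_n \otimes v_n) = \Omega(2^{n/2})$. The lever for bounding $z$ from below is elementary: if \emph{no} factor of some length $\ell$ occurs twice in a word $w$ of length $m$, then every LZ77 phrase has length at most $\ell$, so $z(w) \ge m/\ell$. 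The whole task therefore reduces to finding compressible $u_n, v_n$ whose convolution has \emph{all long factors distinct}.

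It is worth recording why the more obvious route via subword complexity cannot succeed, since this is exactly what makes the statement delicate. An SLP of size $g$ produces at most $g(\ell-1)$ distinct factors of length $\ell$ (each such factor must straddle the split of some rule $A \to BC$, and one split yields at most $\ell-1$ of them). Consequently every length-$\ell$ factor of $u_n \otimes v_n$ is pinned down by a length-$\ell$ factor of $u_n$ together with one of $v_n$, so the convolution has at most $g^2(\ell-1)^2$ distinct factors of length $\ell$. The convolution is thus \emph{locally} just as compressible as its ingredients, and the subword-complexity bound $g(w) \ge p(\ell)/\ell$ never leaves the polynomial range. The hardness must therefore sit at the coarse scale (long factors), which is precisely what the criterion above exploits and what LZ77, unlike subword complexity, can detect.

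For the construction I would take two \emph{incommensurate} periods: set $k = \lceil n/2 \rceil$, $p = 2^k$ and $q = 2^k+1$, which are coprime, and let $P = 0^{p-1}1 \in \{0,1\}^p$ and $Q = 0^{q-1}1 \in \{0,1\}^q$. Define $u_n = P^q$ and $v_n = Q^p$; both have length $m = pq = \Theta(2^n)$, and being powers of tiny primitive words they admit SLPs of size $\mc O(k) = \mc O(n)$. The value of $W := u_n \otimes v_n$ at a position $t$ depends only on $t \bmod p$ (through $P$) and on $t \bmod q$ (through $Q$).

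The heart of the argument is the claim that every factor of $W$ of length $q$ occurs exactly once. Suppose two occurrences start at positions $i < j$, so that $0 < j-i < m$. Comparing first coordinates over the $q$ consecutive positions $i,\dots,i+q-1$, whose residues modulo $p$ cover all of $\{0,\dots,p-1\}$ because $q > p$, we get that the cyclic rotation of $P$ by $(j-i) \bmod p$ fixes $P$; as $P$ is primitive (the position of its unique $1$ pins down the rotation), this forces $p \mid (j-i)$, and the same argument on the second coordinate gives $q \mid (j-i)$. By coprimality $pq = m$ divides $j-i$, contradicting $0 < j-i < m$. Granting the claim, no factor of length $q$ repeats, so every LZ77 phrase of $W$ has length at most $q$, whence $z(W) \ge m/q = p \ge 2^{n/2}$; combined with $g(W) \ge z(W)$ this is the desired lower bound. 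The one load-bearing step is this uniqueness claim, i.e.\ turning a long repeat into a \emph{simultaneous} periodicity of $P$ and $Q$ and then killing it by coprimality; I expect that the only real subtlety, in any treatment, is choosing the two structures so that a repeat genuinely forces such an arithmetic coincidence — here primitivity together with the Chinese Remainder Theorem makes it transparent, while everything else is bookkeeping.
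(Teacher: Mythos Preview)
The paper does not prove Theorem~\ref{theorem:bertoni}; it is quoted verbatim as Thm.~2 of \cite{DBLP:conf/ifipTCS/BertoniCR08} and used as a black box. So there is nothing to compare your argument against in this paper.

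That said, your proof is correct and pleasantly self-contained. The chain $g(w)\ge z(w)$ together with the elementary observation that a word with no repeated length-$\ell$ factor has $z(w)\ge |w|/\ell$ is exactly the right lever, and your construction $u_n=P^q$, $v_n=Q^p$ with $P=0^{p-1}1$, $Q=0^{q-1}1$, $p=2^k$, $q=2^k+1$ does the job: both words are powers of primitive words of length $\mc O(2^{n/2})$, hence have SLPs of size $\mc O(n)$, and your Chinese-Remainder argument that no length-$q$ factor of $u_n\otimes v_n$ repeats is clean (the only rotation of a primitive word equal to itself is the trivial one, and $q>p$ guarantees that a length-$q$ window sees every residue class modulo $p$). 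One cosmetic point: in the non-self-referencing LZ77 variant a phrase that has an earlier occurrence and length $\ge \ell$ would itself witness a repeated length-$\ell$ factor, so phrases have length $<\ell$ rather than $\le\ell$; this only strengthens your bound. Your aside explaining why the subword-complexity route is doomed is also correct and worth keeping, since it motivates why one must pass to LZ77.
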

For two given words $u = i_1 \cdots i_n  \in \{0,1\}^*$ and $v = j_1 \cdots j_n \in \{0,1\}^*$ we define the {\em comb tree}
\[
	t(u,v) = f_{i_1}(f_{i_2}(\dots f_{i_n}(\$,j_n) \dots j_2),j_1) 
\]
over the ranked alphabet $\{ f_0, f_1, 0, 1, \$ \}$ where $f_0,f_1$ have rank $2$ and $0,1,\$$ have rank $0$.
See Figure~\ref{fig:comb_tree} for an illustration.
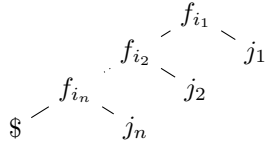
\begin{figure}[t]
\centering

 \begin{tikzpicture}[-,scale=1,level distance=5mm, sibling distance=16mm]
 
 \tikzset{level 1/.style={sibling distance=16mm}}
 \tikzset{level 2/.style={sibling distance=16mm}}
 \tikzset{level 3/.style={sibling distance=16mm}}
 \tikzset{level 4/.style={sibling distance=16mm}}
 
\node (root){$f_{i_1}$}
  child {node (0) {$f_{i_2}$}
    child[dotted] {node (00) {$f_{i_n}$}
       child[solid] {node (000) {$ \$ $}}
       child[solid] {node (001) {$j_n$}}
    }
    child {node (01) {$j_2$}} 
  }
  child {node (1) {$j_1$}}
;
\end{tikzpicture}

\caption{The comb tree $t(u,v)$ for $u = i_1 \cdots i_n$ and $v = j_1 \cdots j_n$ }
 \label{fig:comb_tree}
\end{figure}

\begin{theorem}\label{theorem:SmallestTravGrammar}
 For every $n> 0$ there exists a tree $t_n$ 
 such that  
 the size of a smallest SLP for 
 $t_n$ is polynomial in $n$, but
 the size of a smallest TSLP for $t_n$ is in $\Omega(2^{n/2})$.
\end{theorem}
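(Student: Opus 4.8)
The plan is to take $t_n := t(u_n,v_n)$ for the words $u_n,v_n \in \{0,1\}^*$ supplied by Theorem~\ref{theorem:bertoni}. I then have to do two things: exhibit a polynomial-size SLP for the preorder traversal of $t_n$, and show that every TSLP for $t_n$ has size $\Omega(2^{n/2})$. The whole lower bound will be reduced to the SLP-incompressibility of $u_n \otimes v_n$ from Theorem~\ref{theorem:bertoni}, via a size-preserving translation of TSLPs for the comb tree into SLPs for an interleaving of $u_n$ and $v_n$.

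For the upper bound I first compute the traversal explicitly. Unfolding the definition of $t(u,v)$ shows that its preorder string is
\[
 f_{i_1} f_{i_2} \cdots f_{i_n}\, \$\, j_n j_{n-1} \cdots j_1,
\]
i.e.\ the relabelled word $\phi(u_n)$ (with $0\mapsto f_0$, $1\mapsto f_1$), then $\$$, then $\mathrm{rev}(v_n)$. From the size-$n^{\mc O(1)}$ SLPs for $u_n$ and $v_n$ I obtain an SLP of the same size for $\phi(u_n)$ (relabel terminals) and for $\mathrm{rev}(v_n)$ (reverse every right-hand side), and a single start rule concatenates them with $\$$. Hence a smallest SLP for $t_n$ has size $n^{\mc O(1)}$.

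The core is the lower bound, which I prove as a translation lemma: from any TSLP $\bb A$ for $t(u,v)$ one can build an SLP of size $\mc O(|\bb A|)$ for the literal shuffle $i_1 j_1 i_2 j_2 \cdots i_n j_n$ of $u$ and $v$; by the footnote to Theorem~\ref{theorem:bertoni} this shuffle is SLP-incompressible to the same degree as the convolution, so the lemma forces $|\bb A| = \Omega(2^{n/2})$. First I make $\bb A$ monadic and put it in Chomsky normal form; since the maximal rank is $r=2$, this costs only a constant factor. Every nonterminal now evaluates either to a subtree of the comb or to a one-hole context of the comb, and these are easy to classify: the only subtrees are the spine-suffix combs $T_k$ (and the leaves), and the only contexts are a clean spine segment $C_a\cdots C_b$ (hole at the lower spine end) or a segment whose hole sits at a pendant right-child leaf $j_l$. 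To each nonterminal I attach its \emph{shuffle value}: the spine labels and right-child labels of its comb fragment, read top-down and interleaved, with the hole left as a single placeholder. Rank-$0$ nonterminals thus receive a string (a suffix of the shuffle, or a single symbol, or $\varepsilon$ for $\$$) and rank-$1$ nonterminals a context string $p\,x_1\,s$; checking the CNF production shapes shows that composition of comb contexts is exactly substitution of shuffle values, so $\bb A$ maps, production by production, onto a one-parameter string SLP for the full shuffle of size $\mc O(|\bb A|)$. Finally a one-parameter string SLP converts into an ordinary SLP of linear size by representing the prefix $p$ and suffix $s$ of each rank-$1$ nonterminal separately.

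The delicate point, and the reason I pair the comb with the \emph{literal} shuffle rather than the convolution, is exactly the contexts whose hole lies at a pendant right child $j_l$: there the spine runs past the hole all the way to the bottom, so the induced hole in the interleaved string is interior, and it gets filled by a single leaf symbol. Over the convolution alphabet $\{0,1\}\times\{0,1\}$ such a hole would occupy only the second coordinate of a letter and could not be modelled by a string SLP; over the literal shuffle it is an honest single-symbol placeholder, which is what makes the production-by-production translation (and the final one-parameter-to-ordinary SLP step) go through. I expect the bookkeeping for these interior-hole contexts to be the only real work: one must verify that each CNF production translates to a valid string production, in particular the two ways of introducing a binary terminal, $f_c(x_1,R)$ (clean, $s=\varepsilon$) and $f_c(L,x_1)$ (hole at the right child, $s\neq\varepsilon$). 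Everything else is the direct reduction to Theorem~\ref{theorem:bertoni}.
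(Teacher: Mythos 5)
Your proposal is correct, and its skeleton is the paper's: the same comb tree $t(u_n,v_n)$ built from the words of Theorem~\ref{theorem:bertoni}, the same polynomial-size SLP for the traversal $f_{i_1}\cdots f_{i_m}\,\$\,\mathrm{rev}(v_n)$, and the same key lemma that a TSLP for the comb yields a linear-size SLP for an incompressible word. Where you genuinely diverge is in how that lemma is proved, specifically in the treatment of contexts whose hole sits at a pendant right-child leaf $j_l$ (equivalently, whose value contains the unique $\$$-leaf). The paper eliminates these before translating: since $\$$ occurs exactly once in $t_n$, every nonterminal whose value contains $\$$ occurs exactly once in the derivation and can be inlined at no size cost; after this normalization every rank-$1$ nonterminal is a clean spine segment $g_1(g_2(\dots(g_k(x,j_k)\dots),j_2),j_1)$, and a simple letter-by-letter map $\lambda$ then produces an SLP over the \emph{convolution} alphabet. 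You instead keep all contexts and translate every nonterminal semantically via its ``shuffle value,'' which forces the switch from the convolution to the \emph{literal shuffle} so that an interior hole becomes an honest single-symbol placeholder; your diagnosis of why the convolution target cannot absorb such holes (they occupy half a letter) is exactly right, and the footnote to Theorem~\ref{theorem:bertoni} makes the shuffle and the convolution interchangeable for the $\Omega(2^{n/2})$ bound. The trade-off: the paper's inlining argument is shorter and needs no case analysis on production shapes, but leans on the uniqueness-of-$\$$ counting argument; your translation needs no preprocessing beyond monadic (CNF) form and is more systematic --- an invariant attached to every nonterminal, verified production by production --- at the cost of the interior-hole bookkeeping and the final one-parameter-SLP-to-SLP conversion (prefix/suffix splitting). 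One small point to make explicit in a full write-up: parameters in TSLPs occur \emph{at most} once, so rank-$1$ nonterminals whose parameter does not occur must be folded into the rank-$0$ case, a trivial normalization. Both routes constitute complete proofs.
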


\begin{proof}
 Let us fix an $n$ and let $u_n$ and $v_n$ 
 be the aforementioned strings from Theorem~\ref{theorem:bertoni}. Let $|u_n|= |v_n|=m$.
 Consider the comb tree $t_n := t(u_n,v_n)$. Note that $t_n = f_{i_1} \cdots f_{i_m}  \$ \, \mathrm{rev}(v_n)$,
 where $u_n = i_1 \cdots i_m$. By Theorem~\ref{theorem:bertoni} there exist SLPs of size 
  $n^{\mc O(1)}$ for $u_n$ and $v_n$, and these SLPs easily yield an SLP of size $n^{\mc O(1)}$ for  $t_n$.
 
 Next, we show that a TSLP $\bb A$ for $t_n$ yields an
  SLP of size $\mc O(|\bb A|)$ for the string $u_n\wedge v_n$.
  Since a smallest SLP for $u_n\wedge v_n$ has size  $\Omega(2^{n/2})$ by Theorem~\ref{theorem:bertoni},
  the same bound must hold for the size of a smallest TSLP for $t_n$. 
 
  Let $\bb A$ be a TSLP for $t_n$.
 By \cite{LoMaSS12} we can transform $\bb A$ into a monadic TSLP $\bb A'$ for $t_n$ of size $\mc O(|\bb A|)$.
 We transform the TSLP $\bb A'$
 into an SLP of the same size for $u_n \otimes v_n$.
 We can assume that every nonterminal except for the start nonterminal $S$ occurs in a right-hand side and 
 every nonterminal occurs in the derivation starting from $S$.
At first we delete all rules of the form $A \to j$ ($j \in \{0,1\}$) and 
 replace the occurrences of $A$ by $j$ in all right-hand sides.
 Now every nonterminal  $A \neq S$ of rank $0$ derives to a subtree of $t_n$ that
 contains the unique $\$$-leaf of $t_n$. Hence, $t_n$ contains a unique subtree $\val(A)$. This implies
 that $A$ occurs exactly once
 in a right hand side. We can therefore without size increase replace this occurrence of $A$
 by the right-hand side of $A$.  
 After this step, $S$ is the only rank-$0$ nonterminal in the TSLP.
With the same argument, we can also eliminate rank-$1$ nonterminals that derive to a tree containing the unique leaf $\$$.
After this step, every rank-$1$ nonterminal $A(x)$ 
derives a tree of the form $g_1(g_{2}(\dots(g_k(x,j_k)\dots),j_2),j_1)$ ($g_i \in \{ f_0,f_1\}$ and 
$j_i \in \{0,1\}$). 

Now, if a right-hand side contains a subtree $f_i(s_1, s_2)$, then $s_2$ must be either $0$ or $1$.
Similarly, for every occurrence of $i \in \{0,1\}$ in a right-hand side, the parent node of that occurrence must be
either labelled with $f_0$ or $f_1$ (note that the parent node exists and cannot be a nonterminal). Therefore we can obtain an SLP for $u_n \otimes v_n$ by replacing every production
$A(x) \to t(x)$ by $A \to \lambda(t(x))$, where $\lambda(t(x))$ is the string obtained inductively by $\lambda(x) = \varepsilon$,
$\lambda(B(s(x)) = B \lambda(s(x))$ for nonterminals $B$, and $\lambda( f_i(s(x),j)) = (i, j) \lambda(s(x))$.
The production for $S$ must be of the form $S \to t(\$)$ for a term $t(x)$ and we replace it
by $S \to \lambda(t(x)) \$$.
\end{proof}

\subsubsection{Conversion of SLPs to TSLPs}
Note that the height of the tree $t_n$ in Theorem~\ref{theorem:SmallestTravGrammar} is linear in the size of $t_n$.
By  the  following result, large height and rank are always responsible for the exponential succinctness gap between SLPs and TSLPs.

\begin{theorem}\label{theorem:TraversalToTree}
 Let $t \in \mc T(\mc F)$ be a tree of height $h$ and maximal rank $r$,
 and let $\bb A$ be an SLP for $t$ with $|\bb A | = m$.
 Then there exists a TSLP $\bb B$ with $\val(\bb B) = t$ such that
 $|\bb B| \in \mc O(m \cdot h \cdot r)$, 
 which can be constructed in time $\mc O(m \cdot h \cdot r)$.
\end{theorem}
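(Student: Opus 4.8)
The plan is to convert the SLP $\bb A$ into a TSLP by assigning to each nonterminal of $\bb A$ a collection of TSLP nonterminals that capture how $\val(A)$ behaves as a piece of the tree $t$. By Lemma~\ref{lemma:substring_traversal_string}, $\val(A)$ factors uniquely as $t_1 \cdots t_n s$, a sequence of full trees followed by a fragment $s$ with some number of gaps. The key observation is that, because $\val(A)$ is a contiguous substring of the traversal of a tree of height $h$, the full trees $t_1, \dots, t_n$ are roots of subtrees hanging off a single root-to-node path, and the fragment $s$ corresponds to a partial path descending into the tree. Crucially, the number of ``open'' positions that matter — the depths at which gaps occur in $s$, and the depths of the attachment points of the $t_i$ — is bounded by the height $h$. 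I would therefore encode $\val(A)$ not as one tree-with-parameters but as a bounded family of pieces indexed by depth.

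The main work is to define, for each nonterminal $A$ of $\bb A$ in Chomsky normal form, a set of TSLP nonterminals, one for each relevant ``level'' of the path structure inside $\val(A)$. Concretely, I would let $A$ contribute TSLP nonterminals producing the trees/fragments that sit at each of the at most $h$ depths touched by $\val(A)$, each carrying at most $r$ parameters (one per potential child slot at a branching node on the spine). When handling a rule $A \to BC$, I would glue the pieces of $B$ and $C$ together: the fragment of $B$ has its gaps filled by the leading full trees of $C$ (mirroring exactly the case analysis of Theorem~\ref{thm-rec-trav} that tracks $c(A)=(n,\gaps{s})$), and the splice happens at the matching depth. Each such gluing introduces $\mc O(h \cdot r)$ new right-hand side symbols, since we combine $\mc O(h)$ levels and at each branching we may need up to $r$ parameter slots. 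Summing over the $m$ rules of $\bb A$ yields the claimed size bound $\mc O(m \cdot h \cdot r)$, and the construction is a single bottom-up pass, giving the same time bound.

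The hard part will be bookkeeping the interface between consecutive pieces so that the parameters line up correctly: when the trailing fragment $s_B$ of $\val(B)$ meets the leading full trees of $\val(C)$, one must identify which gap of $s_B$ (at which depth) each tree $t'_i$ of $C$ fills, and which remaining gaps become the gaps of the combined fragment. This is exactly the gap-arithmetic of the two cases ($b_2 \le c_1$ versus $b_2 > c_1$) in the proof of Theorem~\ref{thm-rec-trav}, but now I must realize it structurally with TSLP parameters rather than just numerically. I expect that maintaining, for each nonterminal, a ``profile'' recording for each depth the piece at that depth together with its parameter arity, and proving by induction that these profiles compose correctly under concatenation, is the technical crux; the size accounting ($\mc O(h)$ levels times $\mc O(r)$ parameters per level, per rule) then follows routinely. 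I would also verify the base case directly: a rule $A \to f$ with $f \in \mc F_\ell$ produces either a single leaf tree ($\ell = 0$) or the one-node fragment $f$ with $\ell$ gaps, which seeds the profile trivially.
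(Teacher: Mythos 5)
Your plan follows the paper's proof in all essentials: convert the Chomsky normal form SLP bottom-up, attach to each SLP nonterminal the data of its factorization $\val(A)=t_1\cdots t_n\, s$ from Lemma~\ref{lemma:substring_traversal_string}, compose this data along a rule $A \to BC$ by the same gap arithmetic as in Theorem~\ref{thm-rec-trav}, and rest the size bound on the same structural fact, namely that the full trees occurring in a factor of $t$ hang off a single path of length at most $h$ with at most $r$ children per node, so there are at most $h\cdot r$ of them (the paper proves this by taking the lowest common ancestor of the two endpoints of the factor). Where you diverge is the encoding, and this is worth noting because it is precisely the source of the difficulty you could not discharge: you split each nonterminal's contribution into up to $h$ depth-indexed pieces capped at $r$ parameters each, and then need an inductive ``profile composition'' argument. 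This machinery is unnecessary. Since TSLP size does not count parameter nodes, there is no reason to cap ranks at $r$: the paper introduces one rank-$0$ nonterminal $A_i$ per full tree $t_i$ and a \emph{single} nonterminal $A'$ of rank $\gaps{s}$ for the whole fragment, however many gaps it has. With this flat encoding the composition step for $A\to BC$ is immediate: in each of the three cases ($b_2=0$, $0<b_2\le c_1$, $b_2>c_1$) every production is a chain production (eliminable for free) except at most one, of the form $A_{b_1+1}\to B'(C_1,\dots,C_{b_2})$ or $A'(\dots)\to B'(C_1,\dots,C_{c_1},C'(\dots),\dots)$, whose size is at most $c_1+2$; the bound $c_1\le h\cdot r$ then yields $|\bb B|\in\mc O(m\cdot h\cdot r)$ with no per-depth bookkeeping at all. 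So your approach is sound and your accounting gives the right bound, but the step you flagged as the technical crux is exactly what the paper's choice of nonterminals makes vanish; I would rework your construction with an unbounded-rank fragment nonterminal rather than trying to push the profile induction through.
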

 
\begin{proof}
 Without loss of generality we assume that $\bb A$ is in Chomsky normal form.
 For every nonterminal $A$ of $\bb A $ with $c(A) = (a_1,a_2)$
 we introduce $a_1$  nonterminals $A_1, \dots, A_{a_1}$ of rank $0$ 
 (these produce one tree each) and, if $a_2 > 0$, one
 nonterminal $A'$ of rank $a_2$ for the fragment encoded by $A$.
 For every rule of the form $A \to f$ with $f \in {\mc F}_n$
 we add to $\bb B$ the TSLP-rule
 $A_1 \to f$ if $n=0$ or
 $A'(x_1,\dots, x_n) \to f(x_1, \dots, x_n)$  if $n \geq 1$.
 Now consider a rule of the form $A \to BC $
 with $c(B)=(b_1,b_2)$ and $c(C) = (c_1, c_2)$.
 \paragraph{Case 1:}
  If $b_2 = 0$ we add  the following rules to $\bb B$:
  \begin{alignat*}{2}
  	A_i &\to B_i & & \quad \text{for } 1 \le i \le b_1 \\
  	A_{b_1+i} &\to C_i & & \quad \text{for } 1 \le i \le c_1 \\
  	A'(x_1, \dots, x_{c_2}) &\to C'(x_1, \dots, x_{c_2}) & & \quad \text{if } c_2 > 0
  \end{alignat*} 
 \paragraph{Case 2:}
  If $0 < b_2 \le c_1$ we add  the following rules to $\bb B$:
  \begin{alignat*}{2}
  	A_i &\to B_i & & \quad \text{for } 1 \le i \le b_1 \\
  	A_{b_1+1} &\to B'(C_1, \dots, C_{b_2}) & & \\
  	A_{b_1+1+i} &\to C_{b_2+i} & & \quad \text{for } 1 \le i \le c_1-b_2 \\
  	A'(x_1, \dots, x_{c_2}) &\to C'(x_1, \dots, x_{c_2}) & & \quad \text{if } c_2 > 0
  \end{alignat*}
 \paragraph{Case 3:} 
  If $b_2 > c_1$ we add  the following rules to $\bb B$, where $d = b_2-c_1$: 
  \begin{alignat*}{2}
   	A_i &\to B_i & & \quad \text{for } 1 \le i \le b_1 \\
	A'(x_1, \dots, x_d) &\to B'(C_1,\dots, C_{c_1},x_{1}, \dots, x_d) & & \quad \text{if } c_2 = 0 \\
   	A'(x_1, \dots, x_{c_2+d-1}) &\to B'(C_1,\dots, C_{c_1},C'(x_1,\dots,x_{c_2}),x_{c_2+1}, \dots, x_{c_2+d-1}) & & \quad \text{if } c_2 > 0
\end{alignat*}
Chain productions, where the right-hand side consists of a single nonterminal, can be eliminated without size increase.
Then, only one of the above productions remains and its size is  bounded by $c_1 + 2$
(recall that we do not count parameters). Recall that $c_1$ is the number of complete trees produced by $C$.
It therefore suffices to show that the number of complete trees of a factor $s$ of $t$ is bounded by
$h \cdot r$, where $h$ is the height of $t$ and $r$ is the maximal rank of a label in $t$. Assume that $s = t[i:j] = t_1 \cdots t_n s'$, where
$t_i \in \mc T(\mc F)$ and $s'$ is a fragment. Let $k$ be the lowest common ancestor of $i$ and $j$. If $k = i$ (i.e., $i$ is an ancestor of $j$)
then either $s=t_1$ or $s = s'$. Otherwise, the root of every tree $t_l$ ($1 \leq l \leq n$)
is a child of a node on the path from $i$ to $k$. 
The length of the path from $i$ to $k$ is bounded by $h$, hence $n \leq h \cdot r$.
\end{proof}


\subsection{SLPs for traversals versus balanced parenthesis sequences}

Balanced parenthesis sequences are widely used as a succinct representation of ordered unranked unlabeled trees \cite{MunroR01}.
One defines the balanced parenthesis sequence $\mathsf{bp}(t)$ of such a tree $t$ inductively as follows.
If $t$ consists of a single node, then $\mathsf{bp}(t) = ()$. If the root of $t$ has $n$ children in which the subtrees
$t_1, \ldots, t_n$ are rooted (from left to right), then $\mathsf{bp}(t) = (\mathsf{bp}(t_1) \cdots \mathsf{bp}(t_n))$.
Hence, a tree with $n$ nodes is represented by $2n$ bits, which is optimal in the information theoretic sense.
On the other hand, an unlabelled full binary tree $t$ (i.e., a tree where every non-leaf node has exactly two children)
of size $n$ can be represented with $n$ bits by viewing $t$ as a ranked tree over $\mc F = \{a,f\}$, where $f$ has rank two
and $a$ has rank zero.

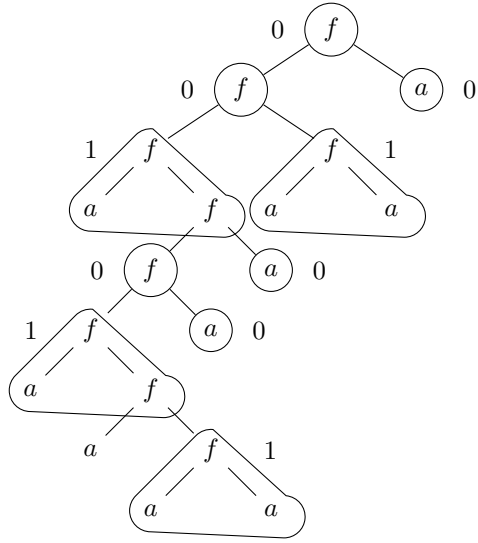
\begin{figure}[t]
\centering
\tikzset{level 1/.style={sibling distance=24mm}}
\tikzset{level 3/.style={sibling distance=16mm}}
\begin{tikzpicture}[scale=1,auto,swap,level distance=8mm]
\node[draw,circle] (f0) {$f$} 
  child {node[draw,circle] (f1) {$f$}
    child {node (f2) {$f$}
      child {node (a1) {$a$}}
      child {node (f3) {$f$}
        child {node[draw,circle] (f4) {$f$}
          child {node (f5) {$f$}
			child {node (a2) {$a$}}
			child {node (f6) {$f$}
			  child {node (a3) {$a$}}
			  child {node (f7) {$f$}
			    child {node (a4) {$a$}}
			    child {node (a5) {$a$}}
			  }
			}
          }
          child {node[draw,circle] (a6) {$a$}}
        }
        child {node[draw,circle] (a7) {$a$}}
      }
    }
    child {node (f8) {$f$}
      child {node (a8) {$a$}}
      child {node (a9) {$a$}}
    }
  }
  child {node[draw,circle](a10) {$a$}}
;
\draw \convexpath{f2,f3,a1}{8pt};
\draw \convexpath{f5,f6,a2}{8pt};
\draw \convexpath{f7,a5,a4}{8pt};
\draw \convexpath{f8,a9,a8}{8pt};

\node [left = 4pt of f0] {$0$};
\node [left = 4pt of f1] {$0$};
\node [left = 10pt of f2] {$1$};
\node [left = 4pt of f4] {$0$};
\node [left = 10pt of f5] {$1$};
\node [right =10pt of f7] {$1$};
\node [right = 4pt of a6] {$0$};
\node [right = 4pt of a7] {$0$};
\node [right =10pt of f8] {$1$};
\node [right = 4pt of a10] {$0$};

\end{tikzpicture}
\caption{Example tree for the proof of Theorem~\ref{theorem:BP}}
 \label{fig:thm5}
\end{figure} 

\begin{theorem}\label{theorem:BP}
 For every $n> 0$ there exists a full binary tree $t_n$ 
 such that  
 the size of a smallest SLP for 
 $t_n$ is polynomial in $n$, but
 the size of a smallest SLP for $\mathsf{bp}(t_n)$ is in $\Omega(2^{n/2})$.
\end{theorem}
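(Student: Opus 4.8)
The plan is to mirror the strategy of Theorem~\ref{theorem:SmallestTravGrammar}, reducing to the convolution lower bound of Theorem~\ref{theorem:bertoni}. First I fix the words $u_n, v_n \in \{0,1\}^*$ of equal length $m$ supplied by Theorem~\ref{theorem:bertoni}: both have SLPs of size $n^{\mc O(1)}$, whereas every SLP for the convolution $u_n \otimes v_n$ has size $\Omega(2^{n/2})$. I would build a full binary tree $t_n$ over $\mc F = \{a,f\}$ that encodes \emph{both} words in its shape, as illustrated in Figure~\ref{fig:thm5}: one walks down a spine of length $m$, and at step $k$ the branching decision of the spine (whether it descends into the left or the right child) carries the bit $u_n[k]$, while a small gadget hung off the non-spine side — either a single leaf $a$ or the two-leaf tree $f(a,a)$ — carries the bit $v_n[k]$; the spine is terminated by a fixed base gadget. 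Every internal node then has exactly two children, so $t_n$ is a full binary tree with $|t_n| \in \mc O(m)$.

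The first claim to establish is that (the preorder traversal) $t_n$ has an SLP of size $n^{\mc O(1)}$. The design goal here is that, as one reads the spine from the root, the contributions of the two encoded words fall into \emph{separate} regions of the traversal: the descent of the spine yields a prefix that is, up to a fixed morphism, controlled only by $u_n$, whereas the off-spine gadgets are emitted, in reverse order, as a suffix controlled only by $v_n$. Concretely I would aim to write $t_n = \phi(u_n)\cdot c\cdot \psi(\mathrm{rev}(v_n))$ for fixed morphisms $\phi,\psi$ and a constant-size word $c$; applying $\phi$ and $\psi$ to the given SLPs for $u_n$ and $v_n$ and concatenating (Lemma~\ref{lemma:slp_folklore}) then produces a polynomial-size SLP for $t_n$. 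The reason this can work is that the preorder traversal records only the order in which nodes are \emph{entered}, which keeps the two words in disjoint blocks.

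The heart of the proof is the lower bound, namely that a smallest SLP for $\mathsf{bp}(t_n)$ has size $\Omega(2^{n/2})$. Here I would exploit that, unlike the traversal, the balanced-parenthesis sequence also records the order in which nodes are \emph{left}: it contains the runs of closing parentheses produced whenever the spine is ascended. These ascent runs re-interleave the $u_n$-information (now carried by the run lengths) with the $v_n$-gadgets in exactly the same positional order, so that a fixed factor of $\mathsf{bp}(t_n)$ equals, after a symbol-relabelling, an encoding of $u_n \otimes v_n$. The concrete step is to show that from an SLP $\bb B$ for $\mathsf{bp}(t_n)$ one can construct, using the cutting and occurrence operations of Lemma~\ref{lemma:slp_folklore}, an SLP of size $\mc O(|\bb B|)$ for $u_n \otimes v_n$. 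Together with Theorem~\ref{theorem:bertoni} this forces $|\bb B| \in \Omega(2^{n/2})$.

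The main obstacle is realising both properties with a single shape: the traversal must keep $u_n$ and $v_n$ \emph{decoupled} (so it stays easy), while $\mathsf{bp}(t_n)$ must \emph{recouple} them into the convolution (so it is hard). The difficulty is sharpened by the following observation, which I would use to guide the construction: two subtrees are consecutive in the preorder traversal \emph{exactly} when they are consecutive in the balanced-parenthesis sequence. Hence the coupling in $\mathsf{bp}(t_n)$ cannot be produced by adjacent gadget subtrees, since that would already make the traversal encode the convolution and thus be hard as well; it must arise purely from the closing-parenthesis (ascent) structure, which has no counterpart in the traversal. Pinning down the gadgets and the spine so that the ascent runs realise precisely the alignment of $u_n[k]$ with $v_n[k]$, and verifying both that $\phi,\psi$ genuinely decouple the traversal and that the extraction of $u_n \otimes v_n$ from $\mathsf{bp}(t_n)$ is size-preserving, is the delicate technical core of the argument.
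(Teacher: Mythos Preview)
Your plan is correct and matches the paper's proof: encode $u_n$ along a spine and $v_n$ in off-spine gadgets so that the preorder traversal splits as a morphic image of $u_n$ concatenated with a morphic image of $\mathrm{rev}(v_n)$ (the paper writes $t_n=\varphi_1(\mathrm{rev}(u_n))\,a\,\varphi_2(v_n)$, which is your decomposition with the reversal on the other factor), and then argue that the closing-parenthesis runs in $\mathsf{bp}(t_n)$ re-align the two words into a block encoding of $u_n\otimes v_n$.

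One small correction on the extraction step: the operations of Lemma~\ref{lemma:slp_folklore} (cutting, counting, locating occurrences) are not by themselves enough to turn an SLP for $\mathsf{bp}(t_n)$ into one for $u_n\otimes v_n$, and ``symbol-relabelling'' understates what is needed. In the paper's encoding the four convolution symbols correspond to blocks $())$, $()))$, $(()()))$, $(()())))$ of \emph{different lengths}, so the decoding is not a letter-to-letter map; what you need is that these blocks form a prefix code, hence a deterministic rational transducer recovers $u_n\otimes v_n$ from the relevant suffix, and SLPs are closed under fixed rational transductions with only linear size blow-up (this is \cite[Theorem~1]{DBLP:conf/ifipTCS/BertoniCR08}, used together with point~\ref{lemma:cut} of Lemma~\ref{lemma:slp_folklore} for extracting the suffix). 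With that tool in hand your outline goes through exactly as in the paper.
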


\begin{proof}
 Let us fix an $n$ and let $u_n, v_n \in \{0,1\}^*$ 
 be the  strings from Theorem~\ref{theorem:bertoni}. Let $|u_n|= |v_n|=m$.
 We define $t_n$ by 
 $$
 t_n = \varphi_1(\mathrm{rev}(u_n)) \, a \, \varphi_2(v_n) 
 $$
 where $\varphi_1, \varphi_2 : \{0,1\}^* \to \{a,f\}^*$
 are the homomorphisms defined as follows:
 \begin{alignat*}{2}
\varphi_1(0) & =  f &  \qquad \varphi_2(0) & = a \\
\varphi_1(1) & =  faf &  \qquad \varphi_2(1) & = faa
\end{alignat*}
It is easy to see that $t_n$ is indeed a tree (note that the string 
$\varphi_2(v_n)$ is a sequence of $m$ many trees).
From the SLPs for $u_n$ and $v_n$ we obtain an SLP for $t_n$ of size polynomial in $n$.
It remains to show that the smallest SLP for $\mathsf{bp}(t_n)$ has size $\Omega(2^{n/2})$.
To do so, we show that from an SLP for $\mathsf{bp}(t_n)$ we can obtain with a linear size increase
an SLP for the convolution of $u_n$ and $v_n$. In fact, we show the following claim:

\medskip
\noindent
{\em Claim.} The convolution $u_n \otimes v_n$ 
can be obtained from a suffix of $\mathsf{bp}(t_n)$ by a fixed rational transformation
 (i.e., a deterministic finite automaton
that outputs along every transition a finite word over some output alphabet). 

\medskip
\noindent
This claim proves the theorem using the following two facts:
\begin{itemize}
\item An SLP for a suffix of a string $\val(\bb A)$ (for an SLP $\bb A$) can be produced by an SLP
of size $\mc O(|\bb A|)$ by point 4 of Lemma~\ref{lemma:slp_folklore}.
\item  For every fixed rational transformation $\rho$,
an SLP for $\rho(\val(\bb A))$ 
can be produced by an SLP of size $\mc O(|\bb A|)$ \cite[Theorem~1]{DBLP:conf/ifipTCS/BertoniCR08}
(the $\mc O$-constant depends on the rational transformation).
\end{itemize}
To see why the above claim holds, it is the best to look at an example.
Assume that $u_n = 10100$ and $v_n = 10010$. Hence, we have
$$
 t_n = \varphi_1(\mathrm{rev}(u_n)) \, a \, \varphi_2(v_n)\;=\; f \; f \; faf \; f \; faf \; a \; faa \; a \; a \; faa \; a .
$$
This tree is shown in Figure~\ref{fig:thm5}.
We have
$$
\mathsf{bp}(t_n) =   \underset{0}{(} \; \underset{0}{(} \; \underbrace{( () (}_{1}  \; \underset{0}{(}   \; \underbrace{( () (}_{1}  \; () \; 
 \underbrace{(()())))}_{(1,1)}  \; \underbrace{())}_{(0,0)}   \; \underbrace{()))}_{(1,0)}   \; \underbrace{(()()))}_{(0,1)}   \; \underbrace{())}_{(0,0)} .
$$
Indeed, $\mathsf{bp}(t_n)$ starts with an encoding of the string  $\mathrm{rev}(u_n)$ (here $00101$)
via the correspondence $0 \;\widehat{=}\; ($ and $1 \;\widehat{=}\; (()($, followed by $()$  (which encodes the single $a$ between
$\varphi_1(\mathrm{rev}(u_n))$ and  $\varphi_2(v_n)$ in $t_n$), followed by the desired encoding of the convolution
$u_n \otimes v_n$. The latter is encoded by the following correspondence:
\begin{eqnarray*}
(0,0) & \widehat{=} & ()) \\
(1,0) & \widehat{=} & ())) \\
(0,1) & \widehat{=} & (()())) \\
(1,1) & \widehat{=} & (()()))) .
\end{eqnarray*}
So, a $0$ (resp., $1$) in the second component is encoded by $()$ (resp., $(()())$), which corresponds to the tree $a$ (resp., $faa$).
A $0$ (resp., $1$) in the first component is encoded by one (resp., two) closing parenthesis.

Note that the strings $()), ())), (()())), (()())))$ form a prefix code. This allows to replace these strings by the convoluted symbols
$(0,0), (1,0), (0,1)$, and $(1,1)$, respectively, by a deterministic rational transducer. This shows the above claim.
 \end{proof}
 Theorem~\ref{theorem:BP} can be also interpreted as follows: For every $n> 0$ there exists a full binary tree $t_n$ 
 such that  the size of the smallest SLP for the depth-first-unary-degree-sequence (DFUDS -- it is defined in 
 the proof of Theorem~\ref{thm:navi} below) of $t_n$ is polynomial in $n$, but
 the size of the smallest SLP for the balanced parenthesis representation of $t_n$ is in $\Omega(2^{n/2})$.
It remains open, whether there is also a tree family where the opposite situation arises.

\section{Algorithmic problems on SLP-compressed trees}

In this section we study the complexity of several basic algorithmic problems on trees that are represented by SLPs.

\subsection{Efficient tree operations}
 
In \cite{BLRSSW15} it is shown that 
for a given SLP $\bb A$ of size $n$ that produces the balanced parenthesis
representation of an unranked tree $t$ of size $N$, one can produce in time $\mc O(n)$ 
a data structure of size $\mc O(n)$ that supports navigation as well as other 
important tree queries (e.g. lowest common ancestors queries) 
in time $O(\log N)$. Here, the word RAM model is used, where memory cells
can store numbers with $\log N$ bits and arithmetic operations on $\log N$-bit 
numbers can be carried out in constant time.
An analogous result was shown in  \cite{BilleGLW13,Hubschle-Schneider15} for top dags.
Here, we show the same result for SLPs that produce (preorder traversals
of) ranked trees. Recall that we identify the nodes of a tree $t$
with the positions $1, \ldots, |t|$ in the string $t$.

\begin{theorem}
	\label{thm:navi}
	Given an SLP $\bb A$ of size $n$ for a tree $t \in \mc T(\mc F)$ of size $N$,
	one can produce in time $\mc O(n)$ 
        a data structure of size $\mc O(n)$ that allows to do the following computations
        in time  $\mc O(\log N) \leq \mc O(n)$ on a word RAM, 
        where  $i,j,k \in \ns$ with $1 \leq i,j \leq N$ are given in binary notation:
	\begin{enumerate}[(a)]
		\item\label{thm:navi:parent} Compute the parent node of node $i>1$ in $t$.
		\item Compute the  $k^{\mathrm{th}}$ child of  node $i$ in $t$,
		      if it exists.
		\item Compute the number $k$ such that $i>1$ is the $k^{\mathrm{th}}$ child of its parent node.
		\item Compute the size of the subtree rooted at node $i$.
		\item Compute the lowest common ancestor of nodes $i$ and $j$ in $t$.
	\end{enumerate}
\end{theorem}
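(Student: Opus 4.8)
The plan is to reduce all five queries to a small set of primitive computations on \emph{prefix sums} of a weighted version of the traversal string, thereby avoiding any explicit expansion of the SLP and keeping the data structure of size $\mc O(n)$ independently of the rank $r$. Concretely, assign to each terminal $f$ the integer weight $\sigma(f) = \rk(f) - 1 \in \{-1, 0, 1, 2, \dots\}$ and set $P(j) = \sum_{l=1}^{j} \sigma(t[l])$ for $0 \le j \le N$ (so $P(0) = 0$). The basic invariant, proved by an easy induction along the lines of Theorem~\ref{thm-rec-trav}, is that after reading the prefix $t[1{:}j]$ exactly $P(j)+1$ subtree slots are still open; in particular $P(N) = -1$, and since $\sigma \ge -1$ the value $P$ decreases by at most one per step.

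From this invariant every query becomes a statement about $P$. The subtree rooted at node $i$ is the factor $t[i{:}e]$ where $e$ is the \emph{least} index $\ge i$ with $P(e) = P(i-1)-1$ (well defined because $P$ drops by at most one at a time), which settles (d) with subtree size $e-i+1$. For (a) I will show that $\mathrm{parent}(i) = q+1$, where $q$ is the largest index $\le i-2$ with $P(q) \le P(i-1)$: maximality of $q$ forces $P(q) \le P(l)$ for all $q < l \le i-1$, i.e. the subtree opened at $q+1$ is the innermost one still enclosing $i$. Query (c) then follows from the slot count, namely $i$ is child number $1 + P(\mathrm{parent}(i)) - P(i-1)$; query (b) is a forward search, the $k^{\mathrm{th}}$ child of $i$ starting one position after the least index $\ge i$ at which $P$ has dropped to $P(i)-(k-1)$ (returning ``no child'' if this exceeds the subtree of $i$ or if $k > \rk(t[i])$); and for (e) I will show $\mathrm{lca}(i,j) = q+1$ for $i<j$, where $q$ is the largest index $\le i-1$ with $P(q) \le \min_{i \le l \le j-1} P(l)$, the proof being the same enclosing argument applied simultaneously to $i$ and $j$.

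Thus it suffices to support, after $\mc O(n)$ preprocessing and in time $\mc O(\log N)$: evaluation of $P$ at a point; a \emph{forward search} returning the least $j \ge i$ with $P(j)$ equal to a given target; a \emph{backward search} returning the largest $q \le x$ with $P(q)$ at most a given value; and a \emph{range minimum} of $P$ over an interval. These are exactly the primitives used for SLP-compressed balanced parenthesis sequences in \cite{BLRSSW15}, the only difference being that the step weights are the integers $\sigma(f)$ rather than $\pm 1$. To realise them I will first balance the SLP into an equivalent one of size $\mc O(n)$ and derivation depth $\mc O(\log N)$ (doable in linear time, as in \cite{BLRSSW15}), and precompute for every nonterminal $A$ the triple $(|\val(A)|, \mathrm{tot}(A), \mathrm{min}(A))$ --- length, total weight and minimal prefix sum of $\val(A)$ --- together with the symmetric right-to-left quantities. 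These attributes compose over a rule $A \to BC$ exactly as in the $\pm1$ case, so they are computable bottom-up in $\mc O(n)$, and each primitive is answered by an attribute-guided descent of depth $\mc O(\log N)$ (for range minimum, by decomposing the interval into $\mc O(\log N)$ canonical factors and combining their attributes).

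The routine combinatorial reductions above are elementary once the slot-count invariant is in place; the real work, and the expected main obstacle, is the $\mc O(\log N)$-time realisation of the four primitives. This needs both the reduction of the derivation depth to $\mc O(\log N)$ and the correct attribute-guided descent and interval decomposition, which is where I would follow the treatment of \cite{BLRSSW15} and merely check that integer step weights cause no harm. Within the combinatorics, the point deserving the most care is the previous-smaller-or-equal characterization of $\mathrm{parent}$ and $\mathrm{lca}$, where the non-strict inequality is essential to handle equal prefix-sum values correctly.
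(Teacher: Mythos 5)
Your proposal is correct: the slot-count invariant $P(j)+1 = $ number of open subtree slots is exactly right, and all five reductions check out (I verified the parent, child-index, $k^{\mathrm{th}}$-child, subtree-size and LCA characterizations, including the edge cases where non-strict inequality matters, e.g.\ a unary parent with $P(p-1)=P(p)$). It is, however, the same mathematics as the paper's proof in different packaging: your weighted prefix sum $P$ is precisely (up to an index shift) the excess sequence of the DFUDS representation, which the paper constructs \emph{explicitly} --- it replaces each rank-$d$ symbol in the right-hand sides by $({}^d)$, prepends one opening parenthesis, and then uses \cite{BLRSSW15} as a black box for $\mathsf{rank}$, $\mathsf{select}$, $\mathsf{findclose}$, $\mathsf{findopen}$, $\mathsf{rmqi}$, citing \cite{BenoitDMRRR05,JanssonSS12} for the fact that parent, $k^{\mathrm{th}}$ child, child index, subtree size and LCA are each a constant number of these primitives on DFUDS. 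So where the paper is a pure citation-level reduction, you re-derive the DFUDS navigation lemmas from scratch and instead must open up the machinery of \cite{BLRSSW15} to handle integer step weights. This trade has a real payoff that the paper glosses over: the paper's DFUDS conversion multiplies the grammar size by up to $r+1$ (an occurrence of a rank-$d$ symbol becomes $d+1$ parentheses), so its ``$\mc O(n)$'' data structure is really $\mc O(n\cdot r)$ unless ranks are treated as constant, whereas your weighted formulation is genuinely rank-independent; and the generalization of the attribute composition and downward searches from $\pm 1$ to weights in $\{-1,0,1,\dots\}$ is indeed routine, since downward steps are still $-1$ (so ``first value $\le$ target'' equals ``first value $=$ target''). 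One caveat: your balancing step is misattributed --- \cite{BLRSSW15} does \emph{not} balance the grammar to depth $\mc O(\log N)$ (that was exactly what they avoided, via heavy-path decomposition of the derivation tree and biased search trees); linear-time balancing with size $\mc O(n)$ and depth $\mc O(\log N)$ is a later result of Ganardi, J\.{e}\.{z} and Lohrey. Either tool discharges your four primitives, so this is a citation error rather than a gap, but as written the claim ``as in \cite{BLRSSW15}'' is not what that paper provides.
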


\begin{proof}
In \cite{BLRSSW15}, it is shown that for an SLP $\bb A$ of size $n$ that produces a well-parenthesized string $w \in \{ (, ) \}^*$
of length $N$, one can produce in time $\mc O(n)$ 
a data structure of size $\mc O(n)$ that allows to do the following computations in time  $O(\log N)$ on a word RAM,
where $1 \leq k,j \leq  N$ are given in binary notation and $b \in \{ (, ) \}$:
\begin{itemize}
\item Compute the number of positions $1 \leq i \leq k$ such that $w[i] = b$ ($\mathsf{rank}_b(k)$).
\item Compute the position of the $k^{\mathrm{th}}$ occurrence of $b$ in $w$ if it exists ($\mathsf{select}_b(k)$).
\item Compute the position of the matching closing (resp., opening) parenthesis for an opening (resp., closing)
parenthesis at position $k$ ($\mathsf{findclose}(k)$ and $\mathsf{findopen}(k)$).
\item Compute the left-most position $i \in [k,j]$ having the smallest excess value in the interval $[k,j]$, where the 
excess value at a position $i$ is $\mathsf{rank}_{(}(i)-\mathsf{rank}_{)}(i)$ ($\mathsf{rmqi}(k, j)$). 
\end{itemize}
Let us now take an SLP $\bb A$ of size $n$ for a tree $t \in \mc T(\mc F)$ of size $N$ and let $s$ be the corresponding unlabelled
tree. In \cite{BenoitDMRRR05}, the DFUDS-representation (DFUDS for depth-first-unary-degree-sequence) 
of $s$ is defined as follows: Walk over the tree in preorder and write down for every node with $d$ children
the string $(^d )$ ($d$ opening parenthesis followed by a closing parenthesis). Finally put an additional opening
parenthesis at the beginning of the resulting string, which yields a well-parenthesized string.
For instance, for the tree $g(f(a,a),a,h(a))$ we obtain
the DFUDS-representation $( \; ((() \, (() \; )\; ) \; ) \; () \; )$. Clearly, from the SLP $\bb A$ we can produce 
an SLP $\bb B$ for the DFUDS-representation of the tree $s$: Simply replace in right-hand sides every occurrence of a
symbol $f$ of rank $d$ by $(^d )$, and add an opening parenthesis in front of 
the right-hand side of the start nonterminal.

The starting position of the encoding of a node $i \in \{1, \ldots, N\}$
in the  DFUDS-representation can be found as $\mathsf{select}_{)}(i-1)+1$ for $i >1$, and for $i=1$ it is $2$.
Vice, versa if $k$ is the starting position of the encoding of a node in the DFUDS-representation, then
the preorder number of that node is $\mathsf{rank}_{)}(k-1)+1$.

In \cite{BenoitDMRRR05,JanssonSS12}, it is shown that the tree navigation operations from the theorem can be implemented on the DFUDS-representation
using a constant number of $\mathsf{rank}$, $\mathsf{select}$, $\mathsf{findclose}(k)$, $\mathsf{findopen}(k)$ and $\mathsf{rmqi}$-operations. 
Together with the above mentioned results from  \cite{BLRSSW15} this shows the theorem.
\end{proof}
The data structure of \cite{BLRSSW15} allows to compute the depth and height of a given tree node in time $\mc O(\log N)$ 
as well. It is not clear to us, whether this result can be extended to our setting as well. In
 \cite{JanssonSS12}  it is shown that the depth of a given node can be computed in constant time on the 
 DFUDS-representation. But this uses an extra data structure, and it is not clear whether this extra
 data structure can be adapted so that it works for an SLP-compressed DFUDS-representation.
 On the other hand, in Section~\ref{sec-tree-eval}, we show that the height and depth of a given node of an SLP-compressed
 tree can be computed in polynomial time at least.

\subsection{Pattern matching}
\label{pattern-matching}

In contrast to navigation problems, simple pattern matching problems become quite difficult for SLP-compressed trees.
The {\em pattern matching problem for SLP-compressed trees} can be formalized as follows: 
Given a tree $s \in \mc T(\mc F \cup \mc X)$, called the {\em pattern}, where every parameter $x \in \mc X$ occurs at most once,
and an SLP $\bb A$ producing a tree $t \in \mcT$, is there a substitution $\sigma: \mc X \to \mc T(\mc F)$ such that $\sigma(s)$ is a subtree of $t$? 
Here, $\sigma(s) \in \mc T(\mc F)$ denotes the tree obtained from $s$ by substituting each variable $x \in \mc X$ by the tree $\sigma(x)$.
Note that the pattern is given in uncompressed form. If the tree $t$ is given by a TSLP, the corresponding problem can be solved
in polynomial time \cite{DBLP:journals/corr/abs-1302-6336} (even if the pattern tree $s$ is given by a TSLP as well).

\begin{theorem}
	\label{thm:pattern}
	The pattern matching problem for SLP-compressed trees is \NP-complete. Moreover, $\NP$-hard\-ness holds for a fixed
	pattern of the form  $f(x,a)$
\end{theorem}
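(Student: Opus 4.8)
First I would argue containment for arbitrary patterns $s$, not just the fixed one. The traversal $t=\val(\bb A)$ may have length $N \in 2^{\mc O(m)}$, but an occurrence of $\sigma(s)$ as a subtree is determined by a single node, namely the position $p\in\{1,\dots,N\}$ to which the root of $s$ is mapped; since $N\le 2^{\mc O(|\bb A|)}$, such a $p$ is described by $\mc O(|\bb A|)$ bits and can be guessed. Because every parameter occurs at most once in $s$, there are no equality constraints between the trees substituted for distinct parameters, so verification is purely local: I traverse the (uncompressed, hence polynomially large) pattern $s$ in preorder while maintaining the corresponding node of $t$; at a node of $s$ labelled $g\in\mc F_n$ I check that the current node of $t$ also carries $g$ and descend into its $n$ children, and at a parameter I simply stop, the subtree sitting there being an arbitrary admissible value of $\sigma$. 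Each label lookup and each $k^{\mathrm{th}}$-child step costs polynomial time using random access to the SLP and the navigation primitives of Theorem~\ref{thm:navi}, and only $\mc O(|s|)$ of them are needed. Hence the guess-and-check procedure runs in nondeterministic polynomial time, and for the concrete pattern $f(x,a)$ it reduces to guessing a position $p$ with $t[p]=f$ and verifying $t[q]=a$ for the second child $q$ of $p$.

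\textbf{\NP-hardness for the pattern $f(x,a)$.} Matching $f(x,a)$ asks exactly whether $t$ contains an $f$-node whose right child is the leaf $a$, equivalently whose right subtree equals the one-node tree $a$. What makes this nontrivial is that whether a given occurrence of $a$ sits as the \emph{second} child of an $f$-node is not a symbol-occurrence property: it depends on the size of the intervening left subtree, i.e.\ on arithmetic encoded in the preorder traversal. I would exploit this by reducing from \textsc{Subset Sum}: given weights $w_1,\dots,w_k$ (in binary) and a target $T$, I construct in polynomial time an SLP $\bb A$ for a tree $t$ such that $t$ has an $f$-node with right child $a$ if and only if some $S\subseteq\{1,\dots,k\}$ satisfies $\sum_{i\in S} w_i = T$.

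The tree $t$ will have $2^{\mc O(k)}$ nodes but a polynomial-size SLP built by repeated doubling, one step per weight: processing $w_i$ replaces the current list of gadgets by two shifted copies, the second padded by a block of size $w_i$ (a string admitting an SLP of size $\mc O(\log w_i)$), so that after $k$ steps one gadget appears for every subset $S$ and the size of its designated left subtree equals $\sum_{i\in S} w_i$ up to a fixed offset. A target block of size $\mc O(\log T)$ is then arranged so that, in the linear preorder order, the right child of a gadget's top $f$-node lands on a distinguished $a$-leaf precisely when its left subtree has the size corresponding to $T$, and on a different leaf otherwise; all remaining $f$-nodes of $t$ are forced to carry non-$a$ right children by construction. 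Thus a match of $f(x,a)$ witnesses exactly a subset hitting $T$. Since \textsc{Subset Sum} is \NP-complete and the construction yields an SLP of size polynomial in $k+\sum_i\lceil\log w_i\rceil$, \NP-hardness follows.

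\textbf{Main obstacle.} The delicate part is the equality gadget: I must guarantee that the target test fires for the \emph{exact} value $T$ and for no other reachable sum, and that no spurious $f$-node elsewhere in $t$ (including the ones created when fragments are completed across the doubling boundaries, in the sense of the $c(\cdot)$ analysis of Theorem~\ref{thm-rec-trav}) accidentally acquires an $a$ as its right child. Making the right-child symbol depend on a subset sum that is not known at construction time is exactly what forces the alignment/coincidence phenomenon underlying Theorem~\ref{theorem:bertoni}; controlling it while keeping the SLP polynomial, and then proving both directions of the equivalence rigorously, is where the real work lies.
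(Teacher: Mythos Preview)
Your \NP\ membership argument is correct and essentially identical to the paper's: guess a root position and verify the match by navigating with Theorem~\ref{thm:navi}.

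The \NP-hardness part, however, has a genuine gap, and it is precisely the one you single out under ``Main obstacle''. You want gadgets $G_S$ whose top $f$-node has a left subtree of size $\sum_{i\in S}w_i$ (up to an offset) and whose right child ``lands on'' a distinguished $a$-leaf in some separately constructed target block exactly when that size equals $T$. But in a ranked tree the right child of an $f$-node is not a position-dependent lookup into an external block: it is the unique subtree that immediately follows the left subtree in the preorder traversal, and it is completely fixed once the tree is fixed. In any construction where the $2^k$ gadgets are genuine subtrees (say, hung below auxiliary $h$-nodes), the right child $r_S$ of each gadget's $f$ is a symbol you must write down when building the SLP; making $r_S=a$ iff $\sum_{i\in S}w_i=T$ would require solving \textsc{Subset Sum} during the reduction. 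If instead the gadgets are fragments stacked in a comb so that all right children come from a common suffix, then the right child of the $j$-th $f$-node depends on its \emph{index} $j$ in the comb, not on the size of its left subtree, so no amount of padding makes the size determine which leaf is hit. Your sketch does not indicate any mechanism that breaks this circularity, and the appeal to the ``alignment phenomenon underlying Theorem~\ref{theorem:bertoni}'' is not a construction.

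The paper sidesteps this difficulty by reducing from a problem that is already about coincidences in SLP-compressed strings: given SLPs $\bb A,\bb B$ over $\{0,1\}$ with $|\val(\bb A)|=|\val(\bb B)|$, decide whether some position $i$ has $\val(\bb A)[i]=\val(\bb B)[i]=1$, which is \NP-complete by \cite[Theorem~3.13]{LoCWP}. One builds (in polynomial time) an SLP for the comb tree $t(\val(\bb A),\val(\bb B))$, whose preorder traversal is $f_{u[1]}\cdots f_{u[m]}\,\$\,v[m]\cdots v[1]$; the $i$-th spine node carries label $f_{u[i]}$ and has right child the leaf $v[i]$, so the fixed pattern $f_1(x,1)$ matches iff some $i$ has $u[i]=v[i]=1$. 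The point is that the required correlation between the $f$-label and the right-child label is supplied directly by the two input SLPs, rather than manufactured by an arithmetic coincidence inside a single tree.
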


\begin{proof}
	The problem is contained in \NP because one can guess a node $i \in \{ 1, \dots, |t| \}$ 
	and verify whether the subtree of $t$ rooted in $i$ matches the pattern $s$. 
	The verification is possible in polynomial time by comparing all relevant symbols using Theorem \ref{thm:navi}.

	By \cite[Theorem 3.13]{LoCWP} it is \NP-complete to decide for given SLPs 
	$\bb A,	\bb B$ over $\{0,1\}$ with $|\val(\bb A)| = |\val(\bb B)|$ whether there exists a position 
	$i$ such that $\val(\bb A)[i] = \val(\bb B)[i] = 1$. 
	This question can be reduced to the pattern matching problem with a fixed pattern. 
	One can compute in polynomial time from $\bb A$ and $\bb B$ an SLP $\bb T$ 
	for the comb tree $t(\val(\bb A),\val(\bb B))$. 
	There exists a position $i$ such that $\val(\bb A)[i] = \val(\bb B)[i] = 1$ 
	if and only if the pattern $f_1(x,1)$ occurs in $t(\val(\bb A),\val(\bb B))$.
\end{proof}

\subsection{Tree evaluation problems} \label{sec-tree-eval}

The algorithmic difficulty of SLP-compressed trees already becomes clear when computing the height.
For TSLPs it is easy to see that the height of the produced tree can be computed in linear time:
Compute bottom-up for each nonterminal the height of the produced tree and the depths of the parameter nodes.
However, this direct approach fails for SLPs since each nonterminal encodes a possibly exponential number of trees.
The crucial observation to solve this problem is that one can store and compute the required information
for each nonterminal in a compressed form.

In the following we present a general framework to define and solve evaluation problems on SLP-compressed trees.
We assign to each alphabet symbol of rank $n$ an $n$-ary operator which defines the value of a tree by evaluating it bottom-up.
This approach includes natural tree problems like computing the height of a tree, evaluating a Boolean expression or
determining whether a fixed tree automaton accepts a given tree.
We only consider operators on $\zs$ but other domains with an appropriate encoding of the elements are also possible.
To be able to consider arbitrary arithmetic expressions properly, it is necessary to allow 
the set of constants of a ranked alphabet $\mc F$ to be infinite, i.e.\ $\mc F_0 \subseteq \zs$.


\begin{definition} \label{def-interpret} 
        Let $\mc D \subseteq \zs$ be a (possibly infinite) domain of integers and let 
	$\mc F$ be a ranked alphabet with $\mc F_0 = \mc D$. An {\em interpretation 
	$\mc I$ of $\mc F$ over $\mc D$} assigns to each function symbol $f \in \mc F_n$ an $n$-ary function $f^{\mc I}: \mc D^n \to \mc D$ with
	the restriction that $a^{\mc I} = a$ for all $a \in \mc D$.
	We lift the definition of $\mc I$ to $\mc T(\mc F)$ inductively by
	\[
		(f \, t_1 \cdots t_n)^{\mc I} = f^{\mc I}(t_1^{\mc I}, \ldots, t_n^{\mc I}),
	\]
	where $f \in \mc F_n$ and $t_1, \dots, t_n \in \mc T(\mc F)$.  
\end{definition}

\begin{definition} The {\em $\mc I$-evaluation problem for SLP-compressed trees} is the following problem:
	Given an SLP $\bb A$ over $\mc F$ with $\val(\bb A) \in \mcT$, compute $\val(\bb A)^{\mc I}$.
\end{definition}

\subsubsection{Reduction to caterpillar trees}

In this section, we reduce the $\mc I$-evaluation problem for SLP-compressed trees 
to the corresponding problem for SLP-compressed caterpillar trees.
 A tree $t \in \mc T(\mc F)$ is called a {\em caterpillar tree} if every node has at most one child which is not a leaf.
Let $s \in \mc F^*$ be an arbitrary string. Then $s^{\mc I} \in \mc F^*$ denotes the unique string 
obtained from $s$ by replacing every maximal substring $t \in  \mcT$ of $s$ by
its value $t^{\mc I}$. By Lemma~\ref{lemma:substring_traversal_string} we can factorize $s$ 
uniquely as
$s = t_1 \cdots t_n u$ where $t_1, \ldots, t_n \in \mcT$ and $u$ is a fragment.
Hence $s^{\mc I} = m_1 \cdots m_n u^{\mc I}$ with $m_1, \ldots, m_n \in \mc D$. 
Since $u$ is a fragment, the string $u^{\mc I}$ is the fragment of a caterpillar tree (briefly,
caterpillar fragment in the following).

\begin{example}
Let $\mc F = \{0,1,2,+, \times\}$ with the standard interpretation on integers 
($+$ and $\times$ are considered as binary operators). Consider $s = 0 \, 2 + 2 + + \times 2 + 2 \, 1 + \times$.
Since $+21$ evaluates to $3$, and $\times 2 3$ evaluates to $6$, we have
 $s^{\mc I} =  0 \, 2 + 2 + +\, 6 + \times$. 
\end{example}
Our reduction to caterpillar trees only works for interpretations that satisfy a certain growth condition.
We say that an interpretation $\mc I$ is {\em polynomially bounded}, if there exist 
constants $\alpha,\beta \geq 0$ such that for every tree $t \in \mcT$ (we denote the absolute value of an integer
by $z$ by $\abs{z}$ instead of $|z|$ in order to get not confused with the size $|t|$ of a tree), 
\[
\abs{t^{\mc I}} \le \left( \beta \cdot |t| + \sum_{i \in L} \abs{t[i]} \right)^\alpha
\] 
where $L\subseteq \{1,\ldots,|t|\}$ is the set of leaves of $t$.
The purpose of this definition is to ensure that for every SLP $\bb A$ with $\val(\bb A) \in \mcT$, 
both the length of the binary encoding of $\val(\bb A)^{\mc I}$
and the integer constants that appear in $\bb A$
are polynomially bounded in $|\bb A|$.
\begin{theorem}
	\label{thm:main-reduction}
	Let $\mc I$ be a polynomially bounded interpretation.
	Then the $\mc I$-evaluation problem for SLP-compressed trees is polynomial time 
	Turing-reducible to the $\mc I$-evaluation problem for SLP-compressed caterpillar trees.
\end{theorem}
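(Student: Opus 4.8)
The plan is to process a Chomsky normal form SLP $\bb A = (N,\mc F,P,S)$ bottom-up, maintaining for every nonterminal $A$ an SLP $\bb B_A$ for the \emph{evaluated} string $\val(A)^{\mc I}$, i.e.\ the string obtained from $\val(A)$ by collapsing every maximal complete subtree to its $\mc I$-value. Writing $c(A)=(a_1,a_2)$, which is computable for all $A$ by the recurrence of Theorem~\ref{thm-rec-trav}, Lemma~\ref{lemma:substring_traversal_string} gives $\val(A)^{\mc I} = m_1\cdots m_{a_1}\,\sigma_A$, where $m_1,\dots,m_{a_1}\in\mc D$ are single symbols and $\sigma_A = \val(A)^{\mc I}[a_1+1:]$ is a caterpillar fragment with $a_2$ gaps; since $\val(\bb A)^{\mc I}=\val(S)^{\mc I}$ is a single value, producing $\bb B_S$ solves the problem. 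For a terminal rule $A\to f$ the SLP $\bb B_A$ is trivial ($f$ is a constant or a one-node caterpillar fragment). For $A\to BC$ I would first invoke the compositionality identity $(\val(B)\val(C))^{\mc I}=(\val(B)^{\mc I}\val(C)^{\mc I})^{\mc I}$, valid because $\mc I$ is defined bottom-up. Hence it suffices to evaluate the subtrees that are newly completed when the caterpillar fragment $\sigma_B$ (with $b_2$ gaps) is concatenated with $\val(C)^{\mc I}=\nu_1\cdots\nu_{c_1}\tau_C$: the leading values $m_1,\dots,m_{b_1}$ of $\val(B)^{\mc I}$ and the trailing fragment $\tau_C$ never interact, and the only question is which subtrees close up when the gaps of $\sigma_B$ are filled left to right by the leaves $\nu_1,\nu_2,\dots$.

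The key structural observation, and the reason a caterpillar oracle suffices, is that \emph{at most one} new maximal complete subtree is created per step, and it is itself a caterpillar. In any fragment the incomplete nodes are exactly those on the preorder stack, i.e.\ a single root-to-frontier path; in $\sigma_B$ this path is the spine $v_1,\dots,v_k$ (root $v_1$, deepest $v_k$), and every off-spine node is a leaf. The gaps of $\sigma_B$, listed in fill order, are the remaining children of $v_k$, then those of $v_{k-1}$, and so on up to $v_1$, so completions propagate strictly upward: after filling the first $\min\{b_2,c_1\}$ gaps the completed spine nodes form a bottom segment $\{v_j,\dots,v_k\}$, whose topmost member $v_j$ roots the unique maximal complete subtree $T_j$. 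Filling the gaps of the caterpillar fragment below $v_j$ with leaves turns $T_j$ into a caterpillar \emph{tree}, and after collapsing $T_j$ to its value $w$ and attaching $\tau_C$ in the next gap (of $v_{j-1}$) the remaining string is again of the invariant form ``values followed by a caterpillar fragment.''

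Concretely I would locate $T_j$ as follows. Using the cut operation (Lemma~\ref{lemma:slp_folklore}(\ref{lemma:cut})) together with the $c$-recurrence I can compute $c(\sigma_B[p:])=(n_p,g_p)$ for any position $p$ from $\bb B_B$. The second component $g_p$ is non-increasing in $p$ (it decreases by the extra gaps of a spine node each time $p$ passes that node), so a binary search finds the least $p^{\ast}$ with $g_{p^{\ast}}\le c_1$; if none exists, no subtree closes and $\bb B_A$ is just the concatenation of $\bb B_B$ and $\bb B_C$. Otherwise, skipping the $n^{\ast}:=n_{p^{\ast}}$ leading complete leaves yields $p_j=p^{\ast}+n^{\ast}$ and $G_j=g_{p^{\ast}}\le c_1$, so $T_j=\sigma_B[p_j:]\,\nu_1\cdots\nu_{G_j}$. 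I build an SLP for this caterpillar tree by cutting $\sigma_B[p_j:]$ out of $\bb B_B$ and $\val(C)^{\mc I}[:G_j]$ out of $\bb B_C$ and concatenating, call the caterpillar oracle to get $w=T_j^{\mc I}$, and set
\[
\val(A)^{\mc I} = \val(B)^{\mc I}[:b_1+p_j-1]\; w\; \val(C)^{\mc I}[G_j+1:],
\]
assembled again by cuts and concatenation. The special case $b_2\le c_1$ gives $p_j=1$, $G_j=b_2$ (all of $\sigma_B$ collapses), and $b_2=0$ degenerates to pure concatenation.

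For the wrap-up I would verify that the reduction stays polynomial: there are $|N|$ inductive steps, each using $\mc O(\log N)$ binary-search probes and a constant number of cut/concatenation operations (each of polynomial size by Lemma~\ref{lemma:slp_folklore}) plus at most one oracle call. Polynomial-boundedness of $\mc I$ enters exactly here: it guarantees that every value appearing — the constants in $\bb A$, the leaves $m_i,\nu_i$, and every oracle output $w=T_j^{\mc I}$ — has bit-length polynomial in $|\bb A|$, because $\abs{T_j^{\mc I}}\le(\beta\,|T_j|+\cdots)^{\alpha}$ with $|T_j|\le N=2^{\mc O(|\bb A|)}$, so all intermediate SLPs and all oracle inputs have polynomial size. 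The main obstacle is the structural claim that a single caterpillar subtree is completed per step: for a general (non-caterpillar) fragment the gaps sit at incomparable nodes, and filling them can complete many nested, non-caterpillar subtrees, which would ruin both the single-oracle-call bound and the caterpillar shape of the oracle inputs — this is precisely what the reduction to caterpillars is designed to prevent. The secondary difficulty, extracting $T_j$ in polynomial time despite $\val(B)^{\mc I}$ being exponentially long, is resolved by the monotonicity of $g_p$ via binary search.
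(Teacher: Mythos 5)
Your construction is essentially the paper's: the same invariant (for every nonterminal $A$, the evaluated string $\val(A)^{\mc I}$ is a sequence of values followed by a caterpillar fragment), the same key structural fact (the gaps of a fragment are filled bottom-up along the spine of incomplete nodes, so a production $A \to BC$ completes at most one new maximal subtree, which is itself a caterpillar), at most one oracle call per production, and the same use of polynomial boundedness to keep every value that appears down to polynomially many bits. Your way of locating the new subtree --- binary search on the monotone gap count $g_p$ of suffixes, computed via the recurrence of Theorem~\ref{thm-rec-trav} on cut SLPs --- is a sound alternative to the paper's method, which finds the parent of the first unfilled gap, the first symbol of rank $>0$ after it, and the size of the subtree rooted there, all via Theorem~\ref{thm:navi}; both are polynomial per step, and your correctness analysis of the completion order along the spine is right.

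The genuine gap is in the size bookkeeping across the $|N|$ inductive steps. You maintain a separate SLP $\bb B_A$ per nonterminal and build it from $\bb B_B$, $\bb B_C$ by cuts and concatenation, asserting each step is polynomial ``by Lemma~\ref{lemma:slp_folklore}''. But Lemma~\ref{lemma:slp_folklore}(\ref{lemma:cut}) only bounds a \emph{single} cut by $\mc O(|\bb B|)$, with a multiplicative constant larger than $1$, and if $\bb B_A$ physically contains (cut copies of) $\bb B_B$ and $\bb B_C$, the sizes compound over nested steps: already $|\bb B_A| \geq |\bb B_B| + |\bb B_C|$ forces size $2^{\Omega(|N|)}$ as soon as the derivation dag reuses nonterminals (rules $A \to BB$), and even without reuse the $\mc O$-constants multiply across $|N|$ levels. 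The paper spends the first part of its proof on exactly this point: it works with \emph{composition systems}, i.e.\ one single grammar containing all the nonterminals $A_1, A_2$, whose rules may contain symbolic cut expressions $B[i:j]$ of constant size, so the whole system stays of polynomial size; Hagenah's theorem \cite{Hag00} then converts it in polynomial time into an ordinary SLP whenever one is needed, e.g.\ as input to an oracle call. Your argument needs this device --- or an explicit sharing argument in which all $\bb B_A$ live in a single grammar and each cut adds only $\mc O(\mathrm{depth})$ fresh rules that reference existing nonterminals --- for the polynomial-time claim to go through; as written, it does not follow.
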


\begin{proof}
	In the proof we use an extension of SLPs by the cut-operator, called {\em composition systems}. A {\em composition system} 
	$\bb A = (N,\Sigma,P,S)$ is an SLP where $P$ may also contain rules of the form $A \to B[i:j]$ where $A,B \in N$ and $i,j \ge 0$. 
	Here we let $\val(A) = \val(B)[i:j]$.
	It is known \cite{Hag00} (see also \cite{LoCWP}) that a given composition system can be transformed in polynomial time into an SLP with the same value.
	One can also allow mixed rules $A \to X_1 \cdots X_n$ where each $X_i$ is either a terminal, 
	a nonterminal or an expression of the form $B[i : j]$, which clearly
	can be eliminated in polynomial time.

	Let $\bb A = (N,\mc F,P,S)$ be the input SLP in Chomsky normal form.
	We use the notation $c(A) = c(\val(\bb A))$ as in the proof of Theorem~\ref{thm-rec-trav}.
	We will compute a composition system where for each nonterminal 
	$A \in N$ there are nonterminals $A_1$ and $A_2$ in the composition system such that
	the following holds: Assume that $\val(A) = t_1 \cdots t_n \, s$, where $t_1, \ldots, t_n \in \mcT$ and 
	$s$ is a fragment. Hence, $c(A) = (n, \gaps{s})$.
 	Then we will have
	\begin{itemize}
		\item $\val(A_1) = t_1^{\mc I} \cdots t_n^{\mc I} \in \mc D^*$, and
		\item $\val(A_2) = s^{\mc I}$.
	\end{itemize}
	In particular, $\val(A_1) \val(A_2) = \val(A)^{\mc I}$ and 
	$\val(\bb A)^{\mc I}$ is given by the single number in $\val(S_1)$.
	
	The computation is straightforward for rules of the form $A \to f$ with $A \in N$ and $f \in \mc F$:
	If $\rk(f) = 0$, then $\val(A_1) = f$ and $\val(A_2) = \varepsilon$.
	If $\rk(f) > 0$, then $\val(A_1) = \varepsilon$ and $\val(A_2) = f$.
	
	For a nonterminal $A \in N$ with the rule $A \to BC$ we make a case distinction depending on $c(B) = (b_1,b_2)$ and $c(C) = (c_1,c_2)$.

	\medskip
	\noindent
	{\em Case}
	$b_2 \le c_1$: Then concatenating $\val(B)$ and $\val(C)$ yields a new tree $t_{\mathrm{new}}$ (or $\varepsilon$ if $b_2 = 0$) in $\val(A)$.
	Note that $t_\mathrm{new}^{\mc I}$ is the value of the tree $\val(B_2)\, \val(C_1)[1 : b_2]$.
	Hence we can compute $t_\mathrm{new}^{\mc I}$ in polynomial time by computing an SLP that produces $\val(B_2) \, \val(C_1)[1 : b_2]$
	and querying the oracle for caterpillar trees. We add the following rules to the composition system:
	\begin{align*}
	A_1 &\to B_1 \, t_{\mathrm{new}}^{\mc I} \, C_1[b_2 + 1 : c_1] \\
	A_2 &\to C_2
	\end{align*}
        
	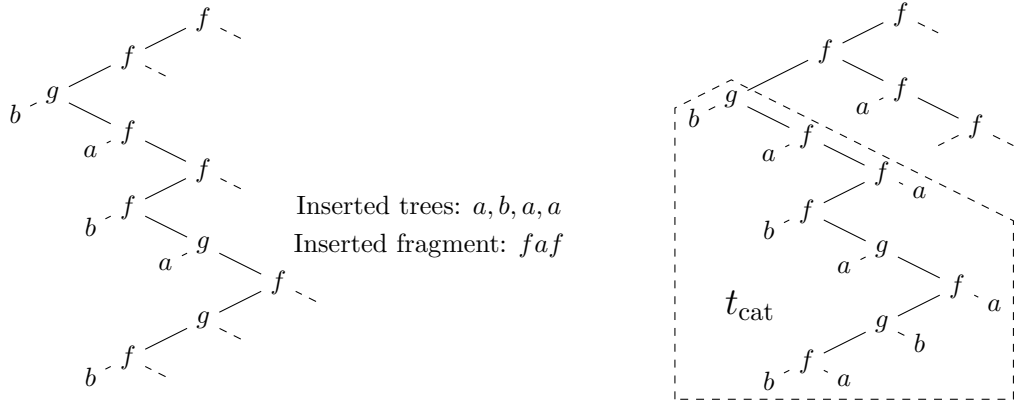
\begin{figure}[t] 
	\centering
	
	\begin{minipage}[hbt]{0.4\textwidth} 	
	\begin{tikzpicture}[scale=1.0]
	 \centering
	
	 \node (n1)  at (1.5,0.5){$f$};
	 \node (n2)  at (2.5,1)  {$g$};
	 \node (n3)  at (3.5,1.5){$f$};
	 \node (n4)  at (2.5,2)  {$g$};
	 \node (n5)  at (1.5,2.5){$f$};
	 \node (n6)  at (2.5,3)  {$f$};
	 \node (n7)  at (1.5,3.5){$f$};
	 \node (n8)  at (0.5,4)  {$g$};
	 \node (n9)  at (1.5,4.5){$f$};
	 \node (n10) at (2.5,5)  {$f$};
	 
	 \draw (n1) -- (n2) -- (n3) -- (n4) -- (n5) 
	    -- (n6) -- (n7) -- (n8) -- (n9) -- (n10);
	
	 \node (a1)  at (1,0.25){$b$};
	 \node (a4)  at (2,1.75){$a$};
	 \node (a5)  at (1,2.25){$b$}; 
	 \node (a7)  at (1,3.25){$a$};
	 \node (a8)  at (0,3.75){$b$};
	 
	 \foreach \a/\n in{
	  {(a1)/(n1)},{(a4)/(n4)},{(a5)/(n5)},{(a7)/(n7)},{(a8)/(n8)}}
	 {
	  \draw \a -- \n;
	 };
	 
	  \foreach \x/ \y/ \node in{
	 {2 / 0.25/ (n1)},
	 {3 / 0.75/ (n2)}, {4 / 1.25/ (n3)},
	 {3 / 2.75/ (n6)}, {2 / 4.25/ (n9)},
	 {3 / 4.75/ (n10)}}
	 {
	  \draw[dashed] (\x,\y) -- \node;
	  };
	 
	 \node (text) at (5.5,2.5) {Inserted trees: $a,b,a,a$};
	 
	 \node (text2) at (5.5,2) {Inserted fragment: $faf$};
	 
	 \end{tikzpicture}
	\end{minipage}
	\hfill
	\begin{minipage}[hbt]{0.4\textwidth} 
	\begin{tikzpicture}[scale=1.0]
	 \centering
	
	 \node (n1)  at (1.5,0.5){$f$};
	 \node (n2)  at (2.5,1)  {$g$};
	 \node (n3)  at (3.5,1.5){$f$};
	 \node (n4)  at (2.5,2)  {$g$};
	 \node (n5)  at (1.5,2.5){$f$};
	 \node (n6)  at (2.5,3)  {$f$};
	 \node (n7)  at (1.5,3.5){$f$};
	 \node (n8)  at (0.5,4)  {$g$};
	 \node (n9)  at (1.75,4.625){$f$};
	 \node (n10) at (2.75,5.125)  {$f$};
	 
	 \draw (n1) -- (n2) -- (n3) -- (n4) -- (n5) 
	    -- (n6) -- (n7) -- (n8) -- (n9) -- (n10);
	
	 \node (a1)  at (1,0.25){$b$};
	 \node (a4)  at (2,1.75){$a$};
	 \node (a5)  at (1,2.25){$b$}; 
	 \node (a7)  at (1,3.25){$a$};
	 \node (a8)  at (0,3.75){$b$};
	 
	 \node (a1r) at (2,0.25){$a$};
	 \node (a2)  at (3,0.75){$b$};
	 \node (a3)  at (4,1.25){$a$};
	 \node (a6)  at (3,2.75){$a$};

	 \foreach \a/\n in{
	  {(a1)/(n1)},{(a4)/(n4)},{(a5)/(n5)},{(a7)/(n7)},{(a8)/(n8)},
	  {(a1r)/(n1)},{(a2)/(n2)},{(a3)/(n3)}, {(a6)/(n6)}}
	 {
	  \draw \a -- \n;
	 };
	 
	 \node (f1) at (2.75,4.125){$f$};
	 \node (f2) at (2.25,3.875) {$a$};
	 \node (f3) at (3.75,3.625){$f$};
	 
	 \draw (n9) -- (f1)-- (f2);
	 \draw (f1) -- (f3);
	 \draw[dashed] (f3) -- (3.25,3.375);
	 \draw[dashed] (f3) -- (4.25,3.375);

	 \draw[dashed] (n10) -- (3.25,4.875);
	  
	 \draw[dashed] (-0.25,0) -- (-0.25,3.875) -- (0.5,4.25) -- (4.25,2.375)
	     -- (4.25,0)  -- (-0.25,0);
	  
	 \node (text) at (0.75,1.25) {\Large $t_{\mathrm{cat}}$};
	  
	 \end{tikzpicture}
	\end{minipage}
	 \caption{An example for case 2 in the proof of Theorem~\ref{thm:main-reduction}.
	 In the left fragment we insert the trees $a,b,a,a$ and the fragment $faf$.
	 The latter yield, together with a part of the fragment, a new tree $t_{\mathrm{cat}}$.
	 }
	 \label{fig:caterpillar}
	\end{figure}

	\noindent
	{\em Case}
	$b_2 > c_1$: Then all trees and the fragment produced by $C$ are inserted into the gaps of the fragment encoded by $B$.
	If $c_1 = 0$ (i.e., $\val(C_1) = \varepsilon$), then we add the productions $A_1 \to B_1$ and $A_2 \to B_2 C_2$. Now assume 
	that $c_1 > 0$.
	Consider the fragment
	\[
	 s = \val(B_2) \, \val(C_1) \, \val(C_2).
	\]
	Intuitively, this fragment $s$ is obtained
	by taking the caterpillar fragment  $\val(B_2)$, where the first $c_1$ many gaps are replaced by the constants
	from the sequence $\val(C_1)$ and the $(c_1+1)^{\text{st}}$ gap is replaced by the caterpillar fragment $\val(C_2)$, see Figure~\ref{fig:caterpillar}.
	If $s$ is not already a caterpillar fragment, then we have to replace the (unique) largest factor of 
	$s$ which belongs to $\mcT$  by its value under $\mc I$ to get $s^{\mc I}$.	
	To do so we proceed as follows:
	Consider the tree $t' = \val(B_2) \, \val(C_1) \, \diamond^{b_2-c_1}$,
	where $\diamond$ is an arbitrary symbol of rank 0, and let $r = |\val(B_2)|+c_1+1$ (the position
	of the first $\diamond$ in $t'$).
	Let $q$ be the parent node of $r$, which can be computed in polynomial time by Theorem~\ref{thm:navi}.
	Using Lemma~\ref{lemma:firstpos} we compute the position $p$ of the first occurrence of a symbol in $t'[q+1:]$ with rank $> 0$.
	If no such symbol exists, then $s$ is already a caterpillar fragment and
	we add the rules $A_1 \to B_1$ and $A_2 \to B_2 C_1 C_2$ to the composition system.
	Otherwise $p$ is the first symbol of the largest factor from $\mc T(\mc F)$ described above. Using Theorem~\ref{thm:navi}(d),
	we can compute in polynomial time the last position $p'$ of the subtree of $t'$ that is rooted in $p$. 
	Note that the position $p$ must belong to $\val(B_2)$ and that $p'$ must belong to $\val(C_1)$ (since 
	$c_1 > 0$). The string $t_{\mathrm{cat}} = (\val(B_2)\, \val(C_1))[p:p']$ is a caterpillar tree  for which we can compute an SLP in polynomial
	time by the above remark on composition systems. Hence, using the oracle we can compute the value  $t_{\mathrm{cat}}^{\mc I}$.	
	We then add the rules 
	\begin{eqnarray*}
         A_1 & \to & B_1, \\
         A'    & \to & B_2 C_1, \text{ and } \\
         A_2    & \to & A'[:p-1]  \, t_{\mathrm{cat}}^{\mc I} \, A'[p'+1:] \, C_2
        \end{eqnarray*}
	to the composition system.
	This completes the proof.
\end{proof}

\subsubsection{Polynomial time solvable evaluation problems}

Next, we present several applications of Theorem~\ref{thm:main-reduction}.
We start with the height of a tree.

\begin{theorem} \label{thm-height}
	The height of a tree $t \in \mc T(\mc F)$ given by an SLP $\bb A $ is computable in polynomial time.
\end{theorem}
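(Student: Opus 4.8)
The plan is to realise the height as an instance of the $\mc I$-evaluation problem and then to invoke Theorem~\ref{thm:main-reduction}. First I would choose the domain $\mc D = \ns \subseteq \zs$ and the interpretation $\mc I$ given by $f^{\mc I}(z_1,\dots,z_n) = 1 + \max(z_1,\dots,z_n)$ for every $f \in \mc F_n$ with $n \ge 1$, and $a^{\mc I} = a$ for every constant $a \in \mc D$. In order that $\val(\bb A)^{\mc I}$ actually compute the height, I would first rewrite the SLP $\bb A$, replacing every rank-$0$ terminal in a right-hand side by the constant $0$; this takes time $\mc O(|\bb A|)$, leaves the shape of the produced tree unchanged, and yields a tree $t'$ all of whose leaves equal $0$. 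A straightforward induction unfolding $1+\max$ then gives $(t')^{\mc I} = \height{t'} = \height{t}$.

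Next I would check that $\mc I$ is polynomially bounded. As every leaf of $t'$ equals $0$, the term $\sum_{i \in L}\abs{t'[i]}$ vanishes, and since the height of any tree is at most its number of nodes we get $\abs{(t')^{\mc I}} = \height{t'} \le |t'|$, so the required inequality holds with $\alpha = \beta = 1$. By Theorem~\ref{thm:main-reduction} it then suffices to solve the $\mc I$-evaluation problem for SLP-compressed \emph{caterpillar} trees in polynomial time, and this caterpillar case is where I expect the real work to lie.

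For a caterpillar tree the non-leaf nodes form a single path, the spine, so the ancestors of any leaf are a prefix of that path and its depth equals the number of those ancestors. The leaves now carry arbitrary values from $\mc D$ (the heights of already-evaluated subtrees), so the value is no longer merely the number of spine nodes; unfolding $1+\max$ along the spine shows that the value of the whole caterpillar equals the maximum, over all its leaves, of the quantity (depth of the leaf) $+$ (value of the leaf). To take this maximum over a spine that may be exponentially long, I would run a bottom-up pass over the SLP of the caterpillar traversal and attach to every nonterminal $A$, besides the pair $c(A)$ from Theorem~\ref{thm-rec-trav}, a compact summary recording (i) the profile of gap depths of the fragment part of $\val(A)$, and (ii) enough information about the completed trees of $\val(A)$ to recover $\max_j(D_j + \nu_j)$ once the depths $D_j$ at which these trees will be inserted become known, where $\nu_j$ is the value of the $j$-th completed tree. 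The rule $A \to BC$ combines the two summaries by placing the completed trees and the fragment of $C$ into the gaps of the fragment of $B$ and offsetting all depths of $C$ accordingly, following exactly the case distinction ($b_2 \le c_1$ vs.\ $b_2 > c_1$) already used in Theorems~\ref{thm-rec-trav} and~\ref{theorem:TraversalToTree}. The height is finally read off from the summary of the start nonterminal.

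The delicate point, and the main obstacle, is to keep this summary of genuinely polynomial size while still supporting the offset operation: a fragment may have many gaps spread over many depths, and a single rule may insert many trees, so one cannot afford to store per-gap or per-tree data explicitly. The structural fact I would exploit is that the gap depths of any fragment are non-increasing in fill order, so that the completed trees of $C$ are always inserted into the \emph{deepest} gaps of $B$ and same-depth gaps collapse into a single maximum; this, together with the guarantee from polynomial boundedness that every depth and value has polynomially many bits, should make the summaries and the per-rule combination polynomial, and hence the whole algorithm run in polynomial time.
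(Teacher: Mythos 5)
Your first half coincides with the paper's own proof: replace every rank-zero symbol by the constant $0$, interpret each $f \in \mc F_n$ ($n>0$) as $1+\max$, check polynomial boundedness, and invoke Theorem~\ref{thm:main-reduction} to reduce to SLP-compressed caterpillar trees; your identity that a caterpillar's value equals the maximum over its leaves of (depth of the leaf) $+$ (value of the leaf) is also correct. The gap is exactly where you place it, and the two facts you propose to close it with do not suffice. A caterpillar fragment can have exponentially many pairwise distinct gap depths: for a binary symbol $f$, the fragment $f^k$ has $k+1$ gaps at depths $k,k,k-1,\dots,1$, and $k$ can be $2^{\Omega(n)}$ for an SLP of size $n$. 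Hence ``same-depth gaps collapse into a single maximum'' still leaves exponentially many classes, and the polynomial bit-length of each individual depth says nothing about how many distinct depths occur. The bottom-up maintenance also breaks down: for a rule $A \to BC$ the summary of $A$ must contain an additively offset copy of the gap profile of $C$ (offset by the depth of the gap of $B$ into which $C$'s fragment falls) concatenated with a truncated profile of $B$; the two occurrences of a shared nonterminal receive different offsets, so these copies cannot be shared, their sizes add, and already $A_i \to A_{i-1}A_{i-1}$ forces exponential growth even for run-length or progression-compressed profiles. So ``should make the summaries and the per-rule combination polynomial'' is precisely the unproven step, and as justified it is false. (A minor further slip: polynomial boundedness is a property of $\mc I$ quantified over \emph{all} trees, not just your $t'$ with zero leaves --- the caterpillars handed to the oracle carry arbitrary integers at their leaves; for $1+\max$ it of course still holds.)

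The paper avoids per-nonterminal summaries entirely through an observation specific to $1+\max$ that is absent from your proposal: among all leaves of the caterpillar labelled with the same constant $d$, only the \emph{deepest} one can influence the value, because at any shallower occurrence the competing spine value is already at least $d+1$ and values only grow going up the spine. Since at most $|\bb A|$ distinct constants occur, this leaves polynomially many relevant checkpoints $v_{d_1} < \dots < v_{d_m}$, where $v_d$ is the deepest spine node with a $d$-labelled child; each $v_d$ is found from the last (resp., first) occurrence of $d$ before (resp., after) the last rank-positive position, using Lemma~\ref{lemma:slp_folklore}, followed by a parent query from Theorem~\ref{thm:navi}. Writing $t_i$ for the subtree rooted at $v_{d_i}$, every spine node strictly between consecutive checkpoints contributes exactly $+1$, so with $b$ the number of symbols of rank at least one in that interval (Lemma~\ref{lemma:slp_folklore} again) one gets $t_i^{\mc I} = \max\{t_{i+1}^{\mc I}+b+1,\, d_i+1\}$, and the height after $m \leq |\bb A|$ such steps --- no gap profiles, no offsets, no per-nonterminal dynamic programming. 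Your route could in principle be rescued, but only by importing this same value-collapsing idea (so that only first occurrences of at most $|\bb A|$ distinct constants ever need to be paired with gap depths) together with an implicit, lazily offset representation of gap profiles supporting random access; none of that machinery is supplied by the structural facts you cite.
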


\begin{proof}
        We can assume that $t$ is not a single constant. We replace every symbol in $\mc F_0$
        by the integer $0$. Then,
	the height of $t$ is given by its value under the interpretation $\mc I$
	with $f^{\mc I}(a_1, \dots, a_n) = 1 + \max\{a_1, \dots, a_n\}$
	for symbols $f \in \mc F_n$ with $n >0$. 
	Clearly, $\mc I$ is polynomially bounded. 
	By Theorem~\ref{thm:main-reduction} it is enough to show how to evaluate a caterpillar tree $t$ given by an SLP $\bb A$ 
	in polynomial time under the interpretation $\mc I$. But note that in this caterpillar tree, arbitrary natural numbers may occur at leaf positions.
	
	Let $\mc D_t = \{ d \in \ns \mid d\text{ labels a leaf of } t \}$. The size of this set is bounded
	by  $|\bb A|$.
	For $d\in \mc D_t$ let $v_d$ 
	be the largest (i.e., deepest) node 
	such that $d$ is the label of a child of node $v_d$ (in particular, $v_d$ is not a leaf).
	Let us first argue that $v_d$ can be computed in polynomial time.
	
	Let $k$ be the maximal position in $t$
	where a symbol of rank larger than zero occurs. 
	The number $k$ is computable in polynomial time by 
	Lemma~\ref{lemma:slp_folklore} (point~\ref{lemma:number-of-occ} and~\ref{lemma:firstpos}). 
	Again using Lemma~\ref{lemma:slp_folklore}
	we compute  the position of $d$'s last (resp., first) occurrence in $t[:k]$ (resp., $t[k+1:]$). 
	Then using Theorem~\ref{thm:navi} we compute the parent nodes of those two nodes in 
	$t$ and take the maximum (i.e., the deeper one) of both.
	This node is $v_d$.
	
	 Assume that $\mc D_t = \{ d_1, \ldots, d_m\}$, where w.l.o.g. $v_{d_1} < v_{d_2} < \cdots < v_{d_m}$
	 (if $v_{d_i} = v_{d_j}$ for $d_i < d_j$, then we simply ignore $d_i$ in the following consideration). Note 
	 that $v_{d_m}$ is the maximal position in $t$
	 where a symbol of rank larger than zero occurs (called $k$ above).
	 Let $t_i$ be the subtree rooted at $v_{d_i}$. Then $t_m^{\mc I} = d_m+1$. 
	 We now claim that from the value $t_{i+1}^{\mc I}$ we can compute in polynomial time the 
         value $t_i^{\mc I}$. The crucial point is that we can ignore all constants that appear in the interval
         $[v_{d_{i}}+1, v_{d_{i+1}}-1]$ except for $d_i$. More precisely,
         assume that $a = t_{i+1}^{\mc I}$ and let $b$ be the number of occurrences
         of symbols of rank at least one in the interval $[v_{d_{i}}+1, v_{d_{i+1}}-1]$. Also this number can be computed
         in polynomial time by Lemma~\ref{lemma:slp_folklore}. Then the value of $t_i^{\mc I}$ is $\max\{ a+b+1, d_i+1\}$.
         Finally, using the same argument, we can compute $t^{\mc I}$ from $t_1^{\mc I}$.
\end{proof}

\begin{corollary}
Given an SLP $\bb A$  for a tree $t$ and a node $1 \leq i \leq |t|$ one can compute the depth of $i$ in $t$
in polynomial time.
\end{corollary}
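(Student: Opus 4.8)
The plan is to reduce the computation of $\depth{i}{t}$ to a single height computation via Theorem~\ref{thm-height}, applied to a modified tree in which the root-to-$i$ path is made artificially long so that it dominates every other root-to-leaf path, and the answer can then be read off from the height of the modified tree.

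First I would isolate node $i$. Using Theorem~\ref{thm:navi}(d) I compute the size $s$ of the subtree rooted at $i$, so that this subtree occupies exactly the positions $[i,p']$ with $p' = i + s - 1$. Let $N = |t|$, which is computable from $\bb A$ in polynomial time (a bottom-up length computation yields an $\mc O(|\bb A|)$-bit number), and let $V := N$, realized as an integer leaf. Applying the homomorphism that replaces every original rank-$0$ symbol by the integer $0$ to the right-hand sides of $\bb A$ (a length-preserving, hence harmless, substitution) and then cutting and reassembling with Lemma~\ref{lemma:slp_folklore}(\ref{lemma:cut}), I obtain in polynomial time an SLP $\bb B$ for the string $w = t[1:i-1]\,V\,t[p'+1:N]$ in which all leaves are $0$ except the single inserted leaf $V$ at position $i$. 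Since $w$ arises from $t$ by replacing the whole subtree at $i$ by one leaf and relabelling the remaining leaves, $w$ is again a tree in $\mcT$.

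Next I would evaluate $w$ under the height interpretation $\mc I$, namely $f^{\mc I}(a_1,\dots,a_n) = 1 + \max\{a_1,\dots,a_n\}$ for $n \ge 1$ and $a^{\mc I}=a$ for integer leaves. This interpretation is polynomially bounded (its value never exceeds the largest leaf value plus the height, i.e.\ at most $V + N$), so by Theorem~\ref{thm:main-reduction}, together with the caterpillar-evaluation argument from the proof of Theorem~\ref{thm-height} (which already handles arbitrary integer values at leaves), the value $w^{\mc I}$ is computable in polynomial time. The key claim, which I would verify by induction along the path from $i$ to the root, is that $w^{\mc I} = V + \depth{i}{t}$: since every leaf other than the inserted one evaluates to $0$, each subtree not containing the inserted leaf has value equal to its height, which is at most $N-1$; along the ancestors of $i$, by contrast, the value starts at $V$ at node $i$ and grows by one per level, reaching $V + \depth{i}{t}$ at the root. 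Hence $\depth{i}{t} = w^{\mc I} - V$, and the case $i=1$ (where $w = V$ and the depth is $0$) is covered as well.

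The main point to get right is precisely this domination argument: because $V = N$ strictly exceeds $N-1$, at every ancestor $u$ of $i$ the child lying on the path toward $i$ has value strictly larger than every sibling subtree, so the outer maximum is always attained on the side of node $i$ and no competing path can corrupt the count. The delicate balance is that $V$ must be large enough to beat every competing root-to-leaf path yet small enough (only $\mc O(|\bb A|)$ bits) to keep $\bb B$ and the subsequent evaluation of polynomial size; the choice $V = N$ achieves both.
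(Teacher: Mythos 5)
Your proposal is correct and follows essentially the same route as the paper: the paper likewise replaces the subtree rooted at node $i$ by a structure of ``weight'' $|t|$ so that the root-to-$i$ path dominates every other path, and then reads off the depth as the height of the modified tree minus $|t|$. The only difference is cosmetic --- the paper splices in a unary chain $h^{|t|}a$ (so that Theorem~\ref{thm-height} applies literally, as a black box, to an ordinary tree with uninterpreted leaves), whereas you splice in a single integer leaf of value $N$ and invoke the evaluation machinery with nonzero leaf values from the proof of Theorem~\ref{thm-height}; both versions rest on exactly the same domination argument and the same polynomial-time SLP cut-and-reassemble steps.
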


\begin{proof}
We can write $t$ as $t = uvw$, where $|u| = i-1$ and $v$ is the subtree of $t$ rooted at node $i$. We can compute
in polynomial time $|v|$ by Theorem~\ref{thm:navi}. This allows to compute in polynomial an SLP for the tree $u h^{|t|} a w$. Here, $h$ has rank one
and $a$ has rank zero. Then the depth of $i$ in $t$ is $\mathrm{height}(u h^{|t|} a w) - |t|$.
\end{proof}
An interesting parameter of a tree $t$ is its  \emph{Horton-Strahler number} or \emph{Strahler number}, 
see \cite{DBLP:conf/lata/EsparzaLS14} for a recent survey.
It can be defined as the value $t^{\mc I}$ under the 
interpretation $\mc I$ over $\ns$ which interprets constant symbols
$a \in \mc F_0$ by $a^{\mc I} = 0$ and each symbol $f \in \mc F_n$ with $n>0$ as follows:
Let $a_1, \dots, a_n \in \ns$ and $a = \max\{a_1, \dots, a_n\}$. We set
$f^{\mc I}(a_1, \dots, a_n) = a$ if exactly one of $a_1, \dots, a_n$ is equal to $a$,
and otherwise $f^{\mc I}(a_1, \dots, a_n) = a + 1$.
The Strahler number was first defined in hydrology, but also has 
many applications in computer science \cite{DBLP:conf/lata/EsparzaLS14} , e.g.\ to calculate 
the minimum number of registers required to evaluate an arithmetic expression~\cite{DBLP:journals/tcs/FlajoletRV79}.

\begin{theorem}
 Given an SLP $\bb A$ for a tree $t$, one can compute the Strahler number of $t$ in polynomial time.
\end{theorem}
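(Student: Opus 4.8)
The plan is to invoke the reduction of Theorem~\ref{thm:main-reduction}, so that it suffices to evaluate SLP-compressed \emph{caterpillar} trees under the Strahler interpretation $\mc I$. First I would check that $\mc I$ is polynomially bounded: since all constants are interpreted by $0$ and the Strahler number of a tree is at most $\log_2$ of its number of leaves, we have $\abs{t^{\mc I}} \le \log_2 |t| \le |t|$, so the defining inequality holds with $\alpha = \beta = 1$. In particular $t^{\mc I} = \mc O(\log |t|) = \mc O(|\bb A|)$ for any SLP $\bb A$ producing $t$, which keeps all intermediate numbers polynomially bounded.

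It then remains to evaluate a caterpillar tree $t = \val(\bb A)$ under $\mc I$. Let $u_1 < u_2 < \cdots < u_N$ be the preorder positions of the internal nodes; in a caterpillar these form the spine, with $u_{k+1}$ the unique non-leaf child of $u_k$, and every leaf a child of some $u_k$. Writing $b_k$ for the Strahler number of the subtree rooted at $u_k$, the recurrence going \emph{up} the spine is: if the maximal leaf-child value $\mu_k$ of $u_k$ satisfies $\mu_k < b_{k+1}$ then $b_k = b_{k+1}$; if $\mu_k = b_{k+1}$ then $b_k = b_{k+1}+1$; and if $\mu_k > b_{k+1}$ then $b_k = \mu_k$ when $\mu_k$ is attained by a unique leaf child and $b_k = \mu_k + 1$ otherwise. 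The crucial observation is that $b_k \ge b_{k+1}$ always, so $b_1 \ge b_2 \ge \cdots \ge b_N$ is non-increasing and bounded by $\mc O(\log |t|) = \mc O(|\bb A|)$; hence it strictly increases at most $\mc O(|\bb A|)$ times as we move towards the root.

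Exploiting this, I would compute the value by ``jumping'' from one increase to the next rather than visiting every spine node. The base case computes $b_N$ at the deepest internal node $u_N$ (its maximal position with $\rk > 0$, located via Lemma~\ref{lemma:slp_folklore}) from the maximum and multiplicity of its leaf children. Given the current spine node $u_i$ with value $b = b_i$, the next node $u_j$ $(j < i)$ where the value increases is the \emph{deepest proper ancestor of $u_i$ having a leaf child of value $\ge b$}. To find it, let $\Gamma_{\ge b}$ be the (at most $|\bb A|$ many) leaf symbols with value $\ge b$ and let $[u_i, e_i]$ be the subtree interval of $u_i$ (computable by Theorem~\ref{thm:navi}). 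Every leaf outside $[u_i, e_i]$ is a child of a proper ancestor, and along the left block $t[1 : u_i - 1]$ a larger leaf position corresponds to a deeper parent, while along the right block $t[e_i + 1 :]$ a smaller position corresponds to a deeper parent. Hence the last occurrence of $\Gamma_{\ge b}$ before $u_i$ and the first occurrence after $e_i$ (found via Lemma~\ref{lemma:slp_folklore} after cutting out the blocks) give two candidate leaves, and the parent (Theorem~\ref{thm:navi}) of whichever is deeper is $u_j$. I then recover the leaf-child intervals of $u_j$ as $[u_j + 1, u_{j+1} - 1] \cup [e_{j+1} + 1, e_j]$, where $u_{j+1}$ is the first $\rk > 0$ position after $u_j$, compute $\mu_j$ and its multiplicity on these intervals, and update $b$ to $\mu_j$ or $\mu_j + 1$ according to the recurrence. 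When no such ancestor exists, the value is constant up to the root and equals the answer.

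The hard part is precisely the tie detection hidden in the Strahler recurrence: unlike the height computation, the value at a node depends on whether the maximum is attained more than once, so the jump step must recover not only the deepest relevant ancestor but also the exact multiplicity of the maximal leaf child there. Making this recoverable from the SLP primitives is what forces the decomposition into the two monotone blocks together with the subtree-interval computation of Theorem~\ref{thm:navi}; once this is in place, each of the $\mc O(|\bb A|)$ jumps performs only polynomially many SLP operations, which yields the desired polynomial time bound.
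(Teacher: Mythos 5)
Your proposal is correct, and it shares the paper's high-level skeleton --- reduce to caterpillar trees via Theorem~\ref{thm:main-reduction}, then evaluate along the spine while touching only $\mc O(|\bb A|)$ nodes using Lemma~\ref{lemma:slp_folklore} and Theorem~\ref{thm:navi} --- but your mechanism for selecting which spine nodes to touch is genuinely different. The paper generalizes the height proof (Theorem~\ref{thm-height}) \emph{statically}: for each of the at most $|\bb A|$ distinct leaf values $d$ it precomputes the parents $i \ge j$ of the \emph{two} deepest leaves labelled $d$, argues that $t_j^{\mc I} \ge d+1$ so that every higher occurrence of $d$ is irrelevant (this is how the tie/multiplicity issue is absorbed), and then processes this fixed set of $\mc O(|\bb A|)$ key nodes bottom-up. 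You instead select nodes \emph{dynamically}: jump from the current spine node to the deepest proper ancestor having a leaf child of value $\ge b$, locating it via the last/first occurrence of $\Gamma_{\ge b}$ in the two monotone blocks, and bound the iteration count because the spine value strictly increases at each jump while staying in $\mc O(\log |t|) = \mc O(|\bb A|)$. Your version thus handles tie detection locally at each jump (maximum plus multiplicity over the recovered leaf-child intervals), where the paper pays for it up front with the second-deepest occurrence per symbol; both need only polynomially many SLP primitives, so the two arguments are of comparable difficulty, and yours has the mild advantage of being self-contained rather than leaning on ``the rest is similar to Theorem~\ref{thm-height}.'' Two small points you should tighten. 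First, your verification that $\mc I$ is polynomially bounded, namely $\abs{t^{\mc I}} \le \log_2 |t|$, is only valid when all leaves are $0$; the definition quantifies over trees with arbitrary integer leaves (exactly the trees the reduction produces), so you need instead $\abs{t^{\mc I}} \le \max_{i \in L} \abs{t[i]} + \log_2 |t| \le \beta \cdot |t| + \sum_{i \in L} \abs{t[i]}$, which still gives $\alpha = \beta = 1$. Second, the $\mc O(|\bb A|)$ bound on spine values used to count jumps should be justified by noting that every leaf value and every intermediate value arising in the Turing reduction is the Strahler number of a complete subtree of the original tree, hence bounded by $\log_2$ of the original tree's size; with that observation your jump count, and therefore the whole algorithm, is polynomial.
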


\begin{proof}
  Note that the interpretation $\mc I$ above is very similar to the one from the proof of Theorem~\ref{thm-height}. 
  The only difference is that the uniqueness of the maximum among the children of a node also affects the evaluation.
  Therefore the proof of Theorem~\ref{thm-height} must be slightly modified by considering 
  for each $d\in \mc \ns$ occurring in $t$ the two deepest leaves in $t$ labelled with $d$ (or the unique
  leaf labelled by $d$ if $d$ occurs exactly once).
  Let $i$ and $j$ be the parents of those two leaves $(i\ge j)$ and let $t_i$ 
  (resp., $t_j$) be the subtree of $t$ rooted at $i$ (resp., $j$). The nodes $i$ and $j$ can be computed
  in polynomial time as in the proof of Theorem~\ref{thm-height}.
  We have $t_i^{\mc I} \geq d$, and therefore $t_j^{\mc I}=d+1$. This implies that
  any further occurrence of $d$ that is higher up in the tree has no influence on the evaluation process.
  The rest of the argument is similar to the proof of Theorem~\ref{thm-height}.
\end{proof}
If the interpretation $\mc I$ is clear from the context, 
we also speak of the problem of {\em evaluating SLP-compressed $\mc F$-trees}.
In the following theorem the interpretation is given by the Boolean operations $\wedge$ and $\vee$ over $\{0,1\}$.

\begin{theorem}\label{theorem:eval_bool}
Evaluating SLP-compressed $\{\wedge,\vee,0,1\}$-trees can be done in polynomial time.
\end{theorem}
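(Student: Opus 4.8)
The plan is to apply Theorem~\ref{thm:main-reduction}. Since every value lies in $\{0,1\}$, the Boolean interpretation $\mc I$ is polynomially bounded, so it suffices to evaluate an SLP-compressed caterpillar tree $t\in\mcT$ over $\{\wedge,\vee,0,1\}$. I would identify the \emph{spine} $u_1<u_2<\dots<u_N$ of $t$ as the positions carrying a symbol of rank $>0$ (an occurrence of $\wedge$ or $\vee$), located with the counting and first-occurrence operations of Lemma~\ref{lemma:slp_folklore}. The evaluation is governed by a short-circuit principle: call $u_i$ \emph{forcing} if it is a $\vee$ with a $1$-labelled child or a $\wedge$ with a $0$-labelled child. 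A non-forcing $\vee$ (resp.\ $\wedge$) has all its leaf children equal to $0$ (resp.\ $1$), hence passes the value of its spine child through unchanged. Consequently $t^{\mc I}$ is the value forced by the topmost (smallest-index) forcing node, where $\vee$ forces $1$ and $\wedge$ forces $0$; if no node is forcing, then $t^{\mc I}=0$ when $u_N=\vee$ and $t^{\mc I}=1$ when $u_N=\wedge$.

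The real work is to locate the topmost forcing node in polynomial time, since $N$ may be exponential in $|\bb A|$. In a preorder traversal the leaf children of a spine node split into those preceding its spine child, which appear immediately after the node, and those following it, which appear far away, in reversed spine order, in a suffix of $t$. For the preceding children the forcing test is local: $u_i$ is forced through a preceding child exactly when $t$ has the factor $\vee 1$ or $\wedge 0$ starting at $u_i$ (for unbounded rank, a $\vee$, resp.\ $\wedge$, followed by a maximal block of constants containing a $1$, resp.\ a $0$). The first such position is obtained by compressed pattern matching, e.g.\ by the fixed rational transductions used in the proof of Theorem~\ref{theorem:BP} followed by a first-occurrence query.

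The following children are the main obstacle: a node and its following leaf child are far apart, and pairing each following child with the label of its owner is exactly a convolution, which Theorem~\ref{theorem:SmallestTravGrammar} shows can blow up exponentially, so I must avoid forming it. Instead I would use Lemma~\ref{lemma:slp_folklore}(4) to cut out the suffix of following children and reverse it, obtaining an SLP for the string $V$ that lists them in increasing owner index; a fixed rational transduction extracts the string $U$ of labels of precisely those spine nodes that own a following child (a $\wedge$ or $\vee$ immediately followed by another spine symbol), also in increasing index. Then $|U|=|V|$, and the first node forced through a following child is the first index where $U$ and $V$ \emph{agree} under the coding $\vee\,\widehat{=}\,1$, $\wedge\,\widehat{=}\,0$. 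Writing $\overline U$ for $U$ with $0$ and $1$ interchanged, this is the first position at which the SLP-compressed strings $\overline U$ and $V$ \emph{differ}, and the first mismatch of two SLP-compressed strings is computable in polynomial time \cite{LoCWP}.

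Finally I would map the index found for the following children back to a spine position (by a first-occurrence query on the marked spine nodes), take the minimum of it and the candidate from the preceding children to get the topmost forcing node, and read off $t^{\mc I}$ from its label; if neither search succeeds I return the value determined by $u_N$ as above. All steps are a bounded number of counting, first-occurrence, cut, pattern-matching, transduction and first-mismatch computations, each polynomial, so the procedure is polynomial. The step I expect to be delicate is exactly the following-children bookkeeping and recasting their forcing test as a single first-mismatch query, since the direct, convolution-based pairing is provably intractable.
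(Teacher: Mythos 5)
Your overall strategy is sound and, in substance, mirrors the paper's proof: reduce to caterpillar trees via Theorem~\ref{thm:main-reduction}, use the short-circuit (``forcing'') principle, detect forcing through preceding children by compressed pattern matching, and detect forcing through following children by a first-mismatch computation on two SLP-compressed binary strings. Your first-mismatch query on $\overline U$ and $V$ is exactly the paper's Step~2 in disguise: the paper does a binary search for the longest prefix of $\varphi(\val(\bb A_1))$ (with $\varphi(\wedge)=1$, $\varphi(\vee)=0$) agreeing with the reversed suffix of the leaf block, which is the same computation up to complementation. So the route is not genuinely different --- but there is a concrete gap at the deepest spine node $u_N$.

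Both children of $u_N$ are leaves. Your local test only inspects the factor of length two starting at a spine node, i.e., it sees only $u_N$'s \emph{first} child; and your following-children test excludes $u_N$ by construction, since $U$ consists of spine symbols \emph{immediately followed by another spine symbol} (it must, or else $|U|=|V|$ fails). Hence forcing through $u_N$'s second child is detected by neither search, and your fallback rule then answers incorrectly. Concretely, take $t=\vee\,0\,1$, i.e., the tree $\vee(0,1)$: there is no factor $\vee 1$, the strings $U$ and $V$ are empty, so your algorithm falls back to ``$u_N=\vee$, output $0$'', while $t^{\mc I}=1$; the same failure occurs with a nontrivial spine, e.g.\ $\wedge\,\vee\,0\,1\,1 = \wedge(\vee(0,1),1)$, which evaluates to $1$ but is answered $0$. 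Your parenthetical ``maximal block of constants'' variant does not repair this: the maximal constant block after $u_N$ is the \emph{entire} leaf suffix and contains following children of higher nodes, so it produces false positives --- on $\wedge\,\vee\,0\,0\,1 = \wedge(\vee(0,0),1)$ it would declare $u_N=\vee$ forcing with value $1$, while the true value is $0$. The repair is easy and stays within your framework: locate $u_N$ (the last occurrence of a symbol of positive rank, by Lemma~\ref{lemma:slp_folklore}), evaluate its constant-size subtree $u_N(d_1,d_2)$ directly and use that value as the fallback, restricting both searches to $u_1,\dots,u_{N-1}$; equivalently, pair $u_N$ with its \emph{second} child in the $U$-versus-$V$ comparison and keep the first child as the pass-through value --- this is precisely the pairing the paper's Step~2 performs, where $u_N$ is matched against the second symbol of the leaf block and the first symbol is returned when all comparisons agree.
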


\begin{proof}
Let $\bb A$ be an SLP over $\{\wedge, \vee, 0, 1\}$ such that $\val(\bb A)$ is a caterpillar tree. 
Define a {\em left caterpillar tree} to be a tree of the form
$uv$, where $u \in \{\wedge, \vee\}^*$, $v \in \{0,1\}^*$ and $|v| = |u|+1$. 
That means that the main branch of the caterpillar tree grows to the left.
The evaluation of $\val(\bb A)$ is done in two steps. 
In a first step, we compute in polynomial time from $\bb A$ a new SLP $\bb B$ such that
$\bb B$ is a left caterpillar tree and $\val(\bb A)^{\mc I} = \val(\bb B)^{\mc I}$. 
In a second step, we show how to evaluate a left caterpillar tree.
We can assume that $\val(\bb A)$ is neither $0$ or $1$.

\medskip
\noindent
{\em Step 1.} (See Figure~\ref{fig:bool} for an illustration of step 1.) Since $\val(\bb A)$ is a caterpillar tree, we have $\val(\bb A)=uv$ with 
$u \in \{ \wedge, \vee, \wedge 0, \wedge 1, \vee 0, \vee 1\}^* \cdot \{\wedge, \vee\}$, $v \in \{0,1\}^*$
and $|v|$ is $1$ plus the number of occurrences of the symbols $\wedge, \vee$ in $u$ that are not followed by
$0$ or $1$ in $u$.
We can compute bottom-up the length of the maximal suffix of $\val(\bb A)$ from $\{0,1\}^*$ in polynomial time. Hence, by 
Lemma~\ref{lemma:slp_folklore} we can compute in polynomial time SLPs $\bb A_1$ and $\bb A_2$ such that
$\val(\bb A_1) = u$ and $\val(\bb A_2) = v$.

We will show how to eliminate all occurrences of the patterns $\wedge 0, \wedge 1, \vee 0, \vee 1$.
For this, it is technically easier to replace every occurrence of $\circ a$ by a new symbol $\circ_a$,
where $\circ \in \{\wedge,\vee\}$ and $a \in \{0,1\}$. Let $\varphi : \{ \wedge, \vee, \wedge 0, \wedge 1, \vee 0, \vee 1\}^*
\to  \{ \wedge, \vee, \wedge_0, \wedge_1, \vee_0, \vee_1\}^*$ be the mapping that replaces every
occurrence of $\circ a$ by the new symbol $\circ_a$ ($\circ \in \{\wedge,\vee\}$, $a \in \{0,1\}$). 
This mapping is a rational transformation. 
Hence, using \cite[Theorem~1]{DBLP:conf/ifipTCS/BertoniCR08}, 
we can compute in polynomial time an SLP $\bb B_1$ for $\varphi(\val(\bb A_1))$. 
We now compute, using Lemma~\ref{lemma:slp_folklore},
the position $i$ in $\val(\bb B_1)$ of the first occurrence of a symbol from $\{ \wedge_0, \vee_1\}$.
Next, we compute an SLP $\bb C_1$ for the prefix $\val(\bb B_1)[:i-1]$, i.e., we cut off the suffix starting
in position $i$. Moreover, we compute the number $j$ of occurrences of symbols from $\{\wedge, \vee\}$ in
the suffix $\val(\bb B_1)[i:]$ and compute an SLP $\bb B_2$ for the string $0 \, \val(\bb A_2)[j+2:]$ in case
$\val(\bb B_1)[i] = \wedge_0$ and $1 \, \val(\bb A_2)[j+2:]$ in case
$\val(\bb B_1)[i] = \vee_1$. Then $\val(\bb A)$ evaluates to the same truth value as $\varphi^{-1}(\val(\bb C_1)) \, \val(\bb B_2)$.
The reason for this is that $\varphi^{-1}(\val(\bb B_1)[i:]) \, \val(\bb A_2)[:j+1]$ is a tree which 
evaluates to $0$ (resp., $1$)
if $\val(\bb B_1)[i] = \wedge_0$ (resp., $\val(\bb B_1)[i] = \vee_1$), because $0 \wedge x = 0$ (resp., $1 \vee x = 1$).

Note that $\varphi^{-1}(\val(\bb C_1)) \, \val(\bb B_2)$ is  a caterpillar tree, where 
$\val(\bb C_1) \in \{ \wedge, \vee, \wedge_1, \vee_0\}^*$ and $\val(\bb B_2) \in \{0,1\}^*$.
Since $1 \wedge x = x$ (resp., $0 \vee x = x$), we can delete in the string $\val(\bb C_1)$ all occurrences
of the symbols $\wedge_1$ and $\vee_0$ without changing the final truth value. Let $\bb D_1$ 
be an SLP for the resulting string, which is easy to compute from $\bb C_1$. 
Then $\val(\bb D_1) \, \val(\bb B_2)$ is indeed a left caterpillar tree.

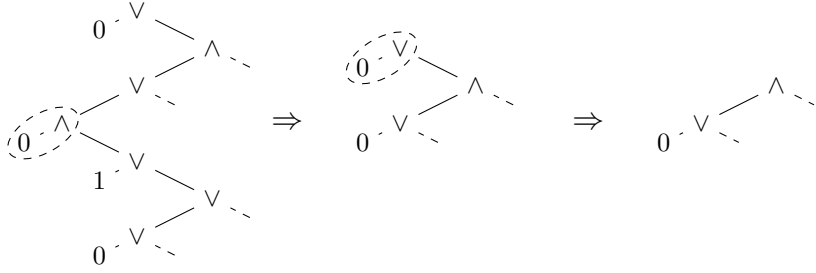
\begin{figure}[t] 
\centering

\begin{tikzpicture}[scale=1.0]

 \node (n1)  at (1.5,0.5){$\vee$};
 \node (n2)  at (2.5,1)  {$\vee$};
 \node (n3)  at (1.5,1.5){$\vee$};
 \node (n4)  at (0.5,2)  {$\wedge$};
 \node (n5)  at (1.5,2.5){$\vee$};
 \node (n6)  at (2.5,3)  {$\wedge$};
 \node (n7)  at (1.5,3.5){$\vee$};

 \draw (n1) -- (n2) -- (n3) -- (n4) -- (n5) 
    -- (n6) -- (n7);
    
 \node (a1)  at (1,0.25){$0$};
 \node (a3)  at (1,1.25){$1$};
 \node (a4)  at (0,1.75){$0$}; 
 \node (a7)  at (1,3.25){$0$};

  \foreach \a/\n in{
  {(a1)/(n1)},{(a3)/(n3)},{(a4)/(n4)},{(a7)/(n7)}}
 {
  \draw \a -- \n;
 };
 
  \draw[rotate around={-60:(0.25,1.875)}, dashed] (0.25,1.875) ellipse (0.26 and 0.52);
 
   \foreach \x/ \y/ \node in{
 {2 / 0.25/ (n1)},
 {3 / 0.75/ (n2)},
 {2 / 2.25/ (n5)},
 {3 / 2.75/ (n6)}}
 {
  \draw[dashed] (\x,\y) -- \node;
  };
  
  \node (to) at (3.5,2) {\large $\Rightarrow$};
  
 \node (a51)  at (4.5,1.75)  {$0$};
 \node (n51)  at (5,2){$\vee$};
 \node (n61)  at (6,2.5)  {$\wedge$};
 \node (n71)  at (5,3) {$\vee$};
 \node (a71)  at (4.5,2.75) {$0$};
 
 \draw (a51) -- (n51) 
    -- (n61) -- (n71) -- (a71);
   
 \foreach \x/ \y/ \node in{
  {5.5 / 1.75/ (n51)},
  {6.5 / 2.25/ (n61)}}
  {
   \draw[dashed] (\x,\y) -- \node;
  };
  
  \draw[rotate around={-60:(4.75,2.875)}, dashed] (4.75,2.875) ellipse (0.26 and 0.52);

  \node (to2) at (7.5,2) {\large $\Rightarrow$};
  
  \node (a62)  at (8.5,1.75) {$0$};
  \node (n52)  at (9,2)  {$\vee$};
  \node (n62)  at (10,2.5) {$\wedge$};
 
  \draw (a62) -- (n52) -- (n62) ;
  
  \draw[dashed] (n52) -- (9.5,1.75);
  \draw[dashed] (n62) -- (10.5,2.25);
  
 \end{tikzpicture}
 
  \caption{An example for step $1$ in the proof of Theorem~\ref{theorem:eval_bool}.
  In the first image we find the expression $\wedge 0$,
  hence we remove the remaining suffix.
  The expression $\vee 0$ can also be removed without changing the final truth value.
 }
 \label{fig:bool}
\end{figure}

\medskip
\noindent
{\em Step 2.} To evaluate a left caterpillar tree let $\bb A_1$ and $\bb A_2$ be two SLPs
where $\val(\bb A_1) \in \{\wedge,\vee\}^*$, $\val(\bb A_2) \in \{0,1\}^*$, and $|\val(\bb A_2)| = |\val(\bb A_1)|+1$.
Let $\varphi : \{ \wedge, \vee \}^* \to \{0,1\}^*$ be the homomorphism with $\varphi(\wedge) = 1$ and 
$\varphi(\vee) = 0$. Using binary search, we compute the largest position $i$ such that the reversed length-$i$ suffix
of $\val(\bb A_2)$ is equal to the length-$i$ prefix of $\varphi(\val(\bb A_1))$. If $i = |\val(\bb A_1)|$, then
 the value of $\val(\bb A_1) \, \val(\bb A_2)$ is the first symbol of  $\val(\bb A_2)$. Otherwise, the 
value of $\val(\bb A_1) \, \val(\bb A_2)$ is $0$ (resp., $1$) if $\val(\bb A_1)[i+1] = \wedge$ (resp., $\val(\bb A_1)[i+1] = \vee$).
\end{proof}

\begin{corollary}
If the interpretation $\mc I$ is such that $(\mc D, \wedge^{\mc I}, \vee^{\mc I})$ is a finite distributive lattice, then the $\mc I$-evaluation problem for SLP-compressed trees
can be solved in polynomial time.
\end{corollary}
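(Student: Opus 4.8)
The plan is to reduce the lattice-valued evaluation to a constant number of Boolean evaluations, one for each coordinate of an embedding of $\mc D$ into a power of the two-element lattice, and then to invoke Theorem~\ref{theorem:eval_bool}. The key structural fact is Birkhoff's representation theorem for finite distributive lattices: letting $P$ denote the poset of join-irreducible elements of $(\mc D, \wedge^{\mc I}, \vee^{\mc I})$, the lattice is isomorphic to the lattice $O(P)$ of down-sets of $P$ ordered by inclusion, and $O(P)$ is a sublattice of the Boolean lattice $2^{P} \cong \{0,1\}^k$ with $k = |P|$, in which meet and join are computed componentwise. Fixing such an embedding $\iota : \mc D \to \{0,1\}^k$ yields, for each coordinate $1 \le i \le k$, a projection $\pi_i : \mc D \to \{0,1\}$ with $\pi_i(d) = \iota(d)_i$, satisfying $\pi_i(d_1 \wedge^{\mc I} d_2) = \pi_i(d_1) \wedge \pi_i(d_2)$ and $\pi_i(d_1 \vee^{\mc I} d_2) = \pi_i(d_1) \vee \pi_i(d_2)$. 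Since $\mc I$ (and hence $\mc D$) is fixed and finite, $k$ is a constant and the value tables of $\iota$, $\iota^{-1}$, and each $\pi_i$ are part of the fixed problem description.

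Next, I would take the input SLP $\bb A$ for $t = \val(\bb A) \in \mcT$ and construct, for each coordinate $i$, an SLP $\bb A_i$ over $\{\wedge,\vee,0,1\}$ by a purely syntactic terminal relabelling: every occurrence of a constant $d \in \mc F_0 = \mc D$ in a right-hand side is replaced by the bit $\pi_i(d)$, while the binary symbols $\wedge$ and $\vee$ are kept unchanged. This homomorphism on terminals preserves tree structure, so $\bb A_i$ has the same size as $\bb A$, is computable in linear time, and satisfies $\val(\bb A_i) \in \mcT$ together with the identity $\val(\bb A_i)^{\mc I} = \pi_i(t^{\mc I})$. This identity follows by a straightforward induction on the tree: at a leaf $d$ it is immediate since $\pi_i$ fixes bits, and at an internal node labelled $\wedge$ or $\vee$ it is exactly the homomorphism property of $\pi_i$ recorded above.

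Finally, I would run the algorithm of Theorem~\ref{theorem:eval_bool} on each of the $k$ SLPs $\bb A_i$ to obtain the bits $b_i = \val(\bb A_i)^{\mc I} = \pi_i(t^{\mc I})$ in polynomial time. The tuple $(b_1,\dots,b_k)$ equals $\iota(t^{\mc I})$, and since $\iota$ is injective the value $t^{\mc I}$ can be read off from the constant-size inverse table of $\iota$. The whole procedure issues $k = \mc O(1)$ calls to the Boolean evaluation algorithm plus linear-time preprocessing, so the total running time is polynomial.

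The conceptual crux is the correctness of the coordinatewise embedding, namely that the meet and join of $\mc D$ really become the componentwise Boolean $\wedge$ and $\vee$ under $\iota$. This is precisely where distributivity enters and is the content of Birkhoff's theorem, since the down-sets of $P$ are closed under both union and intersection, which correspond to componentwise $\vee$ and $\wedge$ in $\{0,1\}^k$; a non-distributive lattice would not admit such an embedding, so this is exactly the hypothesis being used. Everything else is routine once this reduction is in place.
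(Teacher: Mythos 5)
Your proposal is correct and follows essentially the same route as the paper: the paper's proof likewise invokes Birkhoff's representation theorem to embed the finite distributive lattice into a finite power of $(\{0,1\},\wedge,\vee)$ and then reduces componentwise to Theorem~\ref{theorem:eval_bool}. Your write-up merely fills in the routine details (the terminal relabelling of the SLP, the induction establishing $\val(\bb A_i)^{\mc I} = \pi_i(t^{\mc I})$, and the recovery of the value via the inverse table) that the paper leaves implicit.
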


\begin{proof}
By Birkhoff's representation theorem, every finite distributive lattice is isomorphic to a lattice of finite sets, 
where the join (resp., meet) operation is set union (resp., intersection).
This lattice embeds into a finite power of $(\{0,1\}, \wedge, \vee)$. 
\end{proof}

\subsubsection{Difficult arithmetical evaluation problems}
\label{sec-arithmetic-eval}

Assume that $\mc I$ is the interpretation that assigns to the symbols $+$ and $\times$ their standard
meaning over the  integers. Note that this interpretation is not polynomially bounded.
For instance, for the tree $t_n = \times^n  (2)^{n+1}$ we have $t_n^{\mc I} = 2^{n+1}$.
Hence, if a tree $t$ is given by an SLP $\bb A$, then the number of bits of $t^{\mc I}$ can be exponential in the size 
of $\bb A$. Therefore, we cannot write down the number $t^{\mc I}$ in polynomial time. The same problem arises
already for numbers that are given by arithmetic circuits (circuits over $+$ and $\times$).

 In \cite{AllenderBKM09} it was shown that the problem of computing the $k^{\text{th}}$ bit 
($k$ is given in binary notation) of the number to which a given arithmetic circuit  evaluates to belongs
to the counting hierarchy.  An arithmetic circuit can be seen as a dag that unfolds to an expression tree.
Dags correspond to TSLPs where all nonterminals have rank $0$. Vice versa, it was shown in 
\cite{HuckeLN14arxiv} that a TSLP $\bb A$ over $+$ and $\times$ can be transformed in logspace
into an arithmetic circuit that evaluates to $\val(\bb A)^{\mc I}$. This transformation holds for any semiring.
Thus, over semirings, the evaluation problems for TSLPs and circuits (i.e., dags) have the same complexity.
In particular, the problem of computing the $k^{\text{th}}$ bit of the output value of a TSLP-represented
arithmetic expression belongs to the counting hierarchy. 
Here, we show that this result even holds for 
arithmetic expressions that are given by SLPs:

\begin{theorem} \label{thm-plus-mal}
The problem of computing for a given binary encoded number $k$ and an SLP $\bb A$ over $\{+, \times\} \cup \bb Z$ 
the $k^{\text{th}}$ bit of $\val(\bb A)^{\mc I}$ belongs to the counting hierarchy.
\end{theorem}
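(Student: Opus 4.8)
The plan is to realise $\val(\bb A)^{\mc I}$ as the output of an \emph{exponentially large but succinctly described} arithmetic circuit, and then to run the algorithm of \cite{AllenderBKM09} in a scaled form, exploiting the fact (recalled in the preliminaries) that $\mathsf{CH}$ is essentially an exponentially blown-up version of $\mathsf{DLOGTIME}$-uniform $\mathsf{TC}^0$. Let $m = |\bb A|$ and $t = \val(\bb A) \in \mcT$, a tree of size $N \in 2^{\mc O(m)}$ over $\{+,\times\} \cup \bb Z$. Every such tree is itself an arithmetic circuit $C_t$: its gates are the nodes $1, \dots, N$, an internal node $i$ carries the operation $t[i] \in \{+,\times\}$ applied to its two children, a leaf is an input gate carrying the integer $t[i]$, and the output gate is the root. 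Gate names are $\mc O(m)$-bit numbers, and the structure of $C_t$ is \emph{locally computable in polynomial time} from $\bb A$: the type of gate $i$ is the single symbol $t[i]$, readable in polynomial time by Lemma~\ref{lemma:slp_folklore}; its input gates are the children of node $i$, computable in time $\mc O(\log N) = \mc O(m)$ by Theorem~\ref{thm:navi}; and each leaf constant has an $\mc O(m)$-bit encoding. Finally $t^{\mc I}$ has at most $2^{\mc O(m)}$ bits (a product of $N$ numbers each of magnitude $2^{\mc O(m)}$), so a meaningful index $k$ satisfies $\log k = \mc O(m)$.

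First I would recall that \cite{AllenderBKM09} places bit extraction for arithmetic circuits in $\mathsf{CH}$, and that the underlying algorithm is of a purely \emph{local} nature: it computes a Chinese-remainder representation of the circuit value and extracts a bit, using only $\mathsf{DLOGTIME}$-uniform $\mathsf{TC}^0$ operations (iterated products, Chinese-remainder-to-binary conversion) that touch the circuit solely through gate-type and wire queries. For a polynomial-size circuit this is an ordinary $\mathsf{CH}$ algorithm. The key observation is that exactly the same computation, run on $C_t$ whose gate names are $\mc O(m)$-bit strings and whose local queries are answerable in time polynomial in $m$, is an instance of the scaling principle ``succinct $\mathsf{DLOGTIME}$-uniform $\mathsf{TC}^0 \subseteq \mathsf{CH}$'': replacing direct access to an input of length $2^{\mc O(m)}$ by a polynomial-time machine answering the corresponding local queries turns the $\mathsf{TC}^0$ computation into a $\mathsf{CH}$ computation measured in $m$. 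Since all queries for $C_t$ meet this bound, the $k$-th bit of $\val(\bb A)^{\mc I}$ is computable in $\mathsf{CH}$ (sign and bit-encoding conventions handled as in \cite{AllenderBKM09}). The main obstacle is to make this scaling step rigorous, i.e.\ to formalise ``succinctly represented circuit'' and to verify that the $\mathsf{CH}$-membership proof of \cite{AllenderBKM09} genuinely relativises so that polynomial-time local access (rather than $\mc O(\log N)$ direct access) still yields a $\mathsf{CH}$ bound in $m$; this requires re-reading that proof as a $\mathsf{DLOGTIME}$-uniform $\mathsf{TC}^0$ reduction querying the circuit only locally, and composing it with the query algorithms above.

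A more self-contained route avoids opening up \cite{AllenderBKM09}. To find the $k$-th bit of $N := t^{\mc I}$, I would use Chinese remaindering over all primes $p \le 2^{\mc O(m)}$ (there are $2^{\mc O(m)}$ of them, indexed by their value, each primality-checkable in polynomial time). For each fixed such $p$ the interpretation $\mc I_p$ of $+,\times$ modulo $p$ takes values in $\{0, \dots, p-1\}$ and is therefore polynomially bounded, so by Theorem~\ref{thm:main-reduction} computing $N \bmod p$ reduces to the caterpillar case; and an SLP-compressed caterpillar over $+,\times$ evaluates modulo $p$ in polynomial time, since its spine induces a composition of affine maps $v \mapsto \alpha v + \beta$ over $\bb Z / p \bb Z$ (a $\times$-node scales by its leaf child, a $+$-node translates by it), a product computable bottom-up over the nonterminals of the SLP. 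Converting the exponentially many residues to a single output bit is again a $\mathsf{DLOGTIME}$-uniform $\mathsf{TC}^0$ operation, scaled to $\mathsf{CH}$. The obstacle here is the same scaling principle, now combined with producing each prime and its residue on demand within $\mathsf{CH}$.
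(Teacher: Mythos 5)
Your high-level architecture (Chinese remaindering over exponentially many $\mc O(m)$-bit primes, then the Hesse--Allender--Barrington conversion circuit scaled from $\mathsf{DLOGTIME}$-uniform $\mathsf{TC}^0$ up to $\mathsf{CH}$) is exactly the paper's, but both of your routes break at the same point: establishing that the residues $\val(\bb A)^{\mc I} \bmod p$ are computable at some fixed level of $\mathsf{CH}$. In Route~1 you assert that the algorithm of \cite{AllenderBKM09} touches the circuit ``only through gate-type and wire queries, using only $\mathsf{DLOGTIME}$-uniform $\mathsf{TC}^0$ operations.'' That is not how that proof works: for a polynomial-size circuit the residues are obtained by ordinary gate-by-gate evaluation modulo $p$ \emph{in polynomial time}, and only the conversion from Chinese remainder representation to binary is a (scaled) $\mathsf{TC}^0$ computation whose input bits are these precomputed residues. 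The gate-by-gate step is precisely what fails to scale: for your succinct object $C_t$, which is an exponentially large \emph{formula}, evaluation modulo $p$ is $\mathsf{NC}^1$-hard already for $p=2$ (Boolean formula evaluation \cite{Bus87}), and the exponential blow-up of $\mathsf{NC}^1$ is alternating polynomial time, i.e.\ \PSPACE, not $\mathsf{CH}$. So generically ``relativizing'' \cite{AllenderBKM09} yields nothing better than the trivial \PSPACE bound; one must prove separately that the residue language lies in $\mathsf{CH}$, and that is where the entire substance of the paper's proof sits (its footnote makes exactly this point: for SLP-given trees, unlike circuits, the residue language is only shown to be in some level of $\mathsf{CH}$, not in \P).

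Route~2 correctly isolates the residue problem and correctly invokes Theorem~\ref{thm:main-reduction} (which indeed must be used in a uniform version, with arithmetic modulo $p$ as part of the input), but the claim that an SLP-compressed caterpillar over $\{+,\times\}\cup\bb Z$ evaluates modulo $p$ in polynomial time by ``composing affine maps bottom-up over the nonterminals'' does not hold up. The affine map of a spine node needs both its operator and its constant child, and in the preorder string these can be exponentially far apart: when the constant is a right child it appears at the mirrored position behind the entire lower spine, as in the comb trees or in the tree $\rho_A(\val(\bb A))\,(2^n)\,\rho_B(\mathrm{rev}(\val(\bb B)))$ from the $\mathsf{PP}$-hardness proof in Section~\ref{sec-arithmetic-eval}. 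Hence a nonterminal of the SLP typically derives a factor consisting only of operators, or only of constants; neither determines a single affine map, so there is no bounded-size summary to propagate bottom-up --- the operator--operand pairing is a convolution of two SLP-compressed strings, which is exactly the source of hardness in Theorem~\ref{theorem:bertoni}, Theorem~\ref{thm:max-plus}, and the neighbouring results. What is missing from your proposal is the paper's replacement for this step: it writes the caterpillar value as $\sum_{i=1}^{m+1}\prod_{j=1}^{m+1} x^t_{i,j}$ with entries $x^t_{i,j}\in\mathbb{Z}_p$ computable in polynomial time from $\bb A$ and binary-encoded $i,j$, so that the hard aggregation (iterated sums and products modulo $p$, which \emph{are} in uniform $\mathsf{TC}^0$ \cite{HeAlBa02}) can be absorbed into the scaled $\mathsf{TC}^0$-to-$\mathsf{CH}$ induction; the residues then land a few levels up in $\mathsf{CH}$, which is what the theorem needs.
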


\begin{proof}
We follow the strategy from \cite[proof of Thm.~4.1]{AllenderBKM09}. Let $\bb A$ be the input SLP for the tree $t$ and let
$N = {\mc I}(t)$.  Then $N \leq 2^{2^n}$ where $n = |\bb A|$ (this follows since the expression $t$ has size 
at most $2^n$ and the value computed by an expression of size $m$ is at most $2^m$). Let $P_n$ be the set of all prime numbers
in the range $[2, 2^{2n}]$ (note that $2^{2n} \ge \log^2 N$). Then $\prod_{p \in P_n}  p > N$. Also note that
each prime $p \in P_n$ has at most $2n$ bits in its binary representation.
We first show that the language
\begin{align*}
L = \{ (\bb A, p, j) \mid \; & \bb A \text{ is an SLP for a tree}, ~ n = |\bb A|, ~ p \in P_n, ~ 1 \leq j \leq 2n,  \\
& \text{the $j^{\text{th}}$ bit of $\val(\bb A)^{\mc I} \bmod p$ is $1$} \}
\end{align*}
belongs to the counting hierarchy. The rest of proof then follows the argument in \cite{AllenderBKM09}: 
Using the $\mathsf{DLOGTIME}$-uniform $\mathsf{TC}^0$-circuit family from~\cite{HeAlBa02}
for transforming a number from its Chinese remainder representation into its binary representation
one defines a $\mathsf{TC}^0$-circuit of size $2^{\mc O(n)}$ that has input gates 
$x(p, j)$ (where $n = |\bb A|$, $p \in P_n$,  $1 \leq j \leq 2n$). If we set $x(p, j)$ to true iff $(\bb A, p, j)  \in L$
(this means that the input gates $x(p,j)$ receive the Chinese remainder representation of $\val(\bb A)^{\mc I}$),
then the circuit outputs correctly the (exponentially many)
bits of the binary representation of $\val(\bb A)^{\mc I}$. Then, as in \cite[proof of Thm.~4.1]{AllenderBKM09},
one shows by induction on the depth of a gate that the problem whether a given gate of that circuit 
(the gate is specified by a bit string of length $\mc O(n)$)
evaluates to true is in the counting hierarchy, where the level in the counting hierarchy depends on the level 
of the gate in the circuit.\footnote{Let us explain the differences to \cite[proof of Thm.~4.1]{AllenderBKM09}:
In \cite{AllenderBKM09}, the arithmetic expression is given by a circuit instead of 
an SLP. This simplifies the proof, because if we replace in the above language $L$
the SLP $\bb A$ by a circuit, then we can decide the language $L$ in polynomial time
(we only have to evaluate a circuit modulo a prime number with polynomially many bits).
In our situation, we can only show that $L$ belongs to a certain level of the counting hierarchy.
But this suffices to prove the theorem, only the level in the counting hierarchy increases by the number
of levels in which the set $L$ sits.}

Hence we have to show that $L$ belongs to the counting hierarchy. 
Let $\bb A$ be an SLP for a tree $t$, $n = |\bb A|$, $p \in P_n$, and $1 \leq j \leq 2n$.
By Theorem~\ref{thm:main-reduction} it suffices to consider the case that $t$ is a caterpillar tree $t$; the polynomial time
Turing reduction in Theorem~\ref{thm:main-reduction} increases the level in the counting hierarchy 
by one. Also note that we use a uniform version of 
 Theorem~\ref{thm:main-reduction}, where the interpretation (addition and multiplication in $\mathbb{Z}_p$) is part of the input. This is not
 a problem, since the prime number $p$ has at most $2n$ bits, so all values that can appear only need $2n$ bits.
 
Let $m$ be the number of operators in $t$, i.e., the total number of occurrences of the symbols $+$ and $\times$ in
$\val(\bb A)$. Note that $m$ can be exponentially large in $|\bb A|$,
but its binary representation can be computed in polynomial time by Lemma~\ref{lemma:slp_folklore}
(point \ref{lemma:number-of-occ}).
We now define a matrix of numbers $x^t_{i,j} \in \mathbb{Z}_p$ ($i,j \in [1,m+1]$) such that
\[
t^{\mc I} = \sum_{i=1}^{m+1} \prod_{j=1}^{m+1} x^t_{i,j}.
\]
Moreover, we will show that given $\bb A$ and binary encoded numbers $i,j \in [1,m+1]$, the binary encoding
of $x^t_{i,j}$ (which consists of at most $2n$ bits) can be computed in polynomial time. 

We define the numbers $x^t_{i,j}$ inductively over the structure of the caterpillar tree $t$.
For the caterpillar tree $t = a$ (with $a \in \mathbb{Z}_p$) we set $x^t_{1,1}=a$.
Now assume that $t = f(a,s)$ or $t = f(s,a)$ for an operator $f \in \{+,\times\}$, a 
caterpillar tree $s$ with $m-1$ operators, and $a \in \mathbb{Z}_p$.
In the case $t = f(s,a)$ we assume that $m-1 \geq 1$; this avoids 
ambiguities in case $t = f(a,b)$ for $a,b \in \mathbb{Z}_p$.
Assume that the numbers $x_{i,j}^s$ are already defined for $i,j \in [1,m]$.
If $f = +$, then we set:
\begin{eqnarray*}
x^t_{1,1} &=& a \\
x^t_{1,i} & = & 1 \text{ for } i \in [2,m+1]\\
x^t_{i,1} & = & 1 \text{ for } i \in [2,m+1]\\
x^t_{i,j} & = & x^s_{i-1,j-1} \text{ for } i,j \in [2,m+1]
\end{eqnarray*}
We get
\[
 \sum_{i=1}^{m+1} \prod_{j=1}^{m+1} x^t_{i,j} 
    =  a + \sum_{i=2}^{m+1} \prod_{j=2}^{m+1} x^s_{i-1,j-1} 
    =  a + \sum_{i=1}^{m} \prod_{j=1}^{m} x^s_{i,j} 
    =  a + s^{\mc I} = t^{\mc I}.
\]
If $f = \times$, then we set:
\begin{eqnarray*}
x^t_{1,i} &=& 0 \text{ for } i \in [1,m+1] \\
x^t_{i,1} & = & a \text{ for } i \in [2,m+1]\\
x^t_{i,j} & = & x^s_{i-1,j-1} \text{ for } i,j \in [2,m+1]
\end{eqnarray*}
We get
\[
 \sum_{i=1}^{m+1} \prod_{j=1}^{m+1} x^t_{i,j}
   =  \sum_{i=2}^{m+1} a \cdot \prod_{j=2}^{m+1}  x^s_{i-1,j-1} 
   = a \cdot \sum_{i=1}^{m} \prod_{j=1}^{m} x^s_{i,j} 
   = a \cdot s^{\mc I} = t^{\mc I}.
\]
We now show that the binary encodings of the numbers $x^t_{i,j}$ can be computed in polynomial time (given $\bb A,i,j$).
For this let us introduce some notations: For our caterpillar tree $t = \val(\bb A)$ (which contains $m$ occurrences of operators)
and $i \in [1,m]$, $j \in [1,m+1]$ we define inductively $\text{op}(t,i) \in \{+,\times\}$ and $\text{operand}(t,j) \in \mathbb{Z}_p$ as follows:
\begin{itemize}
\item If $t = a \in \mathbb{Z}_p$, then let $\text{operand}(t,1) = a$ (note that in this case we have $m=0$, 
      hence the  $\text{op}(t,i)$ do not exist).
\item If $t = f(a,s)$ or ($t = f(s,a)$ and $m-1 \geq 1$) with $a \in \mathbb{Z}_p$, then we set
$\text{op}(t,1) = f$, $\text{op}(t,i) = \text{op}(s,i-1)$ for $i \in [2,m]$, $\text{operand}(t,1) = a$, and
$\text{operand}(t,j) = \text{operand}(s,j-1)$ for $j \in [2,m+1]$.
\end{itemize}
In other words: $\text{op}(t,i)$ is the $i^{\text{th}}$ operator in $t$, and $\text{operand}(t,j)$ is the unique argument from $\mathbb{Z}_p$
of the $j^{\text{th}}$ operator in $t$ (recall that $t$ is a caterpillar tree). 
The $m^{\text{th}}$ (and hence last) operator in $t$ has two arguments from $\mathbb{Z}_p$; its
left argument is $\text{operand}(t,m)$ and its right argument is $\text{operand}(t,m+1)$.
Using these notations, we can compute the numbers
$x_{i,j}^t$ by the following case distinction (correctness follows by a straightforward induction):
\begin{itemize}
\item $i < j$: If $\text{op}(t, i) = +$ then $x_{i,j}^t = 1$, else $x_{i,j}^t = 0$.
\item $i = j$: If $\text{op}(t, i) = +$ then $x_{i,j}^t = \text{operand}(t, j)$, else $x_{i,j}^t = 0$.
\item $i > j$: If $\text{op}(t, j) = +$ then $x_{i,j}^t = 1$, else $x_{i,j}^t = \text{operand}(t, j)$.
\end{itemize}
So, in order to compute the $x_{i,j}^t$ it suffices to compute $\text{op}(t, i)$ and $\text{operand}(t, j)$, given $\bb A, i,j$.
This is possible in polynomial time: 
The position $k$ of the $i^{\text{th}}$ operator in $t$ and  $\text{op}(t, i)$ can be computed in polynomial time using
point \ref{lemma:firstpos} of  Lemma~\ref{lemma:slp_folklore} (take $\Gamma = \{+, \times\}$). Once the position
$k$ is computed, $\text{operand}(t, i)$ can be computed in polynomial time using point (b) of Theorem~\ref{thm:navi}.

Recall that our goal is to compute a specific bit of $\val(\bb A)^{\mc I} \bmod p$, where $\bb A$ is an SLP that produces a caterpillar tree,
and $p \in [2,2^{2n}]$ is a prime, where $n = |\bb A|$. We have to show that this problem belongs to the counting hierarchy.
We have shown that 
\[
\val(\bb A)^{\mc I} = \sum_{i=1}^{m+1} \prod_{j=1}^{m+1} x^t_{i,j}.
\]
where the binary encoding of the number $x_{i,j}^t \in \mathbb{Z}_p$ can be computed in polynomial time, given $\bb A, i,j$.
We now follow again the arguments from \cite{AllenderBKM09}. It is known that the binary representation of a sum (resp., product) of $n$ many $n$-bit numbers 
can be computed in $\mathsf{DLOGTIME}$-uniform $\mathsf{TC}^0$ \cite{HeAlBa02}. The same holds for the problem of computing 
 a sum (resp., product) of $n$ many  numbers from $[0,p-1]$ modulo a given prime number $p$ with $\mc O(\log n)$ bits (it is actually much easier
 to argue that the latter problem is in $\mathsf{DLOGTIME}$-uniform $\mathsf{TC}^0$, see again \cite{HeAlBa02}).
Hence, there is a  $\mathsf{DLOGTIME}$-uniform $\mathsf{TC}^0$ circuit family $(C_m)_{m \geq 1}$, where the input of $C_m$ consists of bits
$x(i,j,k)$ ($i,j \in [1,m]$, $k \in \mc O(\log m)$) and a prime number $p$ with $\mc O(\log m)$ bits, 
such that the following holds: If $x(i,j,k)$ receives the $k^{\text{th}}$ bit of a number $x_{i,j} \in \mathbb{Z}_p$, then the 
circuit outputs $\sum_{i=1}^{m} \prod_{j=1}^{m} x_{i,j} \bmod p$. We take the circuit $C_{m+1}$, where $m \in 2^{\mc O(n)}$ (recall that
$n = |\bb A|$ and $m$ is the number of operators in $t = \val(\bb A)$). 
The input gate $x(i,j,k)$ receives the $k^{\text{th}}$ bit of the number $x^t_{i,j} \in  \mathbb{Z}_p$ defined above.
We have shown above that the bits of $x^t_{i,j}$ can be computed in polynomial time. This allows (again in the same way as in 
\cite[proof of Thm.~4.1]{AllenderBKM09}) to show that for a given gate number of $C_{m+1}$ one can compute the truth value of 
the corresponding gate within the counting hierarchy.
\end{proof}
Computing a certain bit of the output number of an arithmetic circuit 
belongs to $\mathsf{PH}^{\mathsf{PP}^{\mathsf{PP}^{\mathsf{PP}}}}$ \cite{DBLP:conf/mfcs/AllenderBD14}
(but  no matching lower bound is known).
In our situation, the level gets even higher, so we made no effort
to compute it.

We can use the technique from the  proof of Theorem~\ref{thm-plus-mal} to show the following related result. Note that a circuit (or dag) over
$\max$ and $+$ can be evaluated in polynomial time (simply by computing bottom-up the value of each gate), 
and by the reduction from \cite{HuckeLN14arxiv} the same holds for TSLP-compressed
expressions.

\begin{theorem}
The problem of evaluating SLP-compressed $(\{\max,+\} \cup \bb Z)$-trees over the integers belongs to the counting hierarchy.
\end{theorem}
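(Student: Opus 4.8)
The plan is to follow the proof of Theorem~\ref{thm-plus-mal} almost verbatim, carrying out the translation of the semiring $(\zs,+,\times)$ into the max-plus semiring $(\zs \cup \{-\infty\}, \max, +)$: the tree operator $\max$ now plays the role that $+$ played there (it is the ``additive'' operation, with identity $-\infty$), while the tree operator $+$ plays the role of $\times$ (the ``multiplicative'' operation, with identity $0$). The one genuine simplification is that the interpretation $\mc I$ is polynomially bounded: since $\abs{\max(a,b)} \le \abs a + \abs b$ and $\abs{a+b} \le \abs a + \abs b$, induction gives $\abs{t^{\mc I}} \le \sum_{i \in L} \abs{t[i]}$ for the set $L$ of leaves, so $\mc I$ is polynomially bounded with $\alpha = 1$, $\beta = 0$. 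Consequently $\val(\bb A)^{\mc I}$ has only polynomially many bits, and no Chinese remainder representation is needed --- this removes the entire CRT/$\mathsf{TC}^0$-reconstruction layer used in Theorem~\ref{thm-plus-mal}. By Theorem~\ref{thm:main-reduction} (whose polynomial-time Turing reduction costs one level in the counting hierarchy) I may assume that $t = \val(\bb A)$ is a caterpillar tree.

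Next I would build, by induction on the caterpillar tree $t$ with $m$ operators, entries $x^t_{i,j} \in \zs \cup \{-\infty\}$ ($i,j \in [1,m+1]$) such that
\[
t^{\mc I} = \max_{i=1}^{m+1} \; \sum_{j=1}^{m+1} x^t_{i,j} .
\]
The two node constructions are the mirror images of the $+$- and $\times$-constructions: for $t = \max(a,s)$ (respectively $t=\max(s,a)$) one sets $x^t_{1,1}=a$, $x^t_{1,i}=x^t_{i,1}=0$ for $i\ge 2$, and $x^t_{i,j}=x^s_{i-1,j-1}$ for $i,j\ge 2$, so that the first row contributes $a$ and the remaining rows contribute $s^{\mc I}$, giving $\max(a,s^{\mc I})$; for $t=f(a,s)$ with $f=+$ one sets $x^t_{1,i}=-\infty$ for all $i$, $x^t_{i,1}=a$ for $i\ge 2$, and $x^t_{i,j}=x^s_{i-1,j-1}$ for $i,j\ge 2$, so that $\max_{i\ge 2}\sum_j = a + s^{\mc I}$ while the all-$(-\infty)$ first row is discarded. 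As in Theorem~\ref{thm-plus-mal}, these entries admit a closed form in terms of $\mathrm{op}(t,i)$ and $\mathrm{operand}(t,j)$: for $i<j$ the entry is $0$ if $\mathrm{op}(t,i)=\max$ and $-\infty$ if $\mathrm{op}(t,i)=+$; for $i=j$ it is $\mathrm{operand}(t,j)$ if $\mathrm{op}(t,i)=\max$ and $-\infty$ otherwise; for $i>j$ it is $0$ if $\mathrm{op}(t,j)=\max$ and $\mathrm{operand}(t,j)$ otherwise. Exactly as before, $\mathrm{op}(t,i)$ and $\mathrm{operand}(t,j)$ are computable in polynomial time from $\bb A,i,j$ using Lemma~\ref{lemma:slp_folklore} and Theorem~\ref{thm:navi}, so every bit (and the $-\infty$-flag) of each $x^t_{i,j}$ is polynomial-time computable.

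It then remains to show that a given bit of $\max_i \sum_j x^t_{i,j}$ can be computed within the counting hierarchy, for which I would reuse the $\mathsf{TC}^0$-circuit machinery of \cite{AllenderBKM09,HeAlBa02}: iterated addition of exponentially many integers and the maximum of exponentially many integers both lie in $\mathsf{DLOGTIME}$-uniform $\mathsf{TC}^0$, so there is a circuit family $(C_{m+1})$ of size $2^{\mc O(n)}$ whose input gate $x(i,j,k)$ expects the $k^{\text{th}}$ bit of $x^t_{i,j}$ and which outputs the bits of $\max_i \sum_j x^t_{i,j}$, where rows carrying a $-\infty$-flag are masked out of the maximum (a finite row always exists, since $t^{\mc I}$ is a finite integer). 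Feeding in the polynomial-time-computable bits of the $x^t_{i,j}$ and evaluating the gates of this exponential-size circuit inductively by gate level, exactly as in Theorem~\ref{thm-plus-mal}, places the problem in the counting hierarchy; the reduction to caterpillar trees contributes one further level.

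I expect the only delicate point to be the same bookkeeping as in Theorem~\ref{thm-plus-mal}: verifying that the inductive, level-by-level evaluation of the $2^{\mc O(n)}$-size $\mathsf{TC}^0$ circuit keeps every intermediate decision inside the counting hierarchy, now with signed summands and with $-\infty$ sentinels handled by an extra flag bit in the max-selection. The matrix identity itself and the polynomial-time computability of the entries are routine by analogy, and the absence of a Chinese remainder step (thanks to polynomial boundedness) makes this direction strictly simpler than the $\{+,\times\}$ case.
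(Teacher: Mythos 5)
Your proposal is correct and follows essentially the same route as the paper's proof: exploit polynomial boundedness to skip the Chinese remainder step, reduce to caterpillar trees via Theorem~\ref{thm:main-reduction}, express the value as $\max_{1\le i\le m+1}\sum_{j=1}^{m+1}x^t_{i,j}$ with polynomial-time computable entries, and evaluate via the $\mathsf{DLOGTIME}$-uniform $\mathsf{TC}^0$ circuits for iterated sum and maximum as in Theorem~\ref{thm-plus-mal}. In fact you spell out details (the explicit semiring-translated matrix construction and the handling of $-\infty$ sentinels, which could alternatively be replaced by a sufficiently negative integer of polynomially many bits) that the paper's own proof leaves implicit by deferring to Theorem~\ref{thm-plus-mal}.
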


\begin{proof}
The proof follows the arguments from the proof of Theorem~\ref{thm-plus-mal}. 
But since the interpretation given by $\max$ and $+$ is polynomially bounded, 
every subtree of an SLP-compressed tree evaluates to an integer that needs only polynomially many bits with respect
to the size of the SLP. Hence we do not need the Chinese remainder theorem as in the proof of Theorem~\ref{thm-plus-mal} 
and can use Theorem~\ref{thm:main-reduction} directly. 
It remains to show that the problem of evaluating  SLP-compressed $(\{\max,+\} \cup \bb Z)$-caterpillar trees
belongs to the counting hierarchy. For this we follow the same strategy as in the proof of Theorem~\ref{thm-plus-mal} and define 
numbers $x^t_{i,j}$ (where $t = \val(\bb A)$ is the input caterpillar tree) such that
$$
\val(\bb A)^{\mc I} = \max_{1 \le i \le m+1} \sum_{j=1}^{m+1} x^t_{i,j}.
$$
Since the sum of $n$ many $n$-bit numbers as well as the maximum of $n$ many $n$-bit numbers can be computed
in $\mathsf{DLOGTIME}$-uniform $\mathsf{TC}^0$ (the maximum of $n$ many $n$-bit numbers can be even computed
in $\mathsf{DLOGTIME}$-uniform $\mathsf{AC}^0$), one can argue as in the proof of Theorem~\ref{thm-plus-mal}.
\end{proof}
Let us now turn to lower bounds for the problems of evaluating SLP-compressed arithmetic expressions (max-plus or plus-times). 
For a number $c \in \ns$ consider the unary operation $+_c$ on $\ns$ with $+_c(z) = z+c$.
The evaluation of SLP-compressed $(\{\max,+_c\} \cup \ns)$-trees is possible in polynomial time analogously 
to the proof of Theorem~\ref{thm-height}.
The following theorem shows that the general case of SLP-compressed $(\{\max,+\} \cup \bb N)$-trees is more complicated.

\begin{theorem}\label{thm:max-plus}
Evaluating SLP-compressed $(\{\max,+\} \cup \bb N)$-trees is $\# \P$-hard.
\end{theorem}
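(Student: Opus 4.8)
The plan is to reduce from the $\#\P$-complete problem \#SUBSET-SUM (given $a_1,\dots,a_k\in\ns$ and a target $t$, count the subsets $S\subseteq\{1,\dots,k\}$ with $\sum_{i\in S}a_i=t$). I would first reduce \#SUBSET-SUM to the following counting problem on compressed strings: given SLPs $\bb U$, $\bb V$ producing binary strings $U,V\in\{0,1\}^m$ of equal length, compute the inner product $\langle U,V\rangle=\sum_{i=1}^m U_iV_i$, that is, the number of positions at which both strings carry a $1$. The decision version (is there a position with $U_i=V_i=1$?) is $\NP$-complete by \cite[Theorem~3.13]{LoCWP}, and the underlying reduction from SUBSET-SUM is parsimonious, putting the coincident-$1$ positions in bijection with the solutions; hence computing $\langle U,V\rangle$ is $\#\P$-hard.

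Given $\bb U$ and $\bb V$, I would build a comb tree over $\{\max,+\}\cup\ns$ exactly as in Theorem~\ref{theorem:SmallestTravGrammar}, setting $f_0=\max$ and $f_1=+$ (both of rank $2$), using the bits $V_i$ as the right-hand leaves and $0$ as the bottom leaf $\$$. Its preorder traversal is $f_{U_1}\cdots f_{U_m}\,\$\,\mathrm{rev}(V)$, and from the SLPs for $U$ and $V$ this traversal has an SLP of size polynomial in $m$: the operator block is the image of $\bb U$ under a fixed rational transduction (use \cite[Theorem~1]{DBLP:conf/ifipTCS/BertoniCR08}), and $\mathrm{rev}(V)$ is obtained from $\bb V$. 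Evaluating the tree bottom-up amounts to running an accumulator $A$, initialised to $0$, that at the step paired with operator $f_{U_i}$ and leaf $V_i$ performs $A\leftarrow A+V_i$ when $U_i=1$ and $A\leftarrow\max(A,V_i)$ when $U_i=0$.

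The key claim is that the value of this tree equals $\langle U,V\rangle+\epsilon$ for some $\epsilon\in\{0,1\}$ that can be computed in polynomial time. Indeed, every addition step adds exactly $V_i$, so the additions contribute $\sum_{i:U_i=1}V_i=\langle U,V\rangle$; a $\max$-step never decreases $A$, and since its operand is at most $1$ it can only raise $A$ from $0$ to $1$, and only as long as no $1$ has yet entered $A$ (once $A\ge1$ every $\max$-step is a no-op, as its operand is $\le 1$). A short case distinction then shows that $\epsilon=1$ holds precisely when the first leaf equal to $1$ encountered in the bottom-up evaluation sits at a $\max$-position; this position and the corresponding bit of $U$ can be located in polynomial time by Lemma~\ref{lemma:slp_folklore}, so $\epsilon$ is computable in polynomial time. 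A single query for the value of the constructed SLP-compressed tree, followed by subtracting $\epsilon$, therefore returns $\langle U,V\rangle$ and hence the number of solutions of the SUBSET-SUM instance, which gives the desired polynomial-time Turing reduction.

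I expect the main obstacle to be the first step: producing, in polynomial size, two SLP-compressed $0/1$ strings whose coincident-$1$ positions biject with the solutions of the SUBSET-SUM instance. This is essentially the compressed-string construction behind the $\NP$-hardness of the coincidence problem, and what needs care is checking that it is parsimonious (and, if it is only weakly so, padding it into a genuinely $\#\P$-complete coincidence count). The max-plus gadget itself is then routine; its only subtlety is the bounded $\pm1$ interference of the $\max$-operations. This interference is also what separates the present hardness from the polynomially solvable variant with the unary operators $+_c$ discussed above: binary $+$ lets the comb accumulate an exponential number of independently compressed contributions that no TSLP can regroup, which is consistent with the fact that over TSLPs the max-plus value is computable in polynomial time.
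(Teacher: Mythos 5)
Your proposal is correct and is essentially the paper's own proof: the same comb-tree gadget over $\{\max,+\}$, with the operators taken from one SLP-compressed string and the right-hand leaves from the other, reducing from the problem of counting positions where two SLP-compressed strings both carry a $1$. Two differences are worth noting. First, the obstacle you flag as the main risk is a non-issue: the $\#\P$-completeness of counting occurrences of $(1,1)$ in the convolution $\val(\bb A)\otimes\val(\bb B)$ of two SLP-compressed strings is already known, and the paper simply cites \cite{DBLP:journals/iandc/Lohrey11} for it, so no parsimony analysis of the reduction behind \cite[Theorem 3.13]{LoCWP}, and no detour through \#SUBSET-SUM, is needed. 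Second, the paper sidesteps your $\epsilon$-correction by a small trick: it places the constant $1$, not $0$, at the bottom leaf $\$$ of the comb, i.e., it evaluates the tree $\rho(\val(\bb A))\,1\,\mathrm{rev}(\val(\bb B))$ with $\rho(0)=\max$ and $\rho(1)=+$. Then the accumulator is always at least $1$, so every $\max$-step is a no-op (its leaf operand is at most $1$), and the tree evaluates to exactly one plus the number of coincident $1$'s --- no case distinction on where the first $1$ sits and no extra queries to Lemma~\ref{lemma:slp_folklore}. Your analysis of the correction term is nevertheless sound ($\epsilon=1$ precisely when the last position carrying a $1$ in $\val(\bb B)$ is a $\max$-position, and that position and its operator bit are computable in polynomial time), so your variant also yields a valid polynomial-time Turing reduction; it is just slightly more laborious than necessary.
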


\begin{proof}
	Let $\bb A, \bb B$ be two SLPs over $\{0,1\}$ with $|\val(\bb A)| = |\val(\bb B)|$. 
	We will reduce from the problem of counting the number of occurrences of $(1,1)$ 
	in the convolution $\val(\bb A) \otimes \val(\bb B) \in (\{0,1\}^2)^*$, 
	which is known to be $\# \P$-complete by \cite{DBLP:journals/iandc/Lohrey11}.
	Let $\rho: \{0,1\}^* \to \{\max,+\}^*$ be the homomorphism defined by $\rho(0) = \max, ~\rho(1) = +$.
	One can compute in polynomial time from $\bb A$ and $\bb B$ an SLP for the tree $\rho(\val(\bb A)) \, 1 \, \mathrm{rev}(\val(\bb B))$.
	The corresponding tree over $\{\max,+,0,1\}$ evaluates to one plus the number of occurrences
	of $(1,1)$ in the convolution $\val(\bb A) \otimes \val(\bb B)$. 
\end{proof}
In \cite{AllenderBKM09} it was shown that the computation of a certain bit of the output value of an arithmetic circuit (over $+$ and $\times$) is 
$\#\P$-hard. Since a circuit can be seen as a TSLP (where all nonterminals have rank $0$), which can be transformed in polynomial time into
an SLP for the same tree \cite{BuLoMa07}, also the problem of computing a certain bit of $\val(\bb A)^{\mc I}$ for a given SLP $\bb A$ is 
$\#\P$-hard. For the related problem PosSLP of deciding, whether a given arithmetic circuit computes a positive number, no non-trivial lower
bound is known. For SLPs, the corresponding problem becomes $\mathsf{PP}$-hard:

\begin{theorem}
The problem of deciding whether $\val(\bb A)^{\mc I} \geq 0$ for a given SLP
$\bb A$ over $\{+, \times\} \cup \bb Z$ is $\mathsf{PP}$-hard.
\end{theorem}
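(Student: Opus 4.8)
The plan is to reduce from a $\mathsf{PP}$-complete problem built on the same combinatorial quantity that drove Theorem~\ref{thm:max-plus}. Recall from \cite{DBLP:journals/iandc/Lohrey11} that counting the occurrences of $(1,1)$ in the convolution $\val(\bb A)\otimes\val(\bb B)$ of two SLP-compressed binary strings of equal length $N$ is $\#\P$-complete. By the standard lift from a $\#\P$-complete counting function to its threshold decision version, the following problem is $\mathsf{PP}$-complete: given SLPs $\bb A,\bb B$ over $\{0,1\}$ with $|\val(\bb A)|=|\val(\bb B)|=N$ and a number $k$ in binary, decide whether $c\ge k$, where $c=\#\{\,i : \val(\bb A)[i]=\val(\bb B)[i]=1\,\}$ is the dot product of the two bit strings. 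Our goal is therefore to construct, in polynomial time, an SLP $\bb C$ over $\{+,\times\}\cup\bb Z$ with $\val(\bb C)\in\mcT$ such that $\val(\bb C)^{\mc I}\ge 0$ if and only if $c\ge k$.

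The key difficulty is that $c$ is exactly the dot product, and we cannot build an SLP for $\val(\bb A)\otimes\val(\bb B)$ itself (this is the very obstruction behind Theorems~\ref{theorem:SmallestTravGrammar} and~\ref{theorem:BP}). The max-plus reduction of Theorem~\ref{thm:max-plus} circumvented this because $\max$ saturates: a $\max$-node with a small right leaf acts as the identity on a large accumulator, so the operator sequence (from $\val(\bb A)$) and the leaf sequence (from $\mathrm{rev}(\val(\bb B))$) interact purely through the caterpillar geometry, never requiring the two bits at a common position. With $+$ and $\times$ no such saturating identity exists, so I would instead compute $c$ as a single \emph{digit} of a product of two integers. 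Choose a base $\Lambda$ with $\Lambda>N$ (note $\Lambda$ has only polynomially many bits and is a legal constant in $\bb Z$). Using Lemma~\ref{lemma:slp_folklore} together with the rational-transduction closure of \cite[Theorem~1]{DBLP:conf/ifipTCS/BertoniCR08}, one builds from $\bb A$ and $\bb B$ small SLPs for the preorder traversals of the Horner expressions evaluating to $A=\sum_{i=1}^{N}\val(\bb A)[i]\,\Lambda^{\,i-1}$ and $B'=\sum_{j=1}^{N}\val(\bb B)[j]\,\Lambda^{\,N-j}$ (each bit $a_i$ or $b_j$ is blown up into a fixed two-node gadget ``$+\,(\cdot)\,\times\,\Lambda$'' and the ends are trimmed with the cut operation). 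An SLP for $\val(\bb C_0)=A\cdot B'$ is then obtained by a single $\times$-root. Expanding the product, the base-$\Lambda$ coefficient at position $\ell$ equals the correlation $D_{\,\ell-(N-1)}=\sum_{i-j=\ell-(N-1)}a_i b_j$, each of which lies in $[0,N]\subset[0,\Lambda)$; hence $A\cdot B'$ is \emph{carry-free} in base $\Lambda$, and its digit at position $N-1$ is precisely $D_0=c$.

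The remaining, and genuinely hardest, step is to turn the statement ``digit $N-1$ of $\val(\bb C_0)^{\mc I}$ is $\ge k$'' into a single positivity query, since $+$ and $\times$ give us no division or modulo with which to excise that middle digit. Writing $P=A\cdot B'=H\Lambda^{N}+c\,\Lambda^{N-1}+L$ with $0\le L<\Lambda^{N-1}$ (carry-freeness), one checks that $P-H\Lambda^{N}-k\,\Lambda^{N-1}=(c-k)\Lambda^{N-1}+L$ is $\ge 0$ exactly when $c\ge k$; so it would suffice to subtract the high part $H\Lambda^{N}$. The obstacle is that $H$ collects the positive-lag correlations $D_s$ ($s\ge 1$), which cannot be produced by any \emph{separate} $\{+,\times\}$-expression (that would again amount to truncating a product, i.e.\ dividing). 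This is where the extra power of $\mc F_0=\bb Z$ must be exploited: unlike the $\bb N$-bounded setting of Theorem~\ref{thm:max-plus}, negative constants are available, and I expect the intended resolution to cancel the unwanted coefficients by combining $P$ with one or more tailored companion products (for instance evaluations in base $-\Lambda$, which flip the sign of the odd-lag coefficients) so that the surviving quantity is sign-equivalent to $c-k$, or by a ``sum-of-products'' normal form in the spirit of the matrix construction in the proof of Theorem~\ref{thm-plus-mal}. I would carry out the reduction in this order — fix the $\mathsf{PP}$-complete threshold source, build the carry-free product placing $c$ at a known digit, and only then engineer the cancellation that exposes the single comparison $c\ge k$ as a sign — and I regard this last cancellation step as the crux of the whole argument.
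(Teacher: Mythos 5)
Your proposal is built on a false premise, and that premise is exactly what steers you away from the (much simpler) working reduction. You write that ``with $+$ and $\times$ no such saturating identity exists'' --- but multiplication by $1$ \emph{is} an identity, and the paper's proof is nothing more than the max-plus reduction of Theorem~\ref{thm:max-plus} re-run with this identity in place of $\max$-saturation. Concretely: map $\val(\bb A)$ through $\rho_A(0)=+$, $\rho_A(1)=\times$ to get the operator spine, map $\mathrm{rev}(\val(\bb B))$ through $\rho_B(0)=1$, $\rho_B(1)=2$ to get the leaves, and seed the caterpillar with the constant $2^n$ where $2^n>2N$. Then a position with convolution symbol $(1,0)$ contributes $\times 1$ (a no-op), a position $(0,b)$ adds at most $2$, and a position $(1,1)$ doubles the accumulator. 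If $K$ is the number of $(1,1)$'s, the resulting value $R$ satisfies $2^n\cdot 2^K \le R \le (2^n+2N)\cdot 2^K < 2^{n+K+1}$, so $K \ge z$ if and only if $R - 2^{n+z}\ge 0$; and $-2^{n+z}$ is produced by a separate small subtree (e.g.\ a caterpillar of $n+z$ multiplications applied to $-1$ and copies of $2$, which has an SLP of size polynomial in the binary length of $z$). Attaching that subtree under a $+$-root finishes the reduction --- no digit extraction is ever needed, because the count $K$ is encoded in the \emph{order of magnitude} of $R$ rather than in a middle digit.

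Your alternative route, by contrast, has a genuine and (as far as I can see) unfixable gap exactly where you locate it. The carry-free product $P = A\cdot B' = H\Lambda^N + c\,\Lambda^{N-1} + L$ is fine, but to convert ``digit $N-1$ of $P$ is $\ge k$'' into a sign you must cancel $H\Lambda^N$, and $H$ aggregates input-dependent correlations $D_s$ ($s\ge 1$) that no separately constructed polynomial-size $\{+,\times\}$-expression can reproduce --- producing $H$ is itself a truncation/division, the very operation the signature lacks. The companion-product idea does not repair this: evaluating at $-\Lambda$ and combining kills only the odd-lag coefficients, leaving all even lags above the middle digit intact; and isolating one coefficient of a degree-$(2N-2)$ polynomial by interpolation requires $2N-1$ evaluation points, which is exponential in the input size here. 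In effect your plan reduces the $\mathsf{PP}$-hard threshold question to a BitSLP-style middle-digit query about an SLP-defined product, which is not easier than what you started with. So the proposal, as written, is an incomplete proof whose missing step is the crux, while the statement actually admits a short direct reduction once one notices that $\times 1$ plays the role you thought was unavailable.
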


\begin{proof}
	By \cite{DBLP:journals/iandc/Lohrey11}, the following problem is $\mathsf{PP}$-complete: 
	Given SLPs $\bb A, \bb B$ over $\{0,1\}$ where $|\val(\bb A)|=|\val(\bb B)|$, and a binary encoded
	number $z$, is the number of occurrences of $(1,1)$ in the convoluted string $\val(\bb A) \otimes \val(\bb B)$ at least $z$?
	We modify the proof of Theorem \ref{thm:max-plus}. 
	Let $\bb A, \bb B$ be SLPs over $\{0,1\}$, where $N = |\val(\bb A)|=|\val(\bb B)|$. 
	Pick $n \ge 0$ such that $2^n > 2N$. 
	Let $\rho_A: \{0,1\}^* \to \{+,\times\}^*$ be the homomorphism defined by $\rho_A(0) = +, ~\rho_A(1) = \times$
	and $\rho_B: \{0,1\}^* \to \{1,2\}^*$ 
	be the homomorphism defined by $\rho_B(0) = 1, ~\rho_B(1) = 2$.
	One can compute in polynomial time from $\bb A$ and $\bb B$ an SLP for the tree 
	$\rho_A(\val(\bb A))\, (2^n)\, \rho_B(\mathrm{rev}(\val(\bb B)))$ (here $2^n$ stands for an SLP that evaluates to $2^n$).
	Let $R$ be the value of the corresponding tree. Note that $R$ is calculated by starting with the value $2^n$ 
	and applying $N$ additions or multiplications by 1 or 2.
	The number $K$ of occurrences of $(1,1)$ in the convolution $\val(\bb A) \otimes \val(\bb B)$
	corresponds to the number of multiplications by 2 in the calculation, which can be computed from $R$:
	We have
	\[
	  2^n \cdot 2^K \le R \le (2^n + 2(N-K)) \cdot 2^K \leq  (2^n + 2N) \cdot 2^K
	\]
	since $R$ is maximal if $(N-K)$ additions of 2 are followed by $K$ multiplications by 2. Since $2N < 2^n$ we obtain 
	$2^{n+K} \le  R \leq 2^{n+K}  +r$ for some $r < 2^{n+K}$. Hence, $K \geq z$, if and only if $R - 2^{n+z}\geq 0$. 
	It is straightforward to compute an SLP which evaluates to $R - 2^{n+z}$.
\end{proof}

\subsubsection{Tree automata}

(Bottom-up) tree automata (see \cite{tata97} for details) can be seen as finite algebras: 
The domain of the algebra is the set of states, and the operations 
of the algebra correspond to the transitions of the automaton. This correspondence only holds for deterministic tree
automata. On the other hand every nondeterministic tree automaton can be transformed into a deterministic one using a powerset
construction. Formally, a {\em nondeterministic (bottom-up) tree automaton} $\mc A = (Q, \mc F, \Delta,F)$ consists of a finite set of 
{\em states} $Q$, a ranked alphabet $\mc F$, a set $\Delta$ of {\em transition rules} 
of the form $f(q_1, \dots, q_n) \to q$ where $f \in \mc F_n$ and $q_1, \dots, q_n, q \in Q$, 
and a set of {\em final states} $F \subseteq Q$. A tree $t \in \mc T(\mc F)$ is {\em accepted} by $\mc A$ if $t \stackrel{*}{\to}_\Delta q$ 
for some $q \in F$ where $\to_\Delta$ is the rewriting relation defined by $\Delta$ as usual.
The uniform membership problem for tree automata asks whether a given tree automaton $\mc A$ accepts a given tree $t \in \mc T(\mc F)$. 
In \cite{loh01} it was shown that this problem is complete for the class \LogCFL,
which is the closure of the context-free languages under 
logspace reductions. 
\LogCFL is contained in \P and {\sf DSPACE}($\log^2(n)$).  
For every fixed tree automaton, the membership problem belongs to $\mathsf{NC}^1$ \cite{loh01}.
If the input tree is given by a TSLP, the uniform membership problem becomes \P-complete \cite{LoMaSS12}.
For non-linear TSLPs (where a parameter may occur several times in a right-hand side) the uniform membership problem becomes \PSPACE-complete,
and \PSPACE-hardness holds already for a fixed tree automaton  \cite{DBLP:journals/tcs/LohreyM06}. 
The same complexity bound holds for SLP-compressed trees (which in contrast to non-linear TSLPs only allow exponential compression):

\begin{theorem} \label{thm-TA}
	Given a tree automaton $\mc A$ and an SLP $\bb A$ for a tree $t \in \mc T(\mc F)$, it is \PSPACE-complete to decide whether $\mc A$ accepts $t$.
	Moreover, \PSPACE-hardness already holds for a fixed tree automaton.
\end{theorem}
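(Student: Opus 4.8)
The plan is to view a (nondeterministic) bottom-up tree automaton $\mc A=(Q,\mc F,\Delta,F)$ as an interpretation over the \emph{finite} domain $\mc D=2^Q$, encoding each subset of $Q$ by an integer in $[0,2^{|Q|})$. First I would relabel, in the right-hand sides of $\bb A$, every constant $a\in\mc F_0$ by the integer coding $\{q\mid (a\to q)\in\Delta\}$; for $f\in\mc F_n$ with $n>0$ I set $f^{\mc I}(S_1,\dots,S_n)=\{q\mid \exists q_i\in S_i\colon (f(q_1,\dots,q_n)\to q)\in\Delta\}$, so that $\val(\bb A)^{\mc I}$ is exactly the set of states reachable at the root, and $\mc A$ accepts $t$ iff this set meets $F$. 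Since all values are subsets of $Q$, their binary encodings have only $|Q|$ bits, so a uniform version of Theorem~\ref{thm:main-reduction}, with the interpretation given as part of the input exactly as in the proof of Theorem~\ref{thm-plus-mal}, reduces the problem in polynomial time to evaluating $\mc I$ on SLP-compressed \emph{caterpillar} trees. For a caterpillar tree there is a single non-leaf branch: walking this spine from its deepest node up to the root I only ever have to remember the single subset propagated from below, updating it at each spine node by reading its (leaf) children via Lemma~\ref{lemma:slp_folklore} and Theorem~\ref{thm:navi} and applying $f^{\mc I}$. This walk uses only polynomial space (one subset plus a position counter), and although it may take exponentially many steps this is harmless for \PSPACE. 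Hence caterpillar evaluation is in \PSPACE, and since a polynomial-time Turing reduction to a \PSPACE\ oracle stays in $\P^{\PSPACE}=\PSPACE$, the whole problem is in \PSPACE.

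\textbf{Lower bound.} For \PSPACE-hardness with a \emph{fixed} automaton I would reduce from acceptance by a fixed linear-space (hence \PSPACE-complete) Turing machine $M$: on input $w$ of length $m$, $M$ has at most $2^{cm}$ configurations, each describable by $O(m)$ bits, and acceptance amounts to reachability of a canonical accepting configuration from the initial one. The idea is to unfold a Savitch-style recursion $\mathrm{REACH}(c,c',\ell)\equiv\exists d\,[\mathrm{REACH}(c,d,\ell-1)\wedge\mathrm{REACH}(d,c',\ell-1)]$ into a single tree: the existential choice of the $O(m)$-bit midpoint $d$ becomes a complete binary \emph{or}-tree of depth $O(m)$, each of whose leaves carries an \emph{and}-node with the two half-instances as children, and the base case $\ell=0$ is a gadget that checks a single transition step of $M$. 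The recursion depth in $\ell$ is $O(m)$, so the whole tree has depth $O(m^2)$ and size $2^{O(m^2)}=2^{\mathrm{poly}(m)}$; because its branching pattern is completely uniform across all recursion levels, it can be produced by an SLP of size polynomial in $m$. (One may instead take the alternation directly from a \textsc{Qbf}, turning the $\forall/\exists$ prefix into a balanced tree of \emph{and}/\emph{or} nodes over the $2^n$ assignments; the same compression issues arise.)

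\textbf{The automaton and the main obstacle.} A fixed bottom-up tree automaton $\mc A$ then evaluates the Boolean value of every \emph{and}/\emph{or} node and, at the base gadgets, checks the one-step transition relation of the fixed machine $M$; node labels encode whether a level is conjunctive or disjunctive. The genuine difficulty, and the part I expect to be hardest, is that $\mc A$ has only finitely many states yet must certify that the $O(m)$-bit configurations threaded through the tree are \emph{consistent}: the midpoint produced in the left half must equal the one consumed in the right half, the outer endpoints must be the fixed initial and accepting configurations, and at the base the two configurations must be related by one step of $M$. A finite-state device cannot hold a whole configuration, so these comparisons have to be carried out bit by bit. I would lay out the configuration bits positionally so that the bits to be compared always meet locally in the tree, allowing $\mc A$ to pass a single ``consistent-so-far'' flag upward; the real work is to choose this layout so that (i)~a fixed finite bottom-up automaton suffices and (ii)~the resulting tree still admits a polynomial-size SLP. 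This is exactly the technically delicate point in the corresponding construction for non-linear TSLPs~\cite{DBLP:journals/tcs/LohreyM06}, and it can be adapted here because SLP-compressed traversals already provide trees of the required (single-)exponential size. Together with the upper bound this yields \PSPACE-completeness.
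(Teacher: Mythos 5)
Your \textbf{upper bound} is correct, and it takes a genuinely different route from the paper's. You turn the nondeterministic automaton into an interpretation over the finite domain $2^Q$ (powerset computed on the fly), invoke the uniform version of Theorem~\ref{thm:main-reduction} exactly as in the proof of Theorem~\ref{thm-plus-mal}, and evaluate SLP-compressed caterpillar trees by a spine walk that keeps only one subset of $Q$ and a position counter; since all values fit in $|Q|$ bits, each application of $f^{\mc I}$ is polynomial time in the size of $\Delta$, and a polynomial-time Turing reduction relative to a \PSPACE oracle stays inside \PSPACE, this works (the exponentially many steps of the walk are indeed harmless). The paper argues differently: it composes a \PSPACE-transducer that prints $t = \val(\bb A)$ symbol by symbol with the {\sf DSPACE}$(\log^2 n)$ algorithm for uniform tree-automata membership, using the closure lemma of \cite{DBLP:journals/iandc/LohreyM13}. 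The paper's argument is shorter and bypasses the caterpillar machinery entirely; yours is more self-contained but leans on Theorem~\ref{thm:main-reduction}.

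Your \textbf{lower bound} has a genuine gap, and it sits precisely where you locate the ``real work''. The layout problem you defer --- arranging the $O(m)$-bit configurations so that a \emph{fixed} finite bottom-up automaton can check consistency \emph{and} the tree still has a polynomial-size SLP --- is not an adaptable detail; it is the crux, and the Savitch/QBF unfolding you propose provably cannot resolve it. A string produced by an SLP of size $s$ has at most $\mc O(s \cdot k)$ distinct factors of length $k$: every occurrence of a factor of length at least $2$ crosses the split point of some rule $A \to BC$, and the factor is determined by that rule together with the offset of the crossing. In your recursion tree the existential branching ranges over \emph{all} $2^{\Omega(m)}$ midpoint configurations $d$, and each such $d$ must be spelled out in some base gadget (otherwise the one-step transition check has nothing to read); hence the preorder traversal contains $2^{\Omega(m)}$ pairwise distinct factors of length $O(m)$ and admits no SLP of size polynomial in $m$. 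The same count kills the QBF variant, where the $2^n$ substituted matrices are pairwise distinct factors. If, to evade this, you do not spell the configurations out locally, the fixed automaton can no longer compare anything bit by bit. The appeal to \cite{DBLP:journals/tcs/LohreyM06} does not rescue the construction: that \PSPACE-hardness proof rests on non-linear parameters, i.e.\ the ability to copy one subtree to exponentially many positions at no cost, a mechanism string SLPs simply lack.

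For contrast, the paper sidesteps this entire difficulty by importing hardness: by \cite{DBLP:journals/iandc/Lohrey11} there is a \emph{fixed} regular language $L \subseteq (\{0,1\}^2)^*$ for which deciding $\val(\bb A) \otimes \val(\bb B) \in L$ for given SLPs $\bb A, \bb B$ is \PSPACE-complete. From $\bb A, \bb B$ one computes in polynomial time an SLP for the comb tree $t(\mathrm{rev}(\val(\bb A)), \mathrm{rev}(\val(\bb B)))$, and a fixed tree automaton obtained from the word automaton for $L$ accepts this comb tree if and only if $\val(\bb A) \otimes \val(\bb B) \in L$. The deep combinatorial content is thus encapsulated in the cited result, which is exactly what your sketch would have to reprove from scratch --- and cannot, by the route you chose.
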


\begin{proof}
	For the upper bound we use the following lemma from \cite{DBLP:journals/iandc/LohreyM13}: 
	If a function $f: \Sigma^* \to \Gamma^*$ is \PSPACE-computable and $L \subseteq \Gamma^*$ belongs to {\sf NSPACE}$(\log^k(n))$ 
	for some constant $k$, then $f^{-1}(L)$ belongs to \PSPACE. 
	Given an SLP $\bb A$ for the tree $t =\val(\bb A)$, one  
	can compute the tree $t$ by a \PSPACE-transducer by computing the symbol $t[i]$ 
	for every position $i \in \{1, \dots, |t|\}$. 
	The current position can be stored in polynomial space and every query can be performed in polynomial time. 
	As remarked above the uniform membership problem for explicitly given trees can be solved in {\sf DSPACE}$(\log^2(n))$.
	
	For the lower bound we use a fixed regular language $L \subseteq (\{0,1\}^2)^*$ from \cite{DBLP:journals/iandc/Lohrey11} 
	such that the following problem is \PSPACE-complete: 
	Given two SLPs $\bb A$ and $\bb B$ over $\{0,1\}$ with $|\val(\bb A)| = |\val(\bb B)|$, 
	is $\val(\bb A) \otimes \val(\bb B) \in L$?
	
	Let $\mc A = (Q,\{0,1\}^2,\Delta,q_0,F)$ be a finite word automaton for $L$. 
	Let $\bb A, \bb B$ be two SLPs over $\{0,1\}$ with $|\val(\bb A)| = |\val(\bb B)|$ 
	and let $\bb T$ be an SLP for the comb tree $t(u,v)$ where $u = \mathrm{rev}(\val(\bb A))$ 
	and $v = \mathrm{rev}(\val(\bb B))$. We transform $\mc A$ into a tree automaton $\mc A_T$ over $\{f_0,f_1,0,1,\$\}$
	with the state set $Q  \uplus \{p_0,p_1\}$, the set of final states $F$ and the following transitions:
	\begin{align*}
		\$ &\to q_0, \\
		i &\to p_i, \quad \text{for } i \in \{0,1\}, \\
		f_i(q,p_j) &\to q', \quad \text{for } (q,(i,j),q') \in \Delta
	\end{align*}
	The automaton $\mc A$ accepts the convolution $\val(\bb A) \otimes \val(\bb B)$ if and only if the tree automaton $\mc A_T$ accepts $t(u,v)$.
\end{proof}
The \PSPACE-hardness result in Theorem~\ref{thm-TA} can also be interpreted as follows: 
There exists a fixed finite algebra for which the evaluation 
problem for  SLP-compressed trees is $\PSPACE$-complete. 
This is a bit surprising if we compare the situation with dags or TSLP-compressed trees. For these,
membership for tree  automata is still doable in polynomial time \cite{LoMaSS12}, 
whereas the evaluation problem of arithmetic expressions (in the sense of computing a certain bit of the 
output number) belongs to the counting hierarchy and is $\# \P$-hard. 
In contrast, for SLP-compressed trees, the evaluation problem for  finite algebras (i.e., tree automata)
is harder than the evaluation problem for arithmetic expressions ($\PSPACE$ versus the counting hierarchy).

\section{Further research}

We conjecture that in practice, grammar-based tree compression based on  SLPs leads to faster compression
and better compression ratios compared to grammar-based tree compression based on TSLPs, and we plan to substantiate this
conjecture with experiments on real tree data. The theoretical results from Section~\ref{section:slpvstslp} indicate that
SLPs may achieve better compression ratios than TSLPs. 
Moreover, grammar-based string compression can be implemented without pointer structures, whereas
all grammar-based tree compressors (that construct TSLPs) we are aware of  
work with pointer structures for trees, and a string-encoded tree (e.g. an XML document) must be first transformed into a pointer
structure. Moreover, we believe that SLPs can be encoded more succinctly
than TSLPs (for instance, we do not have to store the ranks of nonterminals).

\bibliographystyle{plain}


\end{document}